\newcommand\numberthis{\addtocounter{equation}{1}\tag{\theequation}}
\begin{document}
\sloppy
\allowdisplaybreaks[1]

\newtheorem{thm}{Theorem} 
\newtheorem{lem}{Lemma}
\newtheorem{prop}{Proposition}
\newtheorem{cor}{Corollary}
\newtheorem{defn}{Definition}
\newcommand{\remarkend}{\IEEEQEDopen}
\newtheorem{remark}{Remark}
\newtheorem{rem}{Remark}
\newtheorem{ex}{Example}
\newtheorem{pro}{Property}

\newenvironment{example}[1][Example]{\begin{trivlist}
\item[\hskip \labelsep {\bfseries #1}]}{\end{trivlist}}
  
\renewcommand{\qedsymbol}{ \begin{tiny}$\blacksquare$ \end{tiny} }

\renewcommand{\leq}{\leqslant}
\renewcommand{\geq}{\geqslant}

\title {Function Computation Without Secure Links: Information and Leakage Rates}

\author{ R\'emi A. Chou, Joerg Kliewer \thanks{R. Chou is with the Department of Computer Science and Engineering, The University of Texas at Arlington, Arlington, TX. J. Kliewer is with the Department of Electrical and Computer
Engineering, New Jersey Institute of Technology, Newark, NJ.  This work was supported in part by NSF grants CCF-2201824 and CCF-2425371. E-mails: remi.chou@uta.edu, jkliewer@njit.edu. A preliminary version of this work has been presented at the 2022 IEEE International Symposium on
Information Theory (ISIT)~\cite{chou2022function}.}} 

\maketitle
\begin{abstract}
Consider $L$ users, who each hold private   data, and one fusion center who must compute a function of the private data of the $L$ users. To accomplish this task, each user may utilize a public and noiseless broadcast channel in a non-interactive manner. In this setting, and in the absence of any additional resources such as secure links, we study the optimal communication rates and minimum information leakages on the private user data that are achievable. Specifically, we study the information leakage of the user data at the fusion center (beyond the knowledge of the function output), as well as at predefined groups of colluding users who  eavesdrop one another. We derive the capacity region when the user data is independent, and inner and outer regions for the capacity region when the  user data is correlated.
\end{abstract} 

 \section{Introduction}

In this paper, we consider a function computation setting where the users do not have access to secure links to communicate among themselves but only to a public and noiseless broadcast channel. This setting contrasts with traditional information-theoretically secure multiparty computation settings, e.g.,~\cite{cramer2015secure} and references therein, where each pair of users has unlimited access to an information-theoretically secure communication link. Without this assumption, and in the absence of additional resources, perfect information-theoretic security of user data is impossible to obtain for the computation of arbitrary functions. In this context, our goal is to (i)~determine the level of privacy that is attainable, i.e., quantify the  minimum information leakage on the private user data that is achievable, and (ii) determine optimal communication rates for the computation of arbitrary functions.

In our setting, we consider $L$ users who each hold private data, and one fusion center who must compute a function of their data. In this function computation setting, the inputs of the function are sequences of
independent and identically distributed random variables. Each user can send one message, i.e., an encoded version of their data, over the public and noiseless broadcast channel. We are then interested in  characterizing the optimal communication rates, as well as the minimum information leakage on the private user data that is achievable. We distinguish two types of information leakage. The first one is the amount of information that the fusion center can learn from the public communication about the user data, beyond the knowledge of the function output. The second one is the amount of information that a group of colluding users can learn from the public communication about the data of all the other users. 
The private data of a given user is modeled by a sequence of independent and identically distributed random variables. We derive a capacity result when the data of the users is independent, and inner and outer regions for the capacity region when the data is correlated.

To sum up, our paper studies the trade-off between communication complexity and privacy leakage in a function computation setting. This departs from previous works that consider similar models, e.g.,~\cite{orlitsky2001coding,sefidgaran2016distributed}, but in the 
the absence of any privacy or security constraints.  We note that more advanced settings for function computation without security or privacy constraints have also considered interactive communication, e.g.,~\cite{ma2011some,ma2012interactive,ma2013infinite}.  Because no secure links are available in our model and a controlled privacy leakage is allowed, it also contrasts with other works, e.g., \cite{data2016communication,lee2014two}, that have considered settings related to secure multiparty computation   but with the additional assumptions that any pair of users can interactively communicate over secure noiseless links, and the additional requirement that no user data leakage is allowed.  Specifically,~\cite{lee2014two} studies the minimum amount of randomness needed at the users to perform secure addition, and \cite{data2016communication} studies optimal communication rates for function computation among three users -- full characterization of such optimal communication rates have been established for the computation of a few functions but  remains an open problem in general. While \cite{data2016communication,lee2014two} consider settings that do not permit data leakage, we note that \cite{chou2024private} and \cite{morteza2024distributed} allow controlled leakage, specifically for private addition in \cite{chou2024private} and private matrix multiplication in \cite{morteza2024distributed}. Another line of work has focused on function computation models when no user data leakage is allowed but in the absence of secure links, e.g., \cite{data2020interactive,tyagi2011function}. However, in such settings not all functions can be computed. 
Finally, the closest setting to our model in this paper is function computation with privacy constraints in \cite{tu2019function}, which also considers that the inputs of the function are sequences of independent and identically distributed random variables, and communication links among users are noiseless and unsecured.   The main differences between the model in \cite{tu2019function} and our model is that \cite{tu2019function} focuses on the computation of functions with three inputs and  considers a single external eavesdropper, whereas our model considers computation of functions with an arbitrary number of inputs and groups of colluding users who eavesdrop one another. To handle the achievability in  our multiuser and multi-eavesdropper setting, we heavily rely on the polymatroid structure of our achievability region and focus on a random binning approach. Our choice of a random binning approach contrasts with the approach in \cite{tu2019function}, which solely relies on typicality analysis and makes the analysis of leakage constraints more involved. In contrast, random binning relies on distribution approximation, which allows precise control over the distribution induced by the coding scheme and thereby facilitates the analysis of the leakage constraints, as it will be seen in the proofs. 

We also note that the line of research on secure rate-distortion, which examines trade-offs among the level of equivocation at an eavesdropper, the encoder output rate, and the distortion between the reconstructed and original source sequences, is closely related to our work. Foundational results in this area are presented in \cite{Yamamoto1983}. Scenarios involving a shared secret key are studied in \cite{Yamamoto1997,Merhav2008,Schieler2014}. Multiuser settings with multiple encoders are explored in \cite{Villard2013,Naghibi2015}, and the use of access structures is considered in \cite{zivarifard2024secure}. 
While our work shares the emphasis on trade-offs between privacy, encoder rate, and distortion, it differs in that the goal is to reconstruct a function of the source, rather than the source itself.

The remainder of the paper is organized as follows. We formalize the studied setting in Section~\ref{secps}. We present our main results in Section~\ref{secmr}. We propose in Section~\ref{secv} a variant of the model of Section~\ref{secps}. We present the proof of our main achievability result in Section~\ref{Secthach2} and relegate the other proofs to the appendix. Finally, we provide concluding remarks in Section~\ref{secconcl}.

\section{Problem Statement} 
In this paper, we consider two related but distinct multiuser function computation models as described in Sections \ref{secps} and~\ref{secv}.

Notation:  For $a,b \in \mathbb{R}$, define $\llbracket a,b \rrbracket \triangleq [\lfloor a \rfloor , \lceil b \rceil ] \cap \mathbb{N}$ and $[a]^+\triangleq\max(0,a)$. Unless specified otherwise, random variables and their realizations are denoted by uppercase and corresponding lowercase letters, respectively, e.g., $x$ is a realization of the random variable $X$. For two distributions $p$ and $q$ defined over the set $\mathcal{X}$, the variational distance between $p$ and $q$ is $\mathbb{V}(p,q) \triangleq \sum_{x\in\mathcal{X}}\vert p(x)-q(x) \vert $. For a set $\mathcal{S}$, $2^{\mathcal{S}}$ denotes the power set of $\mathcal{S}$.

\subsection{Model 1: Fusion center model} \label{secps}
Consider $L$ users indexed in $\mathcal{L} \triangleq \llbracket 1, L \rrbracket$. Consider $L$ finite alphabets $(\mathcal{X}_l)_{l \in \mathcal{L}}$ and a probability distribution $p_{X_{\mathcal{L}}}$ defined over $\mathcal{X}_{\mathcal{L}} \triangleq \bigtimes_{l\in \mathcal{L}} \mathcal{X}_l$ with the notation $X_{\mathcal{L}} \triangleq (X_l)_{l\in\mathcal{L}}$.  Consider $X^n_{\mathcal{L}}$ distributed according to $\prod_{i=1}^n p_{X_{\mathcal{L}}}$ and the notation $X^n_{\mathcal{L}} \triangleq (X_{\mathcal{L},i})_{i\in \llbracket 1, n \rrbracket}$. Assume that $X_l^n$ corresponds to an input available at User~$l\in\mathcal{L}$. For a function $f : \mathcal{X}_{\mathcal{L}} \to \mathcal{F}$, define $F \triangleq f( X_{\mathcal{L}} )$, $F^n \triangleq (f( X_{\mathcal{L},i} ))_{i \in \llbracket 1, n \rrbracket}$, and assume that this function needs to be computed at a fusion center. Assume that there is a public and noiseless broadcast channel from the users to the fusion center. Finally, let $\mathbb{A}$ be an arbitrary and fixed  set of non-empty, possibly overlapping, subsets of users, such that any set of colluding users $\mathcal{A} \in \mathbb{A}$ is  interested in learning the inputs of the other users in $\mathcal{A}^c$ from the public communication. For instance, if $L=3$ and $\mathbb{A} \triangleq \{ \{1,2\}, \{2,3\} \}$, then Users $1$ and $2$ (resp. $2$ and $3$) could be interested in colluding to learn information about the private data of User $3$ (resp. $1$). The setting is depicted in Figure~\ref{fig0}.
\begin{figure}[h]
\centering
  \includegraphics[width=7.5 cm]{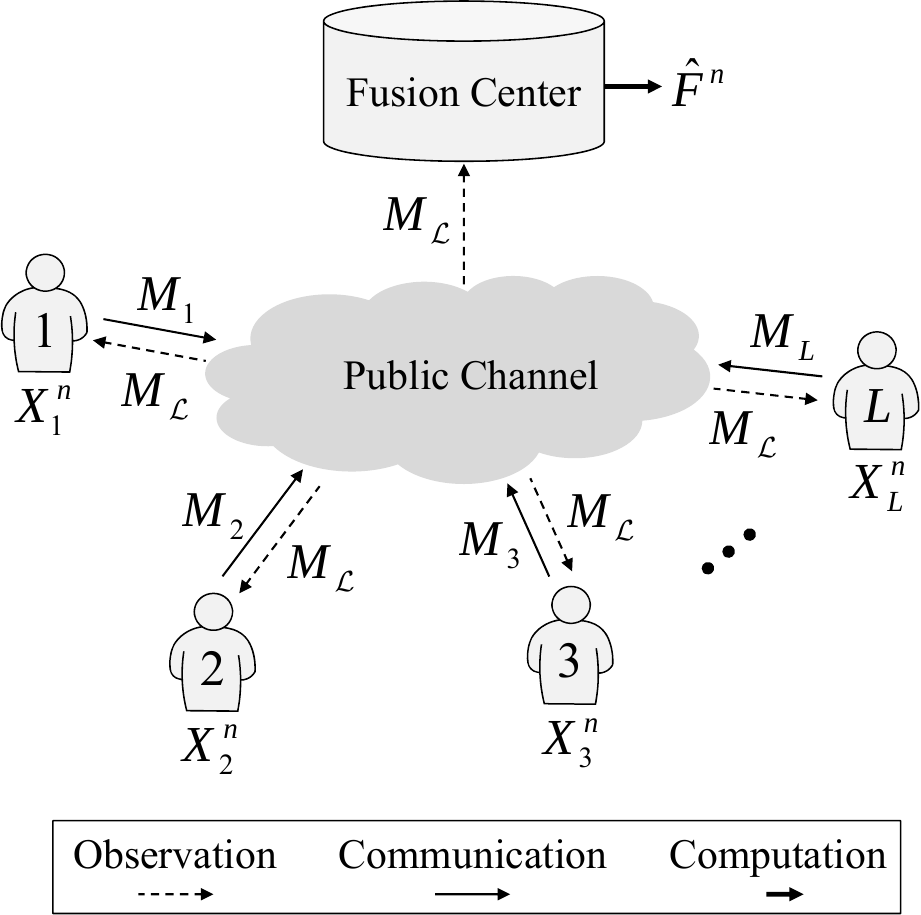}
  \caption{Function computation over a public and noiseless broadcast channel. $M_{\mathcal{L}} \triangleq (M_l)_{l\in\mathcal{L}}$ is the overall public communication and $\hat{F}^n$ is an estimate of $F^n \triangleq (f( X_{\mathcal{L},i} ))_{i \in \llbracket 1, n \rrbracket}$.} 
  \label{fig0}
\end{figure}
\begin{defn} \label{def1}
A $((2^{nR_l})_{l\in\mathcal{L}},n)$ computation scheme consists of
\begin{itemize}
\item $L$ messages sets $\mathcal{M}_l\triangleq \llbracket 1 , 2^{nR_l} \rrbracket$, $l\in \mathcal{L}$;
\item $L$ encoding functions $e_l : \mathcal{X}_l^n \to \mathcal{M}_l$, $l\in \mathcal{L}$;
\item One function $d : \mathcal{M}_{\mathcal{L}} \to \mathcal{F}^n$;
\end{itemize}
and operates as follows:
\begin{itemize}
\item User $l\in \mathcal{L}$ forms $M_l\triangleq e_l (X^n_l)$ and sends it to the fusion center over the public channel;
\item The fusion center forms an estimate $\widehat{F}^n \triangleq d(M_{\mathcal{L}})$ of $F^n$.
\end{itemize}
\end{defn}
\begin{defn}
A tuple $(({R_l})_{l\in\mathcal{L}},\Delta, (\Delta_{\mathcal{A}})_{\mathcal{A} \in \mathbb{A}})$ is achievable if there exists a sequence of $((2^{nR_l})_{l\in\mathcal{L}},n)$ computation schemes such that for any $\mathcal{A} \in \mathbb{A}$
\begin{align}
 \lim_{n\to \infty} \mathbb{P}[\widehat{F}^n \neq F^n] & = 0, \nonumber\\ &\text{ (function computation)} \label{eqrec} \\
 \lim_{n\to \infty} \frac{1}{n}I(X_{\mathcal{L}}^n ; M_{\mathcal{L}} | F^n) &\leq \Delta, \nonumber\\ & \text{ (input leakage at the fusion center)} \label{eqle} \\
 \lim_{n\to \infty}\frac{1}{n}I(X_{\mathcal{A}^c}^n ; M_{\mathcal{L}} |  X_{\mathcal{A}}^n) &\leq \Delta_{\mathcal{A}}, \nonumber\\ & \!\!\!\!\!\!\!\!\!\!\!\!\!\!\!\!\!\!\!\!\!\!\!\!\!\!\!\!\!\!\!\!\!\!\!\!\!\!\!\!\!\!\!\!\!\!\!\!\!\text{ (input leakage of users in $\mathcal{A}^c$ at colluding users in $\mathcal{A}$)} \label{eql2a}.
\end{align}
The set of all achievable tuples $(({R_l})_{l\in\mathcal{L}},\Delta, (\Delta_{\mathcal{A}})_{\mathcal{A} \in \mathbb{A}})$ is denoted by $\mathcal{C}_f(\mathbb{A})$. Note that the region $\mathcal{C}_f(\mathbb{A})$ is defined for a fixed function $f$. To simplify notation, we omit the subscript $f$ in the following.
\end{defn}

\eqref{eqrec} means that the fusion center obtains $F^n$ with a small probability of error.  \eqref{eqle} bounds the difference between $H(X_{\mathcal{L}}^n  |   F^n)$ and $H(X_{\mathcal{L}}^n | M_{\mathcal{L}}   F^n)$, i.e., quantifies the leakage of the inputs $X_{\mathcal{L}}^n$ at the fusion center  through the public communication $M_{\mathcal{L}}$, when accounting for the fact that the fusion center is supposed to learn~$F^n$.
Similarly,  for $\mathcal{A} \in \mathbb{A}$, \eqref{eql2a} bounds the difference between $H(X_{\mathcal{A}^c}^n  |   X_{\mathcal{A}}^n)$ and $H(X_{\mathcal{A}^c}^n | M_{\mathcal{L}}   X_{\mathcal{A}}^n)$, i.e., quantifies the leakage of the inputs $X_{\mathcal{A}^c}^n$ at the set of colluding users $\mathcal{A}$  through the public communication $M_{\mathcal{L}}$, when accounting for the fact that the set of colluding users $\mathcal{A}$ has access to $X_{\mathcal{A}}^n$. 

\begin{ex}
Suppose $\mathbb{A} = \{ \{l\} : l \in \mathcal{L} \}$. Then, the users do not collude but each user is curious about the inputs of all the other users.
\end{ex}

\begin{ex}
Suppose $\mathbb{A} = \emptyset$. Then, the users are not interested in learning the inputs of the other users.
\end{ex}

\begin{ex}
Let $\mathcal{E}$ and $\mathcal{O}$ be the sets of even and odd indices in $\mathcal{L}$, respectively. Suppose $\mathbb{A} = \{ \mathcal{E},\mathcal{O} \}$. Then, the users with odd (respectively even) indices are interested in colluding to learn the inputs of the users with even (respectively odd) indices.
\end{ex}

\subsection{Model 2: User-centric model}\label{secv}

\begin{figure}[h]
\centering
  \includegraphics[width=7.5 cm]{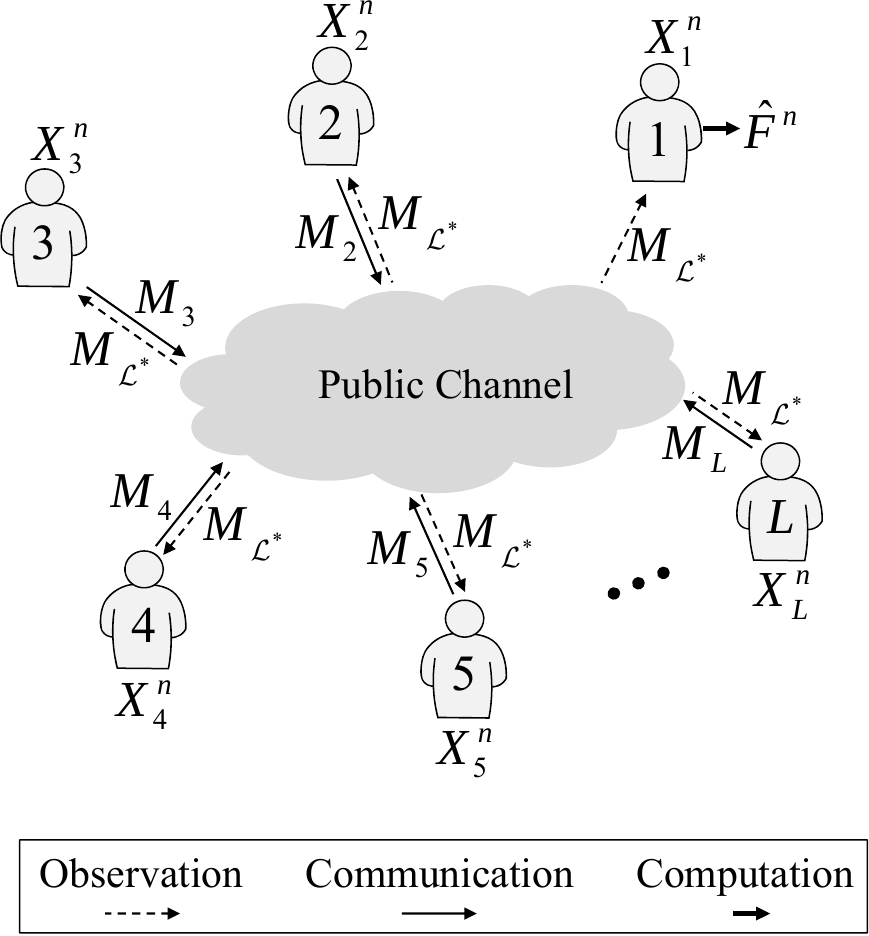}
  \caption{Function computation over a public and noiseless broadcast channel with $l_0 = 1$. $M_{\mathcal{L}^*} \triangleq (M_l)_{l\in\mathcal{L}^*}$ is the overall public communication, with $\mathcal{L^*} \triangleq \mathcal{L} \backslash \{ 1 \}$, and $\hat{F}^n$ is an estimate of $F^n \triangleq (f( X_{\mathcal{L},i} ))_{i \in \llbracket 1, n \rrbracket}$.}
  \label{fig2}
\end{figure}

Consider the same notation as in Section \ref{secps}. Fix $l_0 \in \mathcal{L}$ and define $ \mathcal{L}^* \triangleq \mathcal{L} \backslash \{ l_0 \}$.  For $\mathcal{A} \in \mathbb{A}$, define $\mathcal{A}^* \triangleq \mathcal{A} \backslash \{ l_0\}$ and $\mathcal{A}^{c*} \triangleq \mathcal{A}^c \backslash \{ l_0\}$. We consider the following variant of the setting of Section \ref{secps} as formalized in Definitions \ref{def1b}, \ref{def3}, and depicted in Figure \ref{fig2}. In this variant, there is no fusion center and a designated user (User $l_0$) needs to compute a function of all the users' private data including its own. Note that  the model of Section \ref{secps} is not a special case of the model described in this section because the condition \eqref{eqle} is not necessarily present in this section.

\begin{defn} \label{def1b}
A $((2^{nR_l})_{l\in\mathcal{L}^*},n)$ computation scheme consists of
\begin{itemize}
\item $L-1$ messages sets $\mathcal{M}_l\triangleq \llbracket 1 , 2^{nR_l} \rrbracket$, $l\in \mathcal{L}^*$;
\item $L-1$ encoding functions $e_l : \mathcal{X}_l^n \to \mathcal{M}_l$, $l\in \mathcal{L}^*$;
\item One function $d : \mathcal{M}_{\mathcal{L}^*} \times \mathcal{X}_{l_0}^n \to \mathcal{F}^n$;
\end{itemize}
and operates as follows:
\begin{itemize}
\item User $l\in \mathcal{L}^*$ forms $M_l \triangleq e_l(X_l^n)$ and sends it over the public channel;
\item User $l_0$ forms an estimate $\widehat{F}^n \triangleq d(M_{\mathcal{L}^*},X^n_{l_0})$ of $F^n$.
\end{itemize}
\end{defn}

\begin{defn} \label{def3}
A tuple $(({R_l})_{l\in\mathcal{L}^*}, (\Delta_{\mathcal{A}})_{\mathcal{A} \in \mathbb{A}})$ is achievable if there exists a sequence of $((2^{nR_l})_{l\in\mathcal{L}^*},n)$ computation schemes such that for any $\mathcal{A} \in \mathbb{A}$
\begin{align}
 \lim_{n\to \infty} \mathbb{P}[\widehat{F}^n  \neq F^n] & = 0,\nonumber\\ & \text{ (function computation)}  \label{eqdef21} \\
 \lim_{n\to \infty}\frac{1}{n}I(X_{\mathcal{A}^c}^n ; M_{\mathcal{L}^*} | \bar{F}^n X_{\mathcal{A}}^n) &\leq \Delta_{\mathcal{A}},\nonumber\\ & \!\!\!\!\!\!\!\!\!\!\!\!\!\!\!\!\!\!\!\!\!\!\!\!\!\!\!\!\!\!\!\!\!\!\!\!\!\!\!\!\!\!\!\!\!\!\!\!\!\!\!\!\!\!\!\!\!\!\!\!\!\!\!\text{ (input leakage of users in $\mathcal{A}^c$ at colluding users in $\mathcal{A}$)} \label{eql2b},
\end{align}
where $\bar{F}^n  \triangleq \begin{cases} F^n & \text{if } l_0 \in \mathcal{A}  \\ 
\emptyset & \text{if } l_0 \notin \mathcal{A} \end{cases}$. 
The set of all achievable tuples $(({R_l})_{l\in\mathcal{L}}, (\Delta_{\mathcal{A}})_{\mathcal{A} \in \mathbb{A}})$ is denoted by $\mathcal{C}_f(\mathbb{A},l_0)$. Note that the region $\mathcal{C}_f(\mathbb{A},l_0)$ is defined for a fixed function $f$. To simplify notation, we omit the subscript $f$ in the following.
\end{defn}

\eqref{eqdef21} means that User $l_0$ obtains $F^n$ with a small probability of error.  For $\mathcal{A} \in \mathbb{A}$, \eqref{eql2b} bounds the difference between $H(X_{\mathcal{A}^c}^n  |  \bar{F}^n X_{\mathcal{A}}^n)$ and $H(X_{\mathcal{A}^c}^n | \bar{F}^n M_{\mathcal{L}^*}   X_{\mathcal{A}}^n)$, i.e., quantifies the leakage of the inputs $X_{\mathcal{A}^c}^n$ at the set of colluding users $\mathcal{A}$  through the public communication $M_{\mathcal{L}^*}$, when accounting for the fact that the set of colluding users $\mathcal{A}$ has access to $(X_{\mathcal{A}}^n,\bar{F}^n)$.

\section{Main Results}  \label{secmr}

\subsection{Main results for Model 2}

We first present our outer and inner bounds on the capacity region  $\mathcal{C}(\mathbb{A}, l_0)$, $l_0 \in \mathcal{L}$, for arbitrarily correlated inputs.

\begin{thm}[Converse] \label{th5}
Fix $l_0 \in \mathcal{L}$ and define $\mathcal{O}(\mathbb{A},l_0)\triangleq \bigcap_{\mathcal{A} \in \mathbb{A}}\mathcal{R}(\mathcal{A},l_0)$, where $\mathcal{R}(\mathcal{A},l_0)$ is defined in \eqref{eqaddf}, and for $\mathcal{S} \subseteq \mathcal{L}$, we use the notation $R_{\mathcal{S}} \triangleq \sum_{l \in \mathcal{S}} R_l$, $X_{\mathcal{S}} \triangleq (X_l)_{l \in \mathcal{S}}$,  $U_{\mathcal{S}} \triangleq (U_l)_{l \in \mathcal{S}}$, and $V_{\mathcal{S}} \triangleq (V_l)_{l \in \mathcal{S}}$.
\begin{figure*}[t!]
\begin{align*}
\mathcal{R}(\mathcal{A},l_0) 
\triangleq &\{ ((R_l)_{l\in\mathcal{L}^*},  (\Delta_{\mathcal{A}'})_{\mathcal{A}'\in\mathbb{A}}) :  \\
 R_{\mathcal{S}}  
& \geq I( V_{\mathcal{S}} ; X_{\mathcal{S}} | V_{\mathcal{S}^c}X_{l_0}) - I(V_{\mathcal{S}};V_{\mathcal{S}^c}|X_{\mathcal{S}}X_{l_0})  + I(U_{\mathcal{S}};X_{\mathcal{S}}|U_{\mathcal{S}^c} V_{\mathcal{S}}X_{l_0}) - I(U_{\mathcal{S}};U_{\mathcal{S}^c}|X_{\mathcal{S}} V_{\mathcal{S}}X_{l_0}), \forall \mathcal{S} \subseteq \mathcal{L}^* ,\\
\Delta_{\mathcal{A}} &  \geq \begin{cases} I( U_{\mathcal{A}^{c}};X_{\mathcal{A}^{c}}|X_{\mathcal{A}})  -   I( X_{\mathcal{A}^c} ;F | X_{\mathcal{A}}) & \text{ if } l_0 \in \mathcal{A}\\
I( U_{\mathcal{A}^{c*}};X_{\mathcal{A}^{c*}}|X_{\mathcal{A}}) - [I( X_{l_0};U_{\mathcal{A}^{c*}}| V_{\mathcal{A}^{c*}} ) -I( X_{\mathcal{A}} ;U_{\mathcal{A}^{c*}}|V_{\mathcal{A}^{c*}} )]^+ & \text{ if } l_0 \notin \mathcal{A}  \end{cases} 
 ,\\ 
&\text{for some $(U_l,V_l)_{l\in\mathcal{L}^*}$ s.t. }  V_l - U_l - X_l - X_{\mathcal{L}\backslash\{l\}} , \forall l \in \mathcal{L}^*, \text{ and } H(F|X_{l_0}U_{\mathcal{L}^*})=0  
 \} . \numberthis \label{eqaddf}\end{align*}
 
\begin{align*}
\mathcal{R} (\mathbb{A}&,  p_{V_{\mathcal{L}^*} U_{\mathcal{L}^*}  X_{\mathcal{L}}}) \triangleq  \{ ((R_l)_{l\in\mathcal{L}^*},  (\Delta_{\mathcal{A}})_{\mathcal{A}\in\mathbb{A}}) : \\
 R_{\mathcal{S}}  
& \geq I( V_{\mathcal{S}} ; X_{\mathcal{S}} | V_{\mathcal{S}^c}X_{l_0}) - I(V_{\mathcal{S}};V_{\mathcal{S}^c}|X_{\mathcal{S}}X_{l_0})  + I(U_{\mathcal{S}};X_{\mathcal{S}}|U_{\mathcal{S}^c} V_{\mathcal{S}}X_{l_0}) - I(U_{\mathcal{S}};U_{\mathcal{S}^c}|X_{\mathcal{S}} V_{\mathcal{S}}X_{l_0}), \forall \mathcal{S} \subseteq \mathcal{L}^* ,\\
\Delta_{\mathcal{A}} &   \geq   I( U_{\mathcal{A}^{c}};X_{\mathcal{A}^{c}}|X_{\mathcal{A}})  -   I( X_{\mathcal{A}^c} ;F | X_{\mathcal{A}}), \forall \mathcal{A} \in \mathbb{A}, \text{ s.t. } \mathcal{A} \ni l_0 \\
\Delta_{\mathcal{A}} &  \geq I( U_{\mathcal{A}^{c*}};X_{\mathcal{A}^{c*}}|X_{\mathcal{A}}) - [I( X_{l_0};U_{\mathcal{A}^{c*}}| V_{\mathcal{A}^{c*}} ) -I( X_{\mathcal{A}} ;U_{\mathcal{A}^{c*}}|V_{\mathcal{A}^{c*}} )]^+,  \forall \mathcal{A} \in \mathbb{A}, \text{ s.t. }  \mathcal{A} \not\ni l_0      \}.\numberthis \label{eqaddf2}
\end{align*}
 
 \hrulefill
\end{figure*}

 Then, the following converse result holds $$\mathcal{C}(\mathbb{A},l_0) \subseteq \mathcal{O}(\mathbb{A},l_0).$$
\end{thm}
\begin{proof}
See Appendix \ref{Secth5}.
\end{proof}

Compared to Model 1, the study of the converse for Model~2 is more involved due to the presence of correlated information at the computation node. In particular, in the proof of Theorem~\ref{th5}, the auxiliary random variables depend  on $\mathcal{A} \in \mathbb{A}$, which requires us to consider the intersection $\bigcap_{\mathcal{A} \in \mathbb{A}}\mathcal{R}(\mathcal{A},l_0)$ to obtain an outer region on the capacity region $\mathcal{C}(\mathbb{A},l_0)$.

\begin{thm}[Achievability] 
\label{thach2}
Fix $l_0 \in \mathcal{L}$. Let $\mathcal{P}^{\mathcal{I}(l_0)}$ be the set of probability distributions $p_{V_{\mathcal{L}^*} U_{\mathcal{L}^*}  X_{\mathcal{L}}}$ over $\mathcal{V}_{\mathcal{L}^*} \times\mathcal{U}_{\mathcal{L}^*} \times \mathcal{X}_{\mathcal{L}}$ such that $p_{V_{\mathcal{L}^*} U_{\mathcal{L}^*}  X_{\mathcal{L}}} = p_{X_{\mathcal{L}}} \prod_{l\in\mathcal{L}^*}  p_{U_l|X_l} p_{V_l|U_l}$ and $H(F|U_{\mathcal{L}^*}X_{l_0})=0$. Then, define
$$\mathcal{I}(\mathbb{A},l_0) \triangleq \displaystyle\bigcup_{p_{V_{\mathcal{L}^*} U_{\mathcal{L}^*}  X_{\mathcal{L}}} \in \mathcal{P}^{\mathcal{I}(l_0)}} \mathcal{R} (\mathbb{A}, p_{V_{\mathcal{L}^*} U_{\mathcal{L}^*}  X_{\mathcal{L}}}),$$
where $\mathcal{R} (\mathbb{A},  p_{V_{\mathcal{L}^*} U_{\mathcal{L}^*}  X_{\mathcal{L}}})$ is defined in \eqref{eqaddf2}.

 Then, the following achievability result holds  $$\mathcal{C}(\mathbb{A},l_0) \supseteq \mathcal{I} (\mathbb{A},l_0).$$
\end{thm}
\begin{proof}
See Section \ref{Secthach2}.
\end{proof}

Unlike in the converse in Theorem \ref{th5}, the auxiliary random variables in the achievability region of Theorem \ref{thach2} do not depend on $\mathcal{A}\in\mathbb{A}$. Additionally, note that the condition $p_{V_{\mathcal{L}^*} U_{\mathcal{L}^*}  X_{\mathcal{L}}} = p_{X_{\mathcal{L}}} \prod_{l\in\mathcal{L}^*}  p_{U_l|X_l} p_{V_l|U_l}$ in Theorem \ref{thach2} is more restrictive than the condition $ V_l - U_l - X_l - X_{\mathcal{L}\backslash\{l\}} , \forall l \in \mathcal{L}^*$ in Theorem \ref{th5}. There is thus a gap between the achievability in Theorem \ref{thach2} and the converse in Theorem \ref{th5}. However, as shown in Theorem \ref{th4}, in the case of independent inputs at the users, we tighten the converse to obtain the capacity region and show the optimality of the coding strategy in the proof of Theorem~\ref{thach2}.

\begin{thm}[Capacity region for independent inputs]  \label{th4}
Fix $l_0 \in \mathcal{L}$. Assume that $p_{X_{\mathcal{L}}} = \prod_{l\in \mathcal{L}} p_{X_l}$. Let $\mathcal{P}$ be the set of probability distributions $p_{U_{\mathcal{L}^*}X_{\mathcal{L}}Q} =p_Q p_{X_{\mathcal{L}}} \prod_{l\in\mathcal{L}^*}  p_{U_l|X_lQ} $ over $\mathcal{U}_{\mathcal{L}^*} \times \mathcal{X}_{\mathcal{L}} \times \mathcal{Q}$ such that $H(F | U_{\mathcal{L}^*} X_{l_0} Q)=0$, $|\mathcal{U}_l| \leq |\mathcal{X}_l|, \forall l \in\mathcal{L}^*$, and $|\mathcal{Q}| \leq L+|\mathbb{A}| $. Then, the capacity region  is given by  
$$\mathcal{C}(\mathbb{A},l_0) = \displaystyle\bigcup_{p_{U_{\mathcal{L}^*}X_{\mathcal{L}}Q} \in \mathcal{P}} \mathcal{T} (\mathbb{A}, p_{U_{\mathcal{L}^*}X_{\mathcal{L}}Q}),$$
where 
\begin{align*}
& \!\!\!\!\!\!\!\!\!\!\!\!\!\!\!\!\!\!\!\!\!\!\!\!\!\!\!\!\mathcal{T} (\mathbb{A}, p_{U_{\mathcal{L}^*}X_{\mathcal{L}}Q}) \\ \triangleq \{ ((R_l&)_{l\in\mathcal{L}^*},  (\Delta_{\mathcal{A}})_{\mathcal{A} \in \mathbb{A}})
 : \forall l \in \mathcal{L}^*, \\
 R_{l}  
& \geq I( U_{l} ; X_{l}| Q) , \\
\Delta_{\mathcal{A}} & \geq I(X_{\mathcal{A}^c} ; U_{\mathcal{A}^c}| F Q X_{\mathcal{A}}) , \forall \mathcal{A} \in \mathbb{A}, \text{s.t. } \mathcal{A} \ni l_0 \\
\Delta_{\mathcal{A}} & \geq I(X_{\mathcal{A}^{c*}} ; U_{\mathcal{A}^{c*}}  |Q ), \forall \mathcal{A} \in \mathbb{A}, \text{s.t. } \mathcal{A} \not\ni l_0 
 \}.
\end{align*}
\end{thm}
\begin{proof}
See Appendix \ref{Secth4}.
\end{proof}
Note that the achievability part of Theorem~\ref{th4} follows similarly to that of Theorem~\ref{thach2} for correlated inputs. However, the converse of Theorem~\ref{th5}, which addresses the correlated inputs case, is not readily applicable to derive the converse of Theorem~\ref{th4}. For this reason, we develop a new converse using different definitions for the auxiliary random variables.
\begin{ex}
Assume $L=2$, $l_0=2$, $p_{X_1X_2}=p_{X_1}p_{X_2}$, and $\mathbb{A} = \{ \{2\}\}$.  Let $\mathcal{P}$ be the set of probability distributions $p_{U_1X_1X_2} = p_{X_1X_2}  p_{U_1|X_1} $ over $\mathcal{U}_1 \times \mathcal{X}_1 \times \mathcal{X}_2$ such that $H(F|U_1 X_2)=0$, $|\mathcal{U}_1| \leq |\mathcal{X}_1| $. Then,  the capacity region  is given by  
\begin{align*}
\mathcal{C} (\mathbb{A},l_0) 
&= \textstyle\bigcup_{p_{U_1X_1X_2} \in \mathcal{P}}    \{ (R_1,\Delta_{\{2\}} )
 :    
  R_1  
  \geq I( U_1 ; X_1) ,  \\ &\phantom{-------} 
\Delta_{\{2\}}    \geq I( U_1 ; X_1 ) - I( F ; X_1 |  X_2)     
 \} \\
&  =\{ (R_1,\Delta_{\{2\}} )
 :    
  R_1  
  \geq I( U_1^* ; X_1) ,  \\ &\phantom{-------}
\Delta_{\{2\}}    \geq   I(U_1^*;X_1) - I( F ; X_1 |  X_2) 
 \},
 \end{align*}
where $I( U_1^* ; X_1  ) \triangleq \min_{p_{U_1X_1} \in \mathcal{P}} I( U_1 ; X_1)$. The achievability proof follows from  Theorem \ref{th4}, and the converse proof is obtained by defining $U_{l,i} \triangleq (M_l,X^{i-1}_l,X_{l_0,\llbracket  1,n\rrbracket\backslash \{i\}})$, $i \in \llbracket  1,n\rrbracket$, and $U_l \triangleq (U_{l,Q},Q)$, where $Q$ is uniformly distributed over $\llbracket 1 , n \rrbracket$,  in the converse proof of Theorem~\ref{th4}.\end{ex}
As we can see in this example, there is  a linear relationship between the optimal communication rate $R_1$ and leakage $\Delta_{\{2\}} $ due to the fact that $I( F ; X_1 |  X_2) $ is constant.

\subsection{Main results for Model 1}
We first present our outer and inner bounds on the capacity region  $\mathcal{C}(\mathbb{A})$ for arbitrarily correlated inputs.

\begin{thm}[Converse] \label{th1}
Let $\mathcal{P}^{\mathcal{O}}$ be the set of probability distributions $p_{U_{\mathcal{L}}X_{\mathcal{L}}}$ over $\mathcal{U}_{\mathcal{L}} \times \mathcal{X}_{\mathcal{L}}$ such that for all $\mathcal{S} \subseteq \mathcal{L}$, $U_{\mathcal{S}} - X_{\mathcal{S}} - X_{\mathcal{L}}$ forms a Markov chain, and $H(F|U_{\mathcal{L}})=0$. Next, define
$$\mathcal{O}(\mathbb{A}) \triangleq \displaystyle\bigcup_{p_{U_{\mathcal{L}}X_{\mathcal{L}}} \in \mathcal{P}^{\mathcal{O}}} \mathcal{R} (\mathbb{A}, p_{U_{\mathcal{L}}X_{\mathcal{L}}}),$$
with 
\begin{align*}
\mathcal{R} (\mathbb{A}, p_{U_{\mathcal{L}}X_{\mathcal{L}}})\triangleq \{ (&(R_l)_{l\in\mathcal{L}}, \Delta, (\Delta_{\mathcal{A}})_{\mathcal{A} \in \mathbb{A}})\!
 : \!\forall \mathcal{S} \subseteq \mathcal{L}, \forall \mathcal{A} \in \mathbb{A}, \\
  R_{\mathcal{S}}  
& \geq I( U_{\mathcal{S}} ; X_{\mathcal{S}}| U_{\mathcal{S}^c}) - I(U_{\mathcal{S}};U_{\mathcal{S}^c}|X_{\mathcal{S}}) , \\
\Delta  & \geq I( U_{\mathcal{L}} ; X_{\mathcal{L}} | F ) , \\
\Delta_{\mathcal{A}} & \geq I(X_{\mathcal{A}^c} ; U_{\mathcal{A}^c} | X_{\mathcal{A}}   )    
 \}, \numberthis \label{region}
\end{align*}
where for $\mathcal{S} \subseteq \mathcal{L}$, we use the notation $R_{\mathcal{S}} \triangleq \sum_{l \in \mathcal{S}} R_l$, $X_{\mathcal{S}} \triangleq (X_l)_{l \in \mathcal{S}}$, and $U_{\mathcal{S}} \triangleq (U_l)_{l \in \mathcal{S}}$.
Then, the following converse result holds $$\mathcal{C}(\mathbb{A}) \subseteq \mathcal{O} (\mathbb{A}).$$
\end{thm}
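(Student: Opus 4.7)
The plan is to fix, for each $n$, a $((2^{nR_l})_{l\in\mathcal{L}},n)$ computation scheme from the achievable sequence, introduce a time-sharing variable $Q$ uniform on $\llbracket 1,n\rrbracket$ and independent of $(X_{\mathcal{L}}^n,M_{\mathcal{L}})$, and identify the single-letter random variables by $X_l\triangleq X_{l,Q}$, $F\triangleq f(X_{\mathcal{L},Q})$, and $U_l\triangleq M_l$, so that the resulting joint law is the mixture $p_{U_{\mathcal{L}}X_{\mathcal{L}}}=\tfrac{1}{n}\sum_{i=1}^n p_{M_{\mathcal{L}},X_{\mathcal{L},i}}$. The Markov chain $U_{\mathcal{S}}-X_{\mathcal{S}}-X_{\mathcal{L}}$ for every $\mathcal{S}\subseteq\mathcal{L}$ follows from the iid structure of $X_{\mathcal{L}}^n$ together with $M_{\mathcal{S}}=e_{\mathcal{S}}(X_{\mathcal{S}}^n)$ depending only on $X_{\mathcal{S}}^n$, since then $p(M_{\mathcal{S}}\mid X_{\mathcal{L},i}=x_{\mathcal{L}})$ is a function of $x_{\mathcal{S}}$ alone. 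The constraint $H(F|U_{\mathcal{L}})=0$ is reached in the limit via Fano applied to~\eqref{eqrec}: each $H(F_i|M_{\mathcal{L}})\leq h_2(\epsilon_n)+\epsilon_n\log|\mathcal{F}|$ with $\epsilon_n=\mathbb{P}[\widehat{F}^n\neq F^n]\to 0$, so $H(F|U_{\mathcal{L}})=\tfrac{1}{n}\sum_i H(F_i|M_{\mathcal{L}})\to 0$, and the exact equality is recovered by closing $\mathcal{O}(\mathbb{A})$ as $n\to\infty$.

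For the rate bound, starting from $nR_{\mathcal{S}}\geq H(M_{\mathcal{S}})$, I would observe that with the above identification $H(U_{\mathcal{S}}|U_{\mathcal{S}^c})-H(U_{\mathcal{S}}|X_{\mathcal{S}})=H(M_{\mathcal{S}}|M_{\mathcal{S}^c})-\tfrac{1}{n}\sum_i H(M_{\mathcal{S}}|X_{\mathcal{S},i})$, so the desired inequality reduces to $H(M_{\mathcal{S}})+n\,I(M_{\mathcal{S}};M_{\mathcal{S}^c})\geq \sum_i I(M_{\mathcal{S}};X_{\mathcal{S},i})$. I would establish this via the iid identity $\sum_i I(M_{\mathcal{S}};X_{\mathcal{S},i})=H(M_{\mathcal{S}})-\sum_i I(X_{\mathcal{S}}^{i-1};X_{\mathcal{S},i}|M_{\mathcal{S}})\leq H(M_{\mathcal{S}})$, which combines $\sum_i I(M_{\mathcal{S}};X_{\mathcal{S},i}|X_{\mathcal{S}}^{i-1})=I(M_{\mathcal{S}};X_{\mathcal{S}}^n)=H(M_{\mathcal{S}})$ with $I(X_{\mathcal{S}}^{i-1};X_{\mathcal{S},i})=0$; nonnegativity of $I(M_{\mathcal{S}};M_{\mathcal{S}^c})$ then seals the bound.

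For the leakage bounds, I would proceed analogously. From~\eqref{eqle}, chain rule gives $n(\Delta+o(1))\geq\sum_i I(X_{\mathcal{L},i};M_{\mathcal{L}}|F^n,X_{\mathcal{L}}^{i-1})$; the iid structure together with $F_i=f(X_{\mathcal{L},i})$ yields $H(X_{\mathcal{L},i}|F^n,X_{\mathcal{L}}^{i-1})=H(X_{\mathcal{L},i}|F_i)$, and $H(X_{\mathcal{L},i}|F^n,X_{\mathcal{L}}^{i-1},M_{\mathcal{L}})\leq H(X_{\mathcal{L},i}|F_i,M_{\mathcal{L}})$ by dropping conditioning variables, so each summand lower-bounds $I(M_{\mathcal{L}};X_{\mathcal{L},i}|F_i)$, whose average equals $I(U_{\mathcal{L}};X_{\mathcal{L}}|F)$. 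For~\eqref{eql2a}, since $M_{\mathcal{A}}$ is a function of $X_{\mathcal{A}}^n$ we have $I(X_{\mathcal{A}^c}^n;M_{\mathcal{A}}|X_{\mathcal{A}}^n)=0$, hence $I(X_{\mathcal{A}^c}^n;M_{\mathcal{L}}|X_{\mathcal{A}}^n)=I(X_{\mathcal{A}^c}^n;M_{\mathcal{A}^c}|X_{\mathcal{A}}^n)$; the same chain-rule plus iid steps then single-letterize this to $I(X_{\mathcal{A}^c};U_{\mathcal{A}^c}|X_{\mathcal{A}})$.

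The main obstacle I anticipate is the cardinality of $\mathcal{U}_l$: the identification $U_l=M_l$ yields $|\mathcal{U}_l|=2^{nR_l}$, which grows with $n$, so a Fenchel--Eggleston--Carath\'eodory argument will be required to replace each $U_l$ with an auxiliary on an alphabet whose size depends only on $|\mathcal{X}_{\mathcal{L}}|$ and the number of binding constraints in~\eqref{region}. The rate-bound algebra---specifically the identity $\sum_i I(M_{\mathcal{S}};X_{\mathcal{S},i})\leq H(M_{\mathcal{S}})$---is the subtlest step and is exactly what produces the unusual subtractive term $-I(U_{\mathcal{S}};U_{\mathcal{S}^c}|X_{\mathcal{S}})$ in the outer region.
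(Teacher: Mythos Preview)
Your proposal has a genuine gap at the step where you verify the decodability constraint $H(F\mid U_{\mathcal{L}})\to 0$. With the identification $U_l\triangleq M_l$ (so that $Q$ is \emph{not} part of the auxiliary), the equality you write, $H(F\mid U_{\mathcal{L}})=\tfrac{1}{n}\sum_i H(F_i\mid M_{\mathcal{L}})$, is false: what you actually get is $H(F\mid U_{\mathcal{L}})=H(F_Q\mid M_{\mathcal{L}})$ with $Q$ \emph{unconditioned}, and by concavity this is only $\geq \tfrac{1}{n}\sum_i H(F_i\mid M_{\mathcal{L}})$. The wrong direction matters here. A one-line counterexample: take $L=1$, $f=\mathrm{id}$, $X_1\sim\mathrm{Bernoulli}(1/2)$, and $M_1=X_1^n$; then each $H(F_i\mid M_1)=0$, yet $H(F_Q\mid M_1)=H(X_{1,Q}\mid X_1^n)$ is the entropy of the empirical distribution of $X_1^n$, which converges to $1$ bit, not $0$. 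So with $U_l=M_l$ your auxiliary simply does not land in $\mathcal{P}^{\mathcal{O}}$, and no ``closing as $n\to\infty$'' will help.

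The fix is exactly what the paper does differently: include the time index in the auxiliary. The paper sets $U_{l,i}\triangleq (M_l,X_{\mathcal{L}}^{i-1})$ and $U_l\triangleq(I,U_{l,I})$, so that $U_{\mathcal{L}}$ determines $I$; then $H(F\mid U_{\mathcal{L}})=\tfrac{1}{n}\sum_i H(F_i\mid M_{\mathcal{L}},X_{\mathcal{L}}^{i-1})\to 0$ by Fano. Even the lighter choice $U_l=(Q,M_l)$ already repairs the decodability step and keeps your Markov chains and your rate/leakage inequalities intact (the latter all go in the right direction once you work with $I(U_{\mathcal{S}};X_{\mathcal{S}})\leq R_{\mathcal{S}}$ and then pass to $I(U_{\mathcal{S}};X_{\mathcal{S}}\mid U_{\mathcal{S}^c})-I(U_{\mathcal{S}};U_{\mathcal{S}^c}\mid X_{\mathcal{S}})$ as the paper does). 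Two side remarks: several of the ``equalities'' you state between mixture quantities and time-averages are really inequalities (concavity of conditional entropy), though fortunately in the correct direction for the rate and leakage bounds; and Theorem~\ref{th1} carries no cardinality constraints on $\mathcal{U}_l$, so the Fenchel--Eggleston--Carath\'eodory step you anticipate is not needed here.
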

\begin{proof}
See Appendix \ref{secth1}.
\end{proof}

\begin{thm}[Achievability]  \label{th2}
Let $\mathcal{P}^{\mathcal{I}}$ be the set of probability distributions $p_{U_{\mathcal{L}}X_{\mathcal{L}}}$ over $\mathcal{U}_{\mathcal{L}} \times \mathcal{X}_{\mathcal{L}}$ such that $p_{U_{\mathcal{L}}   X_{\mathcal{L}}} = p_{X_{\mathcal{L}}} \prod_{l\in\mathcal{L}}  p_{U_l|X_l}$ and $H(F|U_{\mathcal{L}})=0$. Next, define
$$\mathcal{I}(\mathbb{A}) \triangleq \displaystyle\bigcup_{p_{U_{\mathcal{L}}X_{\mathcal{L}}} \in \mathcal{P}^{\mathcal{I}}} \mathcal{R} (\mathbb{A}, p_{U_{\mathcal{L}}X_{\mathcal{L}}}),$$
where $\mathcal{R} (\mathbb{A}, p_{U_{\mathcal{L}}X_{\mathcal{L}}})$ is defined in \eqref{region}. Then, the following achievability result holds $$\mathcal{C}(\mathbb{A}) \supseteq \mathcal{I} (\mathbb{A}).$$
\end{thm}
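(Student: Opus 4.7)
\textbf{Plan (Theorem~\ref{th2}).} The approach is a Berger--Tung distributed source coding scheme combined with independent random binning at each encoder. Fix $p_{U_{\mathcal{L}}X_{\mathcal{L}}}\in\mathcal{P}^{\mathcal{I}}$ and a small $\delta>0$. For each $l\in\mathcal{L}$, I randomly generate a codebook $\{u_l^n(w_l):w_l\in\llbracket 1,2^{nT_l}\rrbracket\}$ with entries i.i.d.\ $\sim p_{U_l}$ and covering rate $T_l=I(U_l;X_l)+\delta$, then independently assign every codeword uniformly to one of $2^{nR_l}$ bins. Given $X_l^n$, User~$l$ draws $W_l$ uniformly at random among the indices $w_l$ for which $(u_l^n(w_l),X_l^n)$ is jointly $p_{U_lX_l}$-typical, and transmits the bin index $M_l$ of $u_l^n(W_l)$. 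The fusion center recovers the unique tuple $(w_l)_{l\in\mathcal{L}}$ consistent with the received bins and jointly $p_{U_{\mathcal{L}}}$-typical, then outputs $\widehat{F}^n$ via the deterministic map $g:\mathcal{U}_{\mathcal{L}}\to\mathcal{F}$ provided by $H(F|U_{\mathcal{L}})=0$.

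\textbf{Rate and reliability.} The encoding step succeeds with vanishing probability of failure by the covering lemma, since $T_l>I(U_l;X_l)$. A Berger--Tung joint-typicality error analysis then bounds, for each $\mathcal{S}\subseteq\mathcal{L}$, the expected number of spurious tuples $(w'_l)_{l\in\mathcal{S}}$ lying in the correct bins and jointly typical with the true $u_{\mathcal{S}^c}^n$ by $2^{n(T_{\mathcal{S}}-R_{\mathcal{S}}-\sum_{l\in\mathcal{S}}H(U_l)+H(U_{\mathcal{S}}|U_{\mathcal{S}^c}))}$, which vanishes whenever $R_{\mathcal{S}}>T_{\mathcal{S}}-\sum_{l\in\mathcal{S}}H(U_l)+H(U_{\mathcal{S}}|U_{\mathcal{S}^c})$. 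Substituting the chosen $T_l$ and using the $\mathcal{P}^{\mathcal{I}}$ structure $p_{U_{\mathcal{L}}|X_{\mathcal{L}}}=\prod_l p_{U_l|X_l}$---which gives $\sum_{l\in\mathcal{S}}H(U_l|X_l)=H(U_{\mathcal{S}}|X_{\mathcal{S}})$ and $I(U_{\mathcal{S}};U_{\mathcal{S}^c}|X_{\mathcal{S}})=0$---this reduces exactly to the rate constraint of \eqref{region}, up to an $O(\delta)$ slack. Correct recovery of $U_{\mathcal{L}}^n$ then yields $\widehat{F}^n=F^n$ with high probability.

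\textbf{Leakage bounds.} Each $M_l$ is a deterministic function of $u_l^n(W_l)$, and $M_{\mathcal{A}}$ is moreover a deterministic function of $X_{\mathcal{A}}^n$; by the chain rule,
\begin{equation*}
I(X_{\mathcal{L}}^n;M_{\mathcal{L}}|F^n)\leq I(X_{\mathcal{L}}^n;U_{\mathcal{L}}^n|F^n),\quad I(X_{\mathcal{A}^c}^n;M_{\mathcal{L}}|X_{\mathcal{A}}^n)=I(X_{\mathcal{A}^c}^n;M_{\mathcal{A}^c}|X_{\mathcal{A}}^n)\leq I(X_{\mathcal{A}^c}^n;U_{\mathcal{A}^c}^n|X_{\mathcal{A}}^n).
\end{equation*}
It then suffices to single-letterize the $U$-side quantities. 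Because each encoder picks $W_l$ uniformly among its jointly-typical candidates (a likelihood-type encoder), the soft-covering lemma guarantees that the induced law of $(X_{\mathcal{L}}^n,U_{\mathcal{L}}^n)$ is close in total variation, in expectation over the codebook, to $\prod_{i=1}^n p_{X_{\mathcal{L}}U_{\mathcal{L}}}$ whenever $T_l>I(U_l;X_l)$. Continuity of conditional mutual information on finite alphabets then gives $\frac{1}{n}I(X_{\mathcal{L}}^n;U_{\mathcal{L}}^n|F^n)\to I(X_{\mathcal{L}};U_{\mathcal{L}}|F)$ and $\frac{1}{n}I(X_{\mathcal{A}^c}^n;U_{\mathcal{A}^c}^n|X_{\mathcal{A}}^n)\to I(X_{\mathcal{A}^c};U_{\mathcal{A}^c}|X_{\mathcal{A}})$.

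\textbf{Derandomization and main obstacle.} Averaging the error event and the two leakage expectations over the random codebook, a standard union-bound plus Markov-inequality argument yields a deterministic codebook meeting all three constraints up to a vanishing slack; letting $\delta\to 0$ closes the proof. The delicate point is the leakage step: a purely greedy (deterministic) encoder would have $H(U_{\mathcal{L}}^n|X_{\mathcal{L}}^n)=0$, giving a leakage of order $nH(U_{\mathcal{L}}|F)$ rather than the desired $nI(U_{\mathcal{L}};X_{\mathcal{L}}|F)$. It is therefore essential to retain local randomness at each encoder---through uniform selection among jointly typical indices, equivalently a likelihood encoder---so that the induced ensemble matches the i.i.d.\ target in total variation, which is precisely what single-letterizes the leakage at the value claimed by the theorem.
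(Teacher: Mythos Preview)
Your scheme is correct but the route differs substantially from the paper's. The paper first shows, via supermodularity of $\mathcal{S}\mapsto I(U_{\mathcal{S}};X_{\mathcal{L}}|U_{\mathcal{S}^c})$ (Lemmas~\ref{lem1}--\ref{lem3}), that the rate constraints in $\mathcal{R}(\mathbb{A},p_{U_{\mathcal{L}}X_{\mathcal{L}}})$ form a contrapolymatroid, so it suffices to achieve the corner point $(I(U_l;X_l|U_{1:l-1}))_{l\in\mathcal{L}}$ and time-share; decoding is then \emph{sequential}, and the fusion-center leakage is bounded simply by $\tfrac{1}{n}\sum_{l}H(M_l)\le\sum_{l}R_l$, which at the corner point collapses to $I(U_{\mathcal{L}};X_{\mathcal{L}})$ and hence to $I(U_{\mathcal{L}};X_{\mathcal{L}}|F)$ after Fano; the colluder leakage is handled by direct typical-sequence estimates on $p(x_{\mathcal{A}^c}^n\mid u_{\mathcal{L}}^n,x_{\mathcal{A}}^n)$ in the spirit of \cite[Lemma~12.3]{el2011network}. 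Your Berger--Tung joint decoding plus a likelihood encoder and soft covering bypasses the polymatroid reduction entirely and single-letterizes both leakages in one stroke; the price is that your leakage argument genuinely needs the stochastic encoder, whereas the paper's rate-based bound does not. Two caveats worth flagging: with a randomized encoder your claim that ``$M_{\mathcal{A}}$ is a deterministic function of $X_{\mathcal{A}}^n$'' is false---the correct justification for dropping $M_{\mathcal{A}}$ is that $M_{\mathcal{A}}\perp(X_{\mathcal{A}^c}^n,M_{\mathcal{A}^c})$ given $X_{\mathcal{A}}^n$ by independence of the local randomness; and your ``main obstacle'' paragraph overstates the difficulty: under a greedy deterministic encoder, $U_{\mathcal{L}}^n$ is a \emph{codeword}, so $H(U_{\mathcal{L}}^n)\le n\sum_{l}\widetilde R_l\approx n\sum_{l}I(U_l;X_l)$ rather than $nH(U_{\mathcal{L}})$, which is exactly why the paper's rate-based route succeeds without soft covering.
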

\begin{proof}
See Appendix \ref{secth2}.
\end{proof}
Note that the condition $p_{U_{\mathcal{L}}   X_{\mathcal{L}}} = p_{X_{\mathcal{L}}} \prod_{l\in\mathcal{L}}  p_{U_l|X_l}$ in Theorem \ref{th2} is more restrictive than the condition $U_{\mathcal{S}} - X_{\mathcal{S}} - X_{\mathcal{L}}, \forall \mathcal{S} \subseteq \mathcal{L}$ in Theorem \ref{th1}. There is thus a gap between the achievability in Theorem \ref{th1} and the converse in Theorem \ref{th2}. However, as shown in Theorem \ref{th3}, in the case of independent inputs at the users, we tighten the converse to obtain the capacity region and show the optimality of the coding strategy in the proof of Theorem~\ref{th2}.

\begin{thm}[Capacity region for independent inputs]  \label{th3}
Assume that $p_{X_{\mathcal{L}}} = \prod_{l\in \mathcal{L}} p_{X_l}$. Let $\mathcal{P}$ be the set of probability distributions $p_{U_{\mathcal{L}}X_{\mathcal{L}}Q} =p_Q p_{X_{\mathcal{L}}} \prod_{l\in\mathcal{L}}  p_{U_l|X_lQ} $ over $\mathcal{U}_{\mathcal{L}} \times \mathcal{X}_{\mathcal{L}} \times \mathcal{Q}$ such that $H(F|U_{\mathcal{L}}Q)=0$, $|\mathcal{U}_l| \leq |\mathcal{X}_l|, \forall l \in\mathcal{L}$, and $|\mathcal{Q}| \leq L+|\mathbb{A}| + 2$. Then, the capacity region  is given by  
$$\mathcal{C}(\mathbb{A}) = \displaystyle\bigcup_{p_{U_{\mathcal{L}}X_{\mathcal{L}}Q} \in \mathcal{P}} \mathcal{T} (\mathbb{A}, p_{U_{\mathcal{L}}X_{\mathcal{L}}Q}),$$
where 
\begin{align*}
\mathcal{T} (\mathbb{A}, p_{U_{\mathcal{L}}X_{\mathcal{L}}Q})\!\triangleq \! \{ ((R_l)_{l\in\mathcal{L}},& \Delta, (\Delta_{\mathcal{A}})_{\mathcal{A} \in \mathbb{A}})
 : \forall l \! \in \! \mathcal{L}, \forall \mathcal{A} \! \in \!\mathbb{A}, \\
  R_{l}  
& \geq I( U_{l} ; X_{l}| Q) , \\
\Delta  & \geq I( U_{\mathcal{L}} ; X_{\mathcal{L}} | FQ ) , \\
\Delta_{\mathcal{A}} & \geq I(X_{\mathcal{A}^c} ; U_{\mathcal{A}^c} | Q  )    
 \}.
\end{align*}
 \end{thm}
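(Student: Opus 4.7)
For \emph{achievability}, I would invoke Theorem~\ref{th2} with a time-sharing construction. Fix any $p_{U_{\mathcal{L}}X_{\mathcal{L}}Q}\in\mathcal{P}$. For each $q\in\mathcal{Q}$, the conditional distribution $p_{U_{\mathcal{L}}X_{\mathcal{L}}|Q=q}$ lies in $\mathcal{P}^{\mathcal{I}}$ of Theorem~\ref{th2}. Under the product law for $X_{\mathcal{L}}$ and the factorization $\prod_l p_{U_l|X_l,Q=q}$, the auxiliaries $U_l$ are mutually independent, so $I(U_{\mathcal{S}};U_{\mathcal{S}^c}|X_{\mathcal{S}})=0$ and $I(U_{\mathcal{S}};X_{\mathcal{S}}|U_{\mathcal{S}^c})=\sum_{l\in\mathcal{S}}I(U_l;X_l)$ for every $\mathcal{S}\subseteq\mathcal{L}$. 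The per-$q$ region of Theorem~\ref{th2} then reduces (by Fourier--Motzkin) to the single-user constraints $R_l\geq I(U_l;X_l|Q=q)$, and running the scheme of Theorem~\ref{th2} on successive sub-blocks indexed by $q$, while averaging the leakages over $Q$, yields $\mathcal{T}(\mathbb{A},p_{U_{\mathcal{L}}X_{\mathcal{L}}Q})$.

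For the \emph{converse}, let $T$ be uniform on $\llbracket 1,n\rrbracket$ and independent of $(X_{\mathcal{L}}^n,M_{\mathcal{L}})$; set $Q\triangleq T$, $X_l\triangleq X_{l,T}$, $F\triangleq F_T$, and identify $U_{l,i}\triangleq(M_l,X_l^{i-1})$ with $U_l\triangleq U_{l,T}$. The factorization $p_{U_{\mathcal{L}}|X_{\mathcal{L}}Q}=\prod_l p_{U_l|X_lQ}$ holds because each $M_l$ is a (possibly randomized) function of $X_l^n$ alone with independent local randomness, and the sources $X_l^n$ are independent across users, so conditioning on $(X_{\mathcal{L},i},T=i)$ leaves the $U_l$'s conditionally independent. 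The rate bound follows from
\begin{align*}
nR_l\geq H(M_l)\geq I(M_l;X_l^n)=\sum_{i=1}^n I(M_l,X_l^{i-1};X_{l,i})=nI(U_l;X_l|Q),
\end{align*}
using $X_l^{i-1}\perp X_{l,i}$ from the i.i.d.\ assumption. For the fusion-center leakage I would decompose $I(X_{\mathcal{L}}^n;M_{\mathcal{L}}|F^n)=I(X_{\mathcal{L}}^n;M_{\mathcal{L}})-nH(F)+H(F^n|M_{\mathcal{L}})$, apply Fano to get $H(F^n|M_{\mathcal{L}})\leq n\epsilon_n$, single-letterize $I(X_{\mathcal{L}}^n;M_{\mathcal{L}})=nI(X_{\mathcal{L}};U_{\mathcal{L}}|Q)$ via source independence, and combine with $H(F|Q)=H(F)$ and the Fano-induced $H(F|U_{\mathcal{L}}Q)\leq\epsilon_n$ to arrive at $I(X_{\mathcal{L}}^n;M_{\mathcal{L}}|F^n)\geq nI(U_{\mathcal{L}};X_{\mathcal{L}}|FQ)-o(n)$. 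The collusion leakage reduces, since $M_{\mathcal{A}}$ depends only on $X_{\mathcal{A}}^n$ (plus local randomness) and $X_{\mathcal{A}}^n\perp X_{\mathcal{A}^c}^n$, to $I(X_{\mathcal{A}^c}^n;M_{\mathcal{A}^c})$, which single-letterizes to $nI(X_{\mathcal{A}^c};U_{\mathcal{A}^c}|Q)$. Finally, the cardinality bounds $|\mathcal{U}_l|\leq|\mathcal{X}_l|$ and $|\mathcal{Q}|\leq L+|\mathbb{A}|+2$ follow from Fenchel--Eggleston--Carath\'eodory support-lemma arguments preserving the $L$ rate functionals, the $|\mathbb{A}|$ collusion-leakage functionals, the fusion-leakage functional, and the marginal of $X_{\mathcal{L}}$.

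The \emph{main obstacle} is enforcing $H(F|U_{\mathcal{L}}Q)=0$ exactly, since the identification only delivers $H(F|U_{\mathcal{L}}Q)\leq\epsilon_n\to 0$. Bridging this gap---either by slightly augmenting $U_{\mathcal{L}}$ so that the $F$-deficit is absorbed, or by passing to the limit through continuity/closure of $\mathcal{T}(\mathbb{A},p)$ in the joint distribution---is what tightens the converse of Theorem~\ref{th1} into the product-form region of Theorem~\ref{th3}, and is the principal care point that the independence hypothesis on $X_{\mathcal{L}}$ makes tractable (via the product factorization delivered by the independent-source-plus-local-randomness structure).
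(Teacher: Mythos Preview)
Your proposal is correct and follows essentially the same route as the paper: achievability via Theorem~\ref{th2} plus time-sharing over $Q$, and converse via the identification $U_{l,i}=(M_l,X_l^{i-1})$, $Q$ uniform on $\llbracket 1,n\rrbracket$, with the product factorization $p_{U_{\mathcal{L}}|X_{\mathcal{L}}Q}=\prod_l p_{U_l|X_lQ}$ coming from source independence. The ``obstacle'' you flag---that the converse identification only yields $H(F|U_{\mathcal{L}}Q)=o(1)$ rather than $0$---is real but is treated the same (implicit) way in the paper, via a standard limit/closure argument; this is not the substantive tightening relative to Theorem~\ref{th1}, which instead lies in the choice of per-user auxiliary $U_{l,i}=(M_l,X_l^{i-1})$ (own past only) rather than $(M_l,X_{\mathcal{L}}^{i-1})$, enabled precisely by the independence hypothesis.
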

\begin{proof}
See Appendix \ref{Secth3}.
\end{proof}
Note that the achievability part of Theorem~\ref{th3} follows similarly to that of Theorem~\ref{th1} for correlated inputs. However, the converse of Theorem~\ref{th2}, which addresses the correlated inputs case, is not readily applicable to derive the converse of Theorem~\ref{th3}. For this reason, we develop a new converse using different definitions for the auxiliary random variables.

\begin{ex}\label{ex4}
Assume $L=1$ and $\mathbb{A} = \emptyset$.  Let $\mathcal{P}$ be the set of probability distributions $p_{U_1X_1} = p_{X_1}  p_{U_1|X_1} $ over $\mathcal{U}_1 \times \mathcal{X}_1$ such that $H(F|U_1)=0$, $|\mathcal{U}_1| \leq |\mathcal{X}_1| $. Then,  the capacity region  is given by  
\begin{align*}
\mathcal{C} (\mathbb{A})
& = \textstyle\bigcup_{p_{U_1X_1} \in \mathcal{P}}    \{ (R_1,\Delta )
 :    
  R_1  
  \geq I( U_1 ; X_1) ,  \\
  & \phantom{---------llllll}
\Delta    \geq I( U_1 ; X_1 ) - I(X_1; F )  
 \}\\
 & =  \{ (R_1,\Delta )
 :    
  R_1  
  \geq I( U_1^* ; X_1) ,  \\
  & \phantom{----lllll-}
\Delta    \geq I( U_1^* ; X_1  )  -I(  X_1 ; F ) 
 \},
\end{align*}
where $I( U_1^* ; X_1  ) \triangleq \min_{p_{U_1X_1} \in \mathcal{P}} I( U_1 ; X_1)$. The achievability follows from  Theorem \ref{th3}, and the converse is obtained by defining $U_1 \triangleq (U_{1,Q},Q)$  in the converse proof of Theorem~\ref{th3}.
\end{ex}
As we can see in this example, there is a linear relationship between communication rate $R_1$ and leakage $\Delta$ as $I(X_1 ; F )$ is a constant term.

\section{Proof of Theorem \ref{thach2}} \label{Secthach2}
Without loss of generality, we assume that $l_0 \triangleq L$ in this section to simplify notation. 
\subsection{Proof overview}
For $p_{V_{\mathcal{L}^*} U_{\mathcal{L}^*}  X_{\mathcal{L}}} = p_{X_{\mathcal{L}}} \prod_{l\in\mathcal{L}^*}  p_{U_l|X_l} p_{V_l|U_l}$, we first prove in Section \ref{secreduction2}, that the achievability of $\mathcal{R} (\mathbb{A},{p_{V_{\mathcal{L}^*} U_{\mathcal{L}^*} X_{\mathcal{L}}}})$   -- that is, the achievability of any point in this region--  can be reduced to the achievability of the rate-tuple 
\begin{align}
&\mathbf{R}^{\star}(\mathbb{A},{p_{V_{\mathcal{L}^*} U_{\mathcal{L}^*} X_{\mathcal{L}}}}) \nonumber \\ &\triangleq ((I(V_{l};X_{l} | V_{1:l-1} X_{l_0})\nonumber \\ &\phantom{--} + I(U_{l};X_{l} | U_{1:l-1}V_{l:L-1} X_{l_0})  )_{l \in \mathcal{L}^*}, (\Delta_{\mathcal{A}})_{\mathcal{A} \in \mathbb{A}}), \label{eqredux2}
\end{align}
 where, for $l \in \mathcal{L}^*$, we use the notation $U_{1:l-1}  \triangleq U_{\llbracket 1, l-1 \rrbracket }$ and $V_{l:L-1} \triangleq V_{\llbracket l, L-1 \rrbracket } $ for convenience. Then, we provide a coding scheme  that relies on random binning in Section \ref{seccod2} to achieve this rate-tuple. 
 
\subsection{Reduction of the achievability of $\mathcal{R} (\mathbb{A},{p_{V_{\mathcal{L}^*} U_{\mathcal{L}^*}  X_{\mathcal{L}}}})$  to the achievability of  $\mathbf{R}^{\star}(\mathbb{A},{p_{V_{\mathcal{L}^*} U_{\mathcal{L}^*} X_{\mathcal{L}}}})$} \label{secreduction2}

We first review definitions and   properties related to  polymatroids \cite{edmonds1970submodular}.

\begin{defn}[\!\!\cite{edmonds1970submodular}]  
Let $g: 2^{\mathcal{L}^*} \to \mathbb{R}$.
\begin{enumerate} [(i)]
\item $g$ is submodular if $\forall \mathcal{S},\mathcal{T} \subseteq \mathcal{L}^*, g(\mathcal{S}\cup \mathcal{T}) +  g(\mathcal{S}\cap \mathcal{T}) \leq g(\mathcal{S})+ g(\mathcal{T})$.
\item $g$ is supermodular if $-g$ is submodular.
\end{enumerate}
\end{defn}

\begin{defn}[\!\!\cite{edmonds1970submodular}] \label{defcp}
Let $g: 2^{\mathcal{L}^*} \to \mathbb{R}$. $
\mathcal{P} (g) \triangleq \left\{ (R_{l})_{l \in \mathcal{L}^*} \in \mathbb{R}_+^{L-1} :  R_{\mathcal{S}} \geq g(\mathcal{S}) , \forall \mathcal{S} \subseteq \mathcal{L}^* \right\}
$
associated with the function $g$, is a contrapolymatroid if 
\begin{enumerate} [(i)]
\item $g$  is normalized, i.e., $g(\emptyset) =0$,
\item $g$  is non-decreasing, i.e., $\forall \mathcal{S}_1,\mathcal{S}_2 \subseteq \mathcal{L}^*, \mathcal{S}_1\subseteq\mathcal{S}_2 \implies g(\mathcal{S}_1) \leq g(\mathcal{S}_2)$,
\item $g$  is supermodular.
\end{enumerate}
\end{defn}

\begin{lem}[\!\!{\cite{edmonds1970submodular}}] \label{lempm}
Suppose $\mathcal{P} (g)$
associated with the function $g: 2^{\mathcal{L}^*} \to \mathbb{R}$  is a contrapolymatroid.
\begin{enumerate}[(i)]
\item  Any point in $\mathcal{P} (g)$ is dominated  by a point in the dominant face $$
\mathcal{D} (g) \triangleq \left\{ (R_{l})_{l \in \mathcal{L}^*} \in \mathcal{P} (g):  R_{\mathcal{L}^*} = g(\mathcal{L}^*)   \right\}.$$
\item By denoting the symmetric group on $\mathcal{L}^*$ by $\text{Sym}(L-1)$, we have
\begin{align*}
&\mathcal{D} (g_{p_{V_{\mathcal{L}^*} U_{\mathcal{L}^*}  X_{\mathcal{L}}}}) 
\\&\phantom{--} =   \text{Conv} \left(
  \left\{ (C_{\pi(l)})_{l \in \mathcal{L}^*} :\pi \in \text{Sym}(L-1) \right\} \right), 
\end{align*}
where for any $\pi \in \text{Sym}(L-1)$,  $l \in \mathcal{L}^*$,
$$C_{\pi(l)} \triangleq  g\left( \pi (l:L-1) \right) - g \left( \pi(l+1:L-1) \right), $$
with the notation $\pi (i:j) \triangleq \{ \pi(k) : k \in \llbracket i,j \rrbracket ) \}$ for $i,j \in \mathcal{L}^*$.
\end{enumerate}
\end{lem}
We now show the reduction of the achievability of $\mathcal{R} (\mathbb{A},{p_{V_{\mathcal{L}^*} U_{\mathcal{L}^*}  X_{\mathcal{L}}}})$  to the achievability of  $\mathbf{R}^{\star}(\mathbb{A},{p_{V_{\mathcal{L}^*} U_{\mathcal{L}^*} X_{\mathcal{L}}}})$ through a series of lemmas.

\begin{lem} \label{lem1}
Fix $p_{V_{\mathcal{L}^*} U_{\mathcal{L}^*}  X_{\mathcal{L}}} = p_{X_{\mathcal{L}}} \prod_{l\in\mathcal{L}^*}  p_{U_l|X_l} p_{V_l|U_l}$. Then, the set function $g_{p_{V_{\mathcal{L}^*} U_{\mathcal{L}^*}  X_{\mathcal{L}}}}$ is normalized, non-decreasing, and supermodular, where
\begin{align*}
g_{p_{V_{\mathcal{L}^*} U_{\mathcal{L}^*}  X_{\mathcal{L}}}} : 2^{\mathcal{L}^*} &\to \mathbb{R}\\
 \mathcal{S} & \mapsto I(V_{\mathcal{S}};X_{\mathcal{L}} | V_{\mathcal{S}^c} X_{l_0}) \\ & \phantom{--} +  I(U_{\mathcal{S}};X_{\mathcal{L}} | U_{\mathcal{S}^c} V_{\mathcal{S}} X_{l_0}).
\end{align*}
\end{lem}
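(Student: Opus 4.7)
The plan is to verify each of the three properties of Definition \ref{defcp} in turn, with supermodularity being the main technical step.

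Normalization is immediate since $g_{p_{U_{\mathcal{L}}X_{\mathcal{L}}}}(\emptyset) = I(\emptyset; X_{\mathcal{L}} | U_{\mathcal{L}}) = 0$. For monotonicity, I would fix $\mathcal{S} \subseteq \mathcal{T}$ and apply the chain rule: writing $U_{\mathcal{T}} = (U_{\mathcal{S}}, U_{\mathcal{T} \setminus \mathcal{S}})$ and $U_{\mathcal{S}^c} = (U_{\mathcal{T}^c}, U_{\mathcal{T} \setminus \mathcal{S}})$ produces $g(\mathcal{T}) = I(U_{\mathcal{T} \setminus \mathcal{S}}; X_{\mathcal{L}} | U_{\mathcal{T}^c}) + g(\mathcal{S}) \geq g(\mathcal{S})$, since the first term is non-negative.

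For supermodularity, the key observation is that the factorization $p_{U_{\mathcal{L}} X_{\mathcal{L}}} = p_{X_{\mathcal{L}}} \prod_{l \in \mathcal{L}} p_{U_l | X_l}$ forces the $U_l$'s to be mutually conditionally independent given $X_{\mathcal{L}}$. This implies $H(U_{\mathcal{S}} | X_{\mathcal{L}} U_{\mathcal{S}^c}) = H(U_{\mathcal{S}} | X_{\mathcal{L}}) = \sum_{l \in \mathcal{S}} H(U_l | X_l)$, so I would rewrite
\[
g(\mathcal{S}) = H(U_{\mathcal{S}} | U_{\mathcal{S}^c}) - H(U_{\mathcal{S}} | X_{\mathcal{L}} U_{\mathcal{S}^c}) = H(U_{\mathcal{L}}) - H(U_{\mathcal{S}^c}) - \sum_{l \in \mathcal{S}} H(U_l | X_l).
\]
This decomposes $g$ into a constant, a modular (hence both sub- and super-modular) term, and $-H(U_{\mathcal{S}^c})$. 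Supermodularity of $g$ then reduces to submodularity of the set function $\mathcal{S} \mapsto H(U_{\mathcal{S}^c})$.

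For the last step I would invoke the classical submodularity of joint entropy, namely $H(U_{\mathcal{A}}) + H(U_{\mathcal{B}}) \geq H(U_{\mathcal{A} \cup \mathcal{B}}) + H(U_{\mathcal{A} \cap \mathcal{B}})$, which is just $I(U_{\mathcal{A} \setminus \mathcal{B}}; U_{\mathcal{B} \setminus \mathcal{A}} | U_{\mathcal{A} \cap \mathcal{B}}) \geq 0$. Specializing to $\mathcal{A} = \mathcal{S}^c$, $\mathcal{B} = \mathcal{T}^c$ and using $\mathcal{A} \cup \mathcal{B} = (\mathcal{S} \cap \mathcal{T})^c$ and $\mathcal{A} \cap \mathcal{B} = (\mathcal{S} \cup \mathcal{T})^c$ yields $H(U_{(\mathcal{S} \cap \mathcal{T})^c}) + H(U_{(\mathcal{S} \cup \mathcal{T})^c}) \leq H(U_{\mathcal{S}^c}) + H(U_{\mathcal{T}^c})$, which is exactly what is needed. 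The main subtlety is pinpointing where the factorization hypothesis enters: without it, $H(U_{\mathcal{S}} | X_{\mathcal{L}} U_{\mathcal{S}^c})$ need not be modular in $\mathcal{S}$, and the decomposition above would fail.
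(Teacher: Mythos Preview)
Your proof is correct. The normalization and monotonicity arguments are identical to the paper's. For supermodularity, both you and the paper use the same high-level reduction: write $g(\mathcal{S})$ as a constant minus a submodular function evaluated at $\mathcal{S}^c$. The paper takes that function to be $h(\mathcal{S})=I(U_{\mathcal{S}};X_{\mathcal{L}})$ and proves its submodularity by a direct chain-rule computation, invoking the Markov chains $U_{\mathcal{T}\setminus\mathcal{S}}-X_{\mathcal{T}\setminus\mathcal{S}}-(U_{\mathcal{S}},X_{\mathcal{L}\setminus(\mathcal{T}\setminus\mathcal{S})})$ that come from the factorization. You instead push one step further into entropies, use the factorization only to make $H(U_{\mathcal{S}}\mid X_{\mathcal{L}})=\sum_{l\in\mathcal{S}}H(U_l\mid X_l)$ modular, and then finish with the classical submodularity of joint entropy, which requires no hypothesis at all. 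Your route is slightly more elementary and isolates more cleanly the single place where the product structure is needed; the paper's route stays at the level of mutual informations and avoids introducing the auxiliary entropy decomposition. Either way the content is the same.
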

\begin{proof} See Appendix \ref{app_lem}.\end{proof}

\begin{lem} \label{lem2}
Fix $p_{V_{\mathcal{L}^*} U_{\mathcal{L}^*}  X_{\mathcal{L}}} = p_{X_{\mathcal{L}}} \prod_{l\in\mathcal{L}^*}  p_{U_l|X_l} p_{V_l|U_l}$. Then, 
\begin{enumerate}[(i)]
\item \begin{align*}
&\!\!\!\!\!\!\mathcal{P} (g_{p_{V_{\mathcal{L}^*} U_{\mathcal{L}^*}  X_{\mathcal{L}}}}) \\
&\!\!\!\!\!\!\triangleq \left\{ (R_{l})_{l \in \mathcal{L}^*} \in \mathbb{R}_+^{L-1} :  R_{\mathcal{S}} \geq g_{p_{V_{\mathcal{L}^*} U_{\mathcal{L}^*}  X_{\mathcal{L}}}}(\mathcal{S}) , \forall \mathcal{S} \subseteq \mathcal{L}^* \right\}
\end{align*}
associated with the function $g_{p_{V_{\mathcal{L}^*} U_{\mathcal{L}^*}  X_{\mathcal{L}}}}$ defined in Lemma \ref{lem1}, is a contrapolymatroid.
\item  Any point in $\mathcal{P} (g_{p_{V_{\mathcal{L}^*} U_{\mathcal{L}^*}  X_{\mathcal{L}}}})$ is dominated  by a point in the dominant face $\mathcal{D} (g_{p_{V_{\mathcal{L}^*} U_{\mathcal{L}^*}  X_{\mathcal{L}}}})$, where
\begin{align*}
&\!\!\!\mathcal{D} (g_{p_{V_{\mathcal{L}^*} U_{\mathcal{L}^*}  X_{\mathcal{L}}}}) \\
 & \!\! \triangleq \left\{ \!(R_{l})_{l \in \mathcal{L}^*} \!\in
 \!\mathcal{P} (g_{p_{V_{\mathcal{L}^*} U_{\mathcal{L}^*}  X_{\mathcal{L}}}}\!)\!:\!  R_{\mathcal{L}^*}\!\! = g_{p_{V_{\mathcal{L}^*} U_{\mathcal{L}^*}  X_{\mathcal{L}}}}\!(\mathcal{L}^*) \!  \right\}\!.
\end{align*}
\item By denoting the symmetric group on $\mathcal{L}^*$ by $\text{Sym}(L-1)$, the dominant face has the following characterization:
\begin{align}
&\mathcal{D} (g_{p_{V_{\mathcal{L}^*} U_{\mathcal{L}^*}  X_{\mathcal{L}}}}) \nonumber\\
 & \phantom{--}=   \text{Conv} \left(
  \left\{ (C_{\pi(l)})_{l \in \mathcal{L}^*} :\pi \in \text{Sym}(L-1) \right\} \right), \label{eqdom}
\end{align}
where for any $\pi \in \text{Sym}(L-1)$,  $l \in \mathcal{L}^*$,
\begin{align*}C_{\pi(l)} &\triangleq I(V_{\pi(l)};X_{\pi(l)} | V_{\pi(1:l-1)}X_{l_0})\\
&\phantom{--}+I(U_{\pi(l)};X_{\pi(l)} | U_{\pi(1:l-1)}V_{\pi(l:L-1)}X_{l_0}) , \end{align*}
with the notation $\pi (i:j) \triangleq \{ \pi(k) : k \in \llbracket i,j \rrbracket ) \}$ for $i,j \in \mathcal{L}^*$.
\end{enumerate}
\end{lem}

\begin{proof}
See Appendix \ref{app_lem}.
\end{proof}

\begin{lem} \label{lem3}
Fix $p_{V_{\mathcal{L}^*} U_{\mathcal{L}^*}  X_{\mathcal{L}}} = p_{X_{\mathcal{L}}} \prod_{l\in\mathcal{L}^*}  p_{U_l|X_l} p_{V_l|U_l}$ and define
\begin{align*}
&\mathcal{R} (p_{V_{\mathcal{L}^*} U_{\mathcal{L}^*}  X_{\mathcal{L}}} ) \\
& \triangleq \left\{ (R_{l})_{l \in \mathcal{L}^*} \in \mathbb{R}_+^{L-1} :  R_{\mathcal{S}} \geq \bar{g}_{p_{V_{\mathcal{L}^*} U_{\mathcal{L}^*}  X_{\mathcal{L}}} }(\mathcal{S}) , \forall \mathcal{S} \subseteq \mathcal{L}^* \right\},
\end{align*}
where 
\begin{align*}
\bar{g}_{p_{V_{\mathcal{L}^*} U_{\mathcal{L}^*}  X_{\mathcal{L}}} } : 2^{\mathcal{L}^*} &\to \mathbb{R}\\
 \mathcal{S} & \mapsto I( V_{\mathcal{S}} ; X_{\mathcal{S}} | V_{\mathcal{S}^c}X_{l_0}) \\
 & \phantom{-ll}- I(V_{\mathcal{S}};V_{\mathcal{S}^c}|X_{\mathcal{S}}X_{l_0})  \\
 & \phantom{-ll}+ I(U_{\mathcal{S}};X_{\mathcal{S}}|U_{\mathcal{S}^c} V_{\mathcal{S}}X_{l_0}) \\
 & \phantom{-ll}- I(U_{\mathcal{S}};U_{\mathcal{S}^c}|X_{\mathcal{S}} V_{\mathcal{S}}X_{l_0}) .
\end{align*}
Then,
$$
\mathcal{R} ({p_{V_{\mathcal{L}^*} U_{\mathcal{L}^*}  X_{\mathcal{L}}} }) = \mathcal{P} (g_{p_{V_{\mathcal{L}^*} U_{\mathcal{L}^*}  X_{\mathcal{L}}} }) 
$$
\end{lem}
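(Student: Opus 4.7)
My plan is to establish the claimed set identity by proving the much stronger pointwise equality of the two set functions that define the regions, namely $g_{p_{U_{\mathcal{L}}X_{\mathcal{L}}}}(\mathcal{S}) = \bar{g}_{p_{U_{\mathcal{L}}X_{\mathcal{L}}}}(\mathcal{S})$ for every $\mathcal{S} \subseteq \mathcal{L}$. Once this pointwise identity is in hand, the sets $\mathcal{P}(g_{p_{U_{\mathcal{L}}X_{\mathcal{L}}}})$ and $\mathcal{R}(p_{U_{\mathcal{L}}X_{\mathcal{L}}})$ coincide immediately, since they are defined by exactly the same system of inequalities $R_{\mathcal{S}} \geq g(\mathcal{S})$ indexed by $\mathcal{S}$.

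The engine of the argument is the factorization hypothesis $p_{U_{\mathcal{L}}X_{\mathcal{L}}} = p_{X_{\mathcal{L}}} \prod_{l\in\mathcal{L}} p_{U_l|X_l}$, which is strictly stronger than the Markov condition $U_{\mathcal{S}} - X_{\mathcal{S}} - X_{\mathcal{L}}$ used in Theorem~\ref{th1}. First, I would spell out that this factorization implies, for every $\mathcal{S} \subseteq \mathcal{L}$, the Markov chain $U_{\mathcal{S}} - X_{\mathcal{S}} - (X_{\mathcal{S}^c}, U_{\mathcal{S}^c})$; indeed, conditionally on $X_{\mathcal{L}}$ the variables $(U_l)_{l\in\mathcal{L}}$ are mutually independent with $U_l$ depending only on $X_l$, so marginalizing $X_{\mathcal{S}^c}$ out leaves $U_{\mathcal{S}}$ depending on $(X_{\mathcal{L}},U_{\mathcal{S}^c})$ only through $X_{\mathcal{S}}$.

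Using this single Markov chain, I would extract the two vanishing mutual informations
\[
I(U_{\mathcal{S}}; U_{\mathcal{S}^c} \mid X_{\mathcal{S}}) = 0, \qquad I(U_{\mathcal{S}}; X_{\mathcal{S}^c} \mid X_{\mathcal{S}} U_{\mathcal{S}^c}) = 0,
\]
and then write down the two-line chain-rule identity
\[
g_{p_{U_{\mathcal{L}}X_{\mathcal{L}}}}(\mathcal{S}) = I(U_{\mathcal{S}}; X_{\mathcal{L}} \mid U_{\mathcal{S}^c}) = I(U_{\mathcal{S}}; X_{\mathcal{S}} \mid U_{\mathcal{S}^c}) + I(U_{\mathcal{S}}; X_{\mathcal{S}^c} \mid X_{\mathcal{S}} U_{\mathcal{S}^c}) = I(U_{\mathcal{S}}; X_{\mathcal{S}} \mid U_{\mathcal{S}^c}),
\]
together with $\bar{g}_{p_{U_{\mathcal{L}}X_{\mathcal{L}}}}(\mathcal{S}) = I(U_{\mathcal{S}}; X_{\mathcal{S}} \mid U_{\mathcal{S}^c}) - I(U_{\mathcal{S}}; U_{\mathcal{S}^c} \mid X_{\mathcal{S}}) = I(U_{\mathcal{S}}; X_{\mathcal{S}} \mid U_{\mathcal{S}^c})$. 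Comparing these two expressions yields $g_{p_{U_{\mathcal{L}}X_{\mathcal{L}}}} = \bar{g}_{p_{U_{\mathcal{L}}X_{\mathcal{L}}}}$ and hence the desired set equality.

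I do not anticipate any real obstacle: the entire content of the lemma is to recognize that under the imposed product structure the correction term $I(U_{\mathcal{S}};U_{\mathcal{S}^c}|X_{\mathcal{S}})$ in $\bar{g}$ is exactly the amount by which $I(U_{\mathcal{S}};X_{\mathcal{S}}|U_{\mathcal{S}^c})$ falls short of the ``converse'' form $I(U_{\mathcal{S}};X_{\mathcal{L}}|U_{\mathcal{S}^c})$, and both vanish. The only thing to be mildly careful about is invoking the correct Markov chain; the factorized form $p_{X_{\mathcal{L}}}\prod_l p_{U_l|X_l}$ is essential, and the weaker assumption $U_{\mathcal{S}}-X_{\mathcal{S}}-X_{\mathcal{L}}$ from Theorem~\ref{th1} would not by itself deliver $I(U_{\mathcal{S}};U_{\mathcal{S}^c}|X_{\mathcal{S}})=0$, which is precisely why the gap between Theorems~\ref{th1} and~\ref{th2} flagged after Theorem~\ref{th2} is not closed here.
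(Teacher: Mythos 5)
Your proposal is correct and matches the paper's proof: both reduce the lemma to the pointwise identity $g_{p_{U_{\mathcal{L}}X_{\mathcal{L}}}}(\mathcal{S})=\bar{g}_{p_{U_{\mathcal{L}}X_{\mathcal{L}}}}(\mathcal{S})$, expand $I(U_{\mathcal{S}};X_{\mathcal{L}}|U_{\mathcal{S}^c})$ by the chain rule, and kill the terms $I(U_{\mathcal{S}};X_{\mathcal{S}^c}|X_{\mathcal{S}}U_{\mathcal{S}^c})$ and $I(U_{\mathcal{S}};U_{\mathcal{S}^c}|X_{\mathcal{S}})$ via Markov chains implied by the product structure. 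The only cosmetic difference is that you derive both vanishing terms from the single chain $U_{\mathcal{S}}-X_{\mathcal{S}}-(X_{\mathcal{S}^c},U_{\mathcal{S}^c})$, whereas the paper cites the two weaker chains separately.
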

\begin{proof}  See Appendix \ref{app_lem}.\end{proof}

\begin{lem}
Suppose that there exists a sequence of computation schemes that achieve  the rate-tuple $\mathbf{R}^{\star}(\mathbb{A},{p_{V_{\mathcal{L}^*} U_{\mathcal{L}^*} X_{\mathcal{L}}}})$ defined in \eqref{eqredux2}. Then, there exists a sequence of computation schemes that achieve any point in   $\mathcal{R} (\mathbb{A},{p_{V_{\mathcal{L}^*} U_{\mathcal{L}^*} X_{\mathcal{L}}} })$.
\end{lem}
\begin{proof}
 By Lemma \ref{lem3}, to prove the achievability of $\mathcal{R} (\mathbb{A},{p_{V_{\mathcal{L}^*} U_{\mathcal{L}^*} X_{\mathcal{L}}} })$, it is sufficient to prove the achievability of $\mathcal{P} (g_{p_{V_{\mathcal{L}^*} U_{\mathcal{L}^*} X_{\mathcal{L}}} })$. By Lemma \ref{lem2}, to prove the achievability of $\mathcal{P} (g_{p_{V_{\mathcal{L}^*} U_{\mathcal{L}^*} X_{\mathcal{L}}} })$, it is sufficient to prove the achievability of the corner point $( I(V_{l};X_{l} | V_{1:l-1} X_{l_0}) + I(U_{l};X_{l} | U_{1:l-1}V_{l:L-1} X_{l_0}) )_{l \in \mathcal{L}^*}$, as the other corner points can be achieved similarly by relabelling the users and any point of the dominant face can be achieved by time-sharing between the corner points $\left\{ (C_{\pi(l)})_{l \in \mathcal{L}^*} :\pi \in \text{Sym}(L-1) \right\}$. Hence, since the constraints in \eqref{eqle} and \eqref{eql2a} are preserved under time sharing, the achievability of $\mathbf{R}^{\star}(\mathbb{A},{p_{V_{\mathcal{L}^*} U_{\mathcal{L}^*} X_{\mathcal{L}}}})$ implies the achievability of $\mathcal{R} (\mathbb{A},{p_{V_{\mathcal{L}^*} U_{\mathcal{L}^*} X_{\mathcal{L}}} })$.
\end{proof}

\subsection{Achieving $\mathbf{R}^{\star}(\mathbb{A},{p_{V_{\mathcal{L}^*} U_{\mathcal{L}^*} X_{\mathcal{L}}}})$} \label{seccod2} 

\subsubsection{Review of random binning and coding strategy overview}

We adopt a random binning approach, as it is convenient to precisely control the distribution of the random variables induced by our coding scheme to study the leakage~conditions. We will use the following results from \cite{yassaee2014achievability}. Specifically, Lemma \ref{lembinindep} provides a rate condition on the binning size required to perform privacy amplification \cite{bennett2002generalized}, Lemma \ref{lemSW} provides a rate condition on the binning size required to perform source coding with side information, and Lemma \ref{lemel} is key to showing that if a constraint holds on average over random binning choices, then there exists a specific binning that satisfies the constraint asymptotically.
\begin{lem}[{\cite[Th. 1]{yassaee2014achievability}}]\label{lembinindep}
Consider the joint distribution $p_{XZ}$ defined over the finite alphabet $ \mathcal{X} \times \mathcal{Z}$.  
Define a random binning, i.e., a  mapping 
$
{B} : \mathcal{X}^n \rightarrow \llbracket 1, 2^{nR} \rrbracket$ that  maps each sequence in $\mathcal{X}^n$ uniformly and independently to the set $\llbracket 1, 2^{nR} \rrbracket$. If $R < H(X| Z)$,
then $ \lim_{n\to \infty}
\mathbb{E} \left| p_{Z^n B(X^n)} - p_{Z^n}  p^{\textup{unif}} \right| = 0,
$ where the expectation is over the random binning choice and $p^{\textup{unif}}$ is the uniform distribution over $\llbracket 1, 2^{nR} \rrbracket$. 
\end{lem}

\begin{lem}[{\cite[Lemma 1]{yassaee2014achievability}}]\label{lemSW}
Consider a joint distribution $p_{XZ}$ and corresponding random binning $B$ as in Lemma \ref{lembinindep}, and consider the problem of reconstructing $X^n$ from $(B(X^n), Z^n)$. To this end, define a decoder $p^{\text{SW}}_{\hat{X} | Z^n B}$ similar to the achievability proof of Slepian-Wolf coding in \cite[Sec. 10.3.2]{el2011network}.  If $R > H(X| Z)$,
then $ \lim_{n\to \infty}
\mathbb{E} \mathbb{P}\left[ \hat{X}^n = X^n \right] = 0,
$ where the expectation is over the random binning choice.
\end{lem}

\begin{lem}[{\cite[Lemma 3]{yassaee2014achievability}}]\label{lemel}
Consider the joint distributions $p_{XY}$ and $q_{XY}$ defined over the finite alphabet $ \mathcal{X} \times \mathcal{Y}$. If $\mathbb{V}(p_{XY},q_{XY}) < \epsilon$, then there exists $x\in \mathcal{X}$ such that $\mathbb{V}(p_{Y|X=x},q_{Y|X=x}) < 2\epsilon$.
\end{lem}

 An overview of our coding strategy is as follows. First, we perform random binning of the sequences $U_{\mathcal{L}^*}^n$ and $V_{\mathcal{L}^*}^n$, generating associated bin indices $((M_l^V,J_l^V,M_l^U,J_l^U)_{l\in \mathcal{L}^*})$, and consider the joint distribution $p$ induced by this random binning. Based on this setup, we then construct a coding scheme satisfying the following properties:  
(i) the distribution $\tilde{p}$ induced by the coding scheme closely approximates $p$, provided the binning rates satisfy the conditions of Lemma~\ref{lembinindep},   
(ii) if each user $l\in \mathcal{L}^*$ transmits their bin index $M_l^V$, then the receiver can recover $V_{\mathcal{L}^*}^n$, assuming   $(J_l^V)_{l\in \mathcal{L}^*}$ is known at the receiver, and the binning rates meet the requirements of Lemma~\ref{lemSW}, (iii)  if each user $l\in \mathcal{L}^*$ transmits their bin index $M_l^U$, the receiver can recover $U_{\mathcal{L}^*}^n$, assuming now that $(J_l^U)_{l\in \mathcal{L}^*}$ and $V_{\mathcal{L}^*}^n$ are known at the receiver, and the binning rates meet the requirements of Lemma \ref{lemSW}. By construction of $U_{\mathcal{L}^*}^n$, the receiver can then estimate $F^n$ from $(U_{\mathcal{L}^*}^n,X^n_{l_0})$.
Finally, by Lemma \ref{lemel}, all three properties  (i), (ii), and (iii) continue to hold for a fixed choice of  $(J_l^V,J_l^U)_{l\in \mathcal{L}^*}$, which removes the need for shared randomness. The analysis of privacy leakage then leverages both the approximation of $p$ by $\tilde{p}$ and the properties of $p$.

\subsubsection{Coding scheme}
Fix $p_{V_{\mathcal{L}^*} U_{\mathcal{L}^*}  X_{\mathcal{L}}} = p_{X_{\mathcal{L}}} \prod_{l\in\mathcal{L}^*}  p_{U_l|X_l} p_{V_l|U_l}$ such that $H(F|U_{\mathcal{L}^*}X_{l_0})=0$, and consider $(V^n_{\mathcal{L}^*}, U^n_{\mathcal{L}^*}  ,X^n_{\mathcal{L}})$ independently and identically distributed according to $p_{V_{\mathcal{L}^*} U_{\mathcal{L}^*}  X_{\mathcal{L}}}$.

\textbf{Random binning}: 
\begin{itemize}
    \item 
For $l\in\mathcal{L}^*$, to each $v^n_l$ assign uniformly and independently two random bin indices $m_l^V \in \llbracket 1,2^{nR_l^V} \rrbracket$ and $j_l^V\in \llbracket 1,2^{n\tilde{R}_l^V} \rrbracket$. 
\item To each pair $(v^n_l,u^n_l)$ assign uniformly and independently two random bin indices $m_l^U \in \llbracket 1,2^{nR_l^U} \rrbracket$ and $j_l^U\in \llbracket 1,2^{n\tilde{R}_l^U} \rrbracket$. 
\item The distribution induced by this random binning is \begin{align*}&p_{V^n_{\mathcal{L}^*}U^n_{\mathcal{L}^*}X^n_{\mathcal{L}}M^V_{\mathcal{L}^*}M^U_{\mathcal{L}^*}J^V_{\mathcal{L}^*}J^U_{\mathcal{L}^*}} \\& \phantom{--}\triangleq p_{V^n_{\mathcal{L}^*} U^n_{\mathcal{L}^*}  X^n_{\mathcal{L}}} \prod_{l\in\mathcal{L}^*}p_{M_l^VJ_l^V|V^n_l}p_{M_l^UJ_l^U|V^n_lU^n_l}.\end{align*}
\end{itemize}

\textbf{Common-randomness-aided coding scheme}: 
\begin{itemize}
\item Assume that all the users have access to the common randomness $(J_l^V,J_l^U)_{l\in\mathcal{L}^*}$ uniformly distributed over $\bigtimes_{l\in\mathcal{L}^*} \llbracket 1, 2^{n\tilde{R}^V_l} \rrbracket \times \llbracket 1, 2^{n\tilde{R}^U_l} \rrbracket$.
\item User $l\in\mathcal{L}^*$ generates $V_l^n$ according to $p_{V^n_l|J_l^VX^n_l}$ and $U_l^n$ according to $ p_{U^n_l|J_l^UX^n_lV^n_l}$, then transmits $M^V_l$ generated according to $p_{M_l^V|V^n_l}$ and $M^U_l$ generated according to $p_{M_l^U|V^n_lU^n_l}$. 
\item By choosing for $l\in\mathcal{L}^*$, $\tilde{R}^V_l + R^V_l = H(V_l|V_{1:l-1}X_{l_0})+\epsilon/4$ and $\tilde{R}^U_l + R^U_l = H(U_l|U_{1:l-1}V_{\mathcal{L}^*}X_{l_0})+\epsilon/4$,  User $l_0$ can reconstruct with vanishing probability of error first $V^n_{\mathcal{L}^*}$ from $((M_l^V,J_l^V)_{l\in \mathcal{L}^*},X_{l_0}^n)$, and then $U^n_{\mathcal{L}^*}$ from $((M_l^U,J_l^U)_{l\in \mathcal{L}^*},V^n_{\mathcal{L}^*},X_{l_0}^n)$ with Lemma~\ref{lemSW}, i.e.,  source coding with side information.
\item Note that since $H(F|U_{\mathcal{L}^*}X_{l_0})=0$, there exists a deterministic function $\tilde{F}$ such that $\tilde{F}(u_{\mathcal{L}^*},x_{l_0})=F(x_{\mathcal{L}})$, for all $(u_{\mathcal{L}^*},x_{\mathcal{L}}) $ such that $p(u_{\mathcal{L}^*},x_{\mathcal{L}})>0$. Then,  User $l_0$ computes $\widehat{F}^n \triangleq \tilde{F}(\hat{U}_{\mathcal{L}^*}^n,X^n_{l_0})$. Since $(u^n_{\mathcal{L}^*},x^n_{\mathcal{L}})$ being typical sequences,  implies $ p((u_{\mathcal{L}^*})_i,(x_{\mathcal{L}})_i) \neq 0 ,\forall i \in \llbracket 1 , n\rrbracket$, we also have $\mathbb{P}[\widehat{F}^n \neq F^n] \xrightarrow{n \to \infty}0$. 
\item The distribution induced by the coding scheme is 
\begin{align*}
&\tilde{p}_{V^n_{\mathcal{L}^*}U^n_{\mathcal{L}^*}X^n_{\mathcal{L}}M^V_{\mathcal{L}^*}M^U_{\mathcal{L}^*}J^V_{\mathcal{L}^*}J^U_{\mathcal{L}^*}} \\
&\triangleq p_{ X^n_{\mathcal{L}}} \textstyle\prod_{l\in\mathcal{L}^*}  p_{M_l^V|V^n_l}p^{\text{unif}}_{J_l^V}p_{V^n_l|J_l^VX^n_l} \\
& \phantom{--} \times p_{M_l^U|V^n_lU^n_l}p^{\text{unif}}_{J_l^U} p_{U^n_l|J_l^UX^n_lV^n_l},\end{align*}
with $p^{\text{unif}}_{J_l^V}$ and $p^{\text{unif}}_{J_l^U}$ the uniform distributions over $\llbracket 1,2^{n\tilde{R}_l^V} \rrbracket$ and $\llbracket 1,2^{n\tilde{R}_l^U} \rrbracket$, $l\in\mathcal{L}^*$, respectively.

Note also that we have
\begin{align*}
&p_{V^n_{\mathcal{L}^*}U^n_{\mathcal{L}^*}X^n_{\mathcal{L}}M^V_{\mathcal{L}^*}M^U_{\mathcal{L}^*}J^V_{\mathcal{L}^*}J^U_{\mathcal{L}^*}}\\
& = p_{ X^n_{\mathcal{L}}} \textstyle\prod_{l\in\mathcal{L}^*}p_{U^n_l|X^n_l} p_{V^n_l|U^n_l} p_{M_l^VJ_l^V|V^n_l}p_{M_l^UJ_l^U|V^n_lU^n_l} \\
& = p_{ X^n_{\mathcal{L}}} \textstyle\prod_{l\in\mathcal{L}^*}p_{V^n_l|X^n_l} p_{U^n_l|V^n_lX^n_l} p_{M_l^V|V^n_l}p_{J_l^V|M_l^VV^n_l}\\
& \phantom{--------} \times p_{M_l^U|V^n_lU^n_l}p_{J_l^U|M_l^UV^n_lU^n_l} \\
& = p_{ X^n_{\mathcal{L}}} \textstyle\prod_{l\in\mathcal{L}^*}  p_{M_l^V|V^n_l}p_{J_l^VV^n_l|X^n_l}\\
& \phantom{--------} \times p_{M_l^U|V^n_lU^n_l}p_{J_l^UU^n_l|X^n_lV^n_l} \\
& = p_{ X^n_{\mathcal{L}}} \textstyle\prod_{l\in\mathcal{L}^*}  p_{M_l^V|V^n_l}p_{J_l^V|X^n_l}p_{V^n_l|J_l^VX^n_l}\\
& \phantom{--------} \times p_{M_l^U|V^n_lU^n_l}p_{J_l^U|X^n_lV^n_l} p_{U^n_l|J_l^UX^n_lV^n_l} ,
\end{align*}
such that by Lemma \ref{lembinindep} with $\tilde{R}^V_l = H(V_l|X_l) - \epsilon/4$ and $\tilde{R}^U_l = H(U_l|X_l V_l) - \epsilon/4$, $l\in\mathcal{L}^*$, $\epsilon >0$,   we have \begin{align} &\!\!\!\!\!\!\!\!\!\!\mathbb{E}[\mathbb{V}(p_{V^n_{\mathcal{L}^*}U^n_{\mathcal{L}^*}X^n_{\mathcal{L}}M^V_{\mathcal{L}^*}M^U_{\mathcal{L}^*}J^V_{\mathcal{L}^*}J^U_{\mathcal{L}^*}},\tilde{p}_{V^n_{\mathcal{L}^*}U^n_{\mathcal{L}^*}X^n_{\mathcal{L}}M^V_{\mathcal{L}^*}M^U_{\mathcal{L}^*}J^V_{\mathcal{L}^*}J^U_{\mathcal{L}^*}} \!)] \nonumber \\
& \xrightarrow{n\to \infty} 0,\label{eqind}\end{align} where the average is over the random binning choice. 
\end{itemize}

\textbf{Derandomization}: Finally, since the reliability constraint and the  distribution
induced by the coding scheme satisfies \eqref{eqind} on average  over the random binning choice, by Lemma \ref{lemel}, we conclude that there exist a specific binning and realizations of $(J^V_l,J^U_l)_{l\in\mathcal{L}^*}$ that preserve the reliability constraint and the constraint \eqref{eqind}. 

The analysis of the coding scheme is provided in Appendix~\ref{appanalysis}.

\section{Concluding Remarks} \label{secconcl}
We considered a function computation setting among multiple users where only a public and noiseless broadcast channel is available to the users. Our setting contrasts with function computation settings that do not consider information leakages, or make the assumption that secure communication links are available among users. We focused on studying optimal communication  and information leakage rates on the private user data for two models. In the first one, a fusion center needs to compute a function of the private user data. In the second one, there is no fusion center and a specific user must compute a function of the private data of all the users, including theirs.
For both settings, we derived a capacity region when the data of the users is independent. Additionally, we derived inner and outer regions for the capacity region for both settings when the data of the users is correlated. A possible extension of our model is to consider each user’s source as having private and public components, where privacy must be preserved only for the private component, similar to the setting studied in \cite{vamoua}.

\appendices

 \section{Proof of Theorem \ref{th5}} \label{Secth5}

Fix $\mathcal{A} \in \mathbb{A}$.  For $l\in\mathcal{L}$ and $j \in \llbracket 1, n \rrbracket$, we write $X_{l}^j \triangleq (X_{l,i})_{i\in \llbracket 1, j \rrbracket}$. And, for $\mathcal{S} \subseteq \mathcal{L}$, we define $(X_{\mathcal{S}})_i \triangleq (X_{l,i})_{l\in\mathcal{S}}$ and $(X_{\mathcal{S}})^{i-1} \triangleq (X_{l}^{i-1})_{l\in\mathcal{S}}$.  Then, we have 
\begin{align*}
&o(n)\\
& = H(F^n | X^n_{l_0}M_{\mathcal{L}^*})\\
& = \sum_{i=1}^n H(F_i | F^{i-1} X^n_{l_0}M_{\mathcal{L}^*})\\
& \stackrel{(a)}\geq  \sum_{i=1}^n H(F_i | (X_{\mathcal{L}^*})^{i-1} X^n_{l_0} M_{\mathcal{L}^*})\\
& = \sum_{i=1}^n H(F_i |(X_{\mathcal{A}})^{i-1} (X_{\mathcal{L}^*}\!)^{i-1} (X_{l_0})_{i+1}^n (X_{l_0}\!)_{i} (X_{l_0}\!)^{i-1} M_{\mathcal{L}^*} \!)\\
& \stackrel{(b)} = \sum_{i=1}^n H(F_i |(X_{\mathcal{A}})^{i-1} (X_{\mathcal{L}^*})^{i-1} (X_{l_0})_{i+1}^n (X_{l_0})_{i}  M_{\mathcal{L}^*} )\\
& \stackrel{(c)}=  \sum_{i=1}^n H(F_i | (X_{l_0})_i(U_{\mathcal{L}^*})_{i} )\\
& \stackrel{(d)} = n  H(F | X_{l_0} U_{\mathcal{L}^*} ), \numberthis \label{eqUb}
\end{align*}
where \begin{enumerate}[(a)]
    \item holds because $F^{i-1}$ is a function of $(X_{\mathcal{L}}\!)^{i-1}$; \item holds because $$\!\!(X_{l_0})^{i-1} \!- (X_{\mathcal{A}})^{i-1}\!,\!(X_{\mathcal{L}^*})^{i-1}\!, \!(X_{l_0})_{i+1}^n ,(X_{l_0})_{i} , M_{\mathcal{L}^*}\!\!) - F_i;$$
    \item and (d) hold with $$(U_l)_i \triangleq (M_l, X_{l}^{i-1}, (X_{l_0})_{i+1}^n , (X_{\mathcal{A}})^{i-1}),$$ $(U_{\mathcal{S}})_i \triangleq ((U_l)_i)_{l \in \mathcal{S}}$, $\mathcal{S} \subseteq \mathcal{L}^*$, and $U_l \triangleq (Q,U_{l,Q})$ with $Q$ uniformly distributed over $\llbracket 1 , n \rrbracket$ and independent of other random variables. 
    \end{enumerate}
Then, for $\mathcal{S} \subseteq \mathcal{L}^*$, we have
\begin{align*}
 R_{\mathcal{S}}  
&\geq \frac{1}{n} \sum_{l\in \mathcal{S}} H(M_l) \\
&\geq  \frac{1}{n}  H(M_{\mathcal{S}}) \\
& = \frac{1}{n}  I(M_{\mathcal{S}};X_{\mathcal{S}}^n)  \\
& \geq \frac{1}{n}  I(M_{\mathcal{S}};X_{\mathcal{S}}^n) - \frac{1}{n}  I(M_{\mathcal{S}};X_{l_0}^n) \numberthis \label{eql1b} \\
&\stackrel{(a)} = \frac{1}{n}  \smash{\sum_{i=1}^n} I(M_{\mathcal{S}};(X_{\mathcal{S}})_i | (X_{\mathcal{S}})^{i-1} (X_{l_0})_{i+1}^n)\\
&\phantom{------}- I(M_{\mathcal{S}};(X_{l_0})_i  | (X_{\mathcal{S}})^{i-1} (X_{l_0})_{i+1}^n )\\
& \stackrel{(b)}= \frac{1}{n}  \smash{\sum_{i=1}^n} I(M_{\mathcal{S}} (X_{\mathcal{S}})^{i-1} (X_{l_0})_{i+1}^n ;(X_{\mathcal{S}})_i  )\\
&\phantom{------}- I(M_{\mathcal{S}} (X_{\mathcal{S}})^{i-1} (X_{l_0})_{i+1}^n;(X_{l_0})_i   )\numberthis \label{eqrev2}\\
& \stackrel{(c)}= \frac{1}{n}  \smash{\sum_{i=1}^n} I(M_{\mathcal{S}} (X_{\mathcal{S}})^{i-1} (X_{l_0})_{i+1}^n  (X_{\mathcal{A}})^{i-1};(X_{\mathcal{S}})_i  )\\
&\phantom{------}- I(M_{\mathcal{S}} (X_{\mathcal{S}})^{i-1} (X_{l_0})_{i+1}^n (X_{\mathcal{A}})^{i-1} ;(X_{l_0})_i   )\\
&  \stackrel{(d)} =   \frac{1}{n}  \smash{\sum_{i=1}^n} I( (U_{\mathcal{S}})_i ;(X_{\mathcal{S}})_i |Q=i ) \\
&\phantom{------}- I( (U_{\mathcal{S}})_i ;(X_{l_0})_i |Q=i )\\
& =   I( (U_{\mathcal{S}})_Q ;(X_{\mathcal{S}})_Q |Q )-  I( (U_{\mathcal{S}})_Q ;(X_{l_0})_Q |Q )\\
& =   I( (U_{\mathcal{S}})_Q Q;(X_{\mathcal{S}})_Q )  - I( (U_{\mathcal{S}})_Q Q;(X_{l_0})_Q ) \\
&  \stackrel{(e)}=    I( U_{\mathcal{S}};X_{\mathcal{S}}) - I( U_{\mathcal{S}};X_{l_0}) \\
& \stackrel{(f)}   =   I(U_{\mathcal{S}}; X_{\mathcal{S}}| X_{l_0})\numberthis \label{eql2b}\\
& \stackrel{(g)}   =   I(U_{\mathcal{S}}V_{\mathcal{S}}; X_{\mathcal{S}}| X_{l_0})\\
& = I(  V_{\mathcal{S}}; X_{\mathcal{S}}| X_{l_0}) +I( U_{\mathcal{S}} ; X_{\mathcal{S}}| V_{\mathcal{S}} X_{l_0})  \\
& =  I( V_{\mathcal{S}} ; X_{\mathcal{S}} V_{\mathcal{S}^c}| X_{l_0}) - I(V_{\mathcal{S}};V_{\mathcal{S}^c}|X_{\mathcal{S}}X_{l_0})\\
&\phantom{--}+ I(U_{\mathcal{S}};X_{\mathcal{S}}U_{\mathcal{S}^c}|V_{\mathcal{S}}X_{l_0}) - I(U_{\mathcal{S}};U_{\mathcal{S}^c}|X_{\mathcal{S}} V_{\mathcal{S}}X_{l_0})  \\
& \geq  I( V_{\mathcal{S}} ; X_{\mathcal{S}} | V_{\mathcal{S}^c}X_{l_0}) - I(V_{\mathcal{S}};V_{\mathcal{S}^c}|X_{\mathcal{S}}X_{l_0})\\
&\phantom{--} + I(U_{\mathcal{S}};X_{\mathcal{S}}|U_{\mathcal{S}^c} V_{\mathcal{S}}X_{l_0}) - I(U_{\mathcal{S}};U_{\mathcal{S}^c}|X_{\mathcal{S}} V_{\mathcal{S}}X_{l_0})  ,
\end{align*}
where
\begin{enumerate}[(a)]
    \item holds by \cite[Lemma 4.1]{ahlswede1993common}; \item holds because $I((X_{\mathcal{S}})_i ; (X_{\mathcal{S}})^{i-1} (X_{l_0})_{i+1}^n)=0$ and  $I((X_{l_0})_i  ; (X_{\mathcal{S}})^{i-1} (X_{l_0})_{i+1}^n )=0$; \item holds because $ (X_{\mathcal{A} \backslash \mathcal{S}})^{i-1} - (M_{\mathcal{S}}, (X_{\mathcal{S}})^{i-1}, (X_{l_0})_{i+1}^n )- (X_{\mathcal{S}})_i$ and $ (X_{\mathcal{A\backslash \mathcal{S}}})^{i-1} - (M_{\mathcal{S}} ,(X_{\mathcal{S}})^{i-1} ,(X_{l_0})_{i+1}^n) - (X_{l_0})_i $ form Markov chains; 
    \item holds by definition of $(U_l)_i$, $i \in \llbracket 1 , n \rrbracket$, $l \in \mathcal{L}^*$; 
    \item holds by definition of  $U_l$ and $X_{\mathcal{S}} \triangleq (X_{\mathcal{S}})_Q $;
    \item holds because $U_{\mathcal{S}} - X_{\mathcal{S}} - X_{l_0}$; 
    \item holds  with $V_{l,i} \triangleq (M_l,  (X_{l_0})_{i+1}^n , (X_{\mathcal{A}})^{i-1})$ and $V_l \triangleq (Q,V_{l,Q})$.
\end{enumerate}

Then, we have
\begin{align*}
& \frac{1}{n}I(X_{\mathcal{A}^c}^n ; M_{\mathcal{L}^*} |  X_{\mathcal{A}}^n) \\
&= \frac{1}{n}I(X_{\mathcal{A}^c}^n ; M_{\mathcal{A}^{c*}} |  X_{\mathcal{A}}^n) + \frac{1}{n}I(X_{\mathcal{A}^c}^n ; M_{\mathcal{A}^*} |  M_{\mathcal{A}^{c*}} X_{\mathcal{A}}^n) \\
&= \frac{1}{n}I(X_{\mathcal{A}^c}^n ; M_{\mathcal{A}^{c*}} |  X_{\mathcal{A}}^n)  \\
&  \stackrel{(a)} = \frac{1}{n}I(X_{\mathcal{A}^c}^n ; M_{\mathcal{A}^{c*}}   ) - \frac{1}{n}I(    X_{\mathcal{A}}^n ;M_{\mathcal{A}^{c*}})\\
& \stackrel{(b)} = \frac{1}{n}I(X_{\mathcal{A}^{c*}}^n ; M_{\mathcal{A}^{c*}}   ) - \frac{1}{n}I(    X_{\mathcal{A}}^n ;M_{\mathcal{A}^{c*}})\\
&  = \frac{1}{n}I(X_{\mathcal{A}^{c*}}^n ; M_{\mathcal{A}^{c*}}   )- \frac{1}{n}I(X_{l_0}^n ; M_{\mathcal{A}^{c*}}   ) + \frac{1}{n}I(X_{l_0}^n ; M_{\mathcal{A}^{c*}}   ) \\
& \phantom{--}- \frac{1}{n}I(    X_{\mathcal{A}}^n ;M_{\mathcal{A}^{c*}})\\
& \stackrel{(c)} = I( U_{\mathcal{A}^{c*}};X_{\mathcal{A}^{c*}}|X_{l_0}) + \frac{1}{n}I(X_{l_0}^n ; M_{\mathcal{A}^{c*}}   ) - \frac{1}{n}I(    X_{\mathcal{A}}^n ;M_{\mathcal{A}^{c*}})\\
& \stackrel{(d)} = I( U_{\mathcal{A}^{c*}};X_{\mathcal{A}^{c*}}|X_{l_0}) \\
& \phantom{--} + \frac{1}{n} \smash{\sum_{i=1}^n} I( (X_{l_0})_i ;M_{\mathcal{A}^{c*}} |(X_{\mathcal{A}})^{i-1} (X_{l_0})_{i+1}^n)\\
& \phantom{-------}-I( (X_{\mathcal{A}})_i ;M_{\mathcal{A}^{c*}} |(X_{\mathcal{A}})^{i-1} (X_{l_0})_{i+1}^n) \\
&  = I( U_{\mathcal{A}^{c*}};X_{\mathcal{A}^{c*}}|X_{l_0}) \\
& \phantom{--} + \frac{1}{n} \smash{\sum_{i=1}^n} I( (X_{l_0})_i; (X_{\mathcal{A}})^{i-1} (X_{l_0})_{i+1}^n M_{\mathcal{A}^{c*}} )\\
& \phantom{-------}-I( (X_{\mathcal{A}})_i ;(X_{\mathcal{A}})^{i-1} (X_{l_0})_{i+1}^n M_{\mathcal{A}^{c*}}  ) \\
& \stackrel{(e)} = I( U_{\mathcal{A}^{c*}};X_{\mathcal{A}^{c*}}|X_{l_0}) \\
& \phantom{--}+ \frac{1}{n} \sum_{i=1}^n I( (X_{l_0})_i; (V_{\mathcal{A}^{c*}})_i )-I( (X_{\mathcal{A}})_i ;(V_{\mathcal{A}^{c*}})_i )\\
& = I( U_{\mathcal{A}^{c*}};X_{\mathcal{A}^{c*}}|X_{l_0}) + I( X_{l_0}; V_{\mathcal{A}^{c*}} )-I( X_{\mathcal{A}} ;V_{\mathcal{A}^{c*}} )\\
& \stackrel{(f)} = I( U_{\mathcal{A}^{c*}};X_{\mathcal{A}^{c*}}) - I( U_{\mathcal{A}^{c*}};X_{l_0}) \\
& \phantom{--}+ I( X_{l_0};U_{\mathcal{A}^{c*}} V_{\mathcal{A}^{c*}} )- I( X_{l_0};U_{\mathcal{A}^{c*}}| V_{\mathcal{A}^{c*}} )\\
& \phantom{--}-I( X_{\mathcal{A}} ;U_{\mathcal{A}^{c*}}V_{\mathcal{A}^{c*}} )  +I( X_{\mathcal{A}} ;U_{\mathcal{A}^{c*}}|V_{\mathcal{A}^{c*}} )\\
& \stackrel{(g)}= I( U_{\mathcal{A}^{c*}};X_{\mathcal{A}^{c*}}|X_{\mathcal{A}}) + I( X_{l_0}; V_{\mathcal{A}^{c*}} |U_{\mathcal{A}^{c*}})\\
& \phantom{--}- I( X_{l_0};U_{\mathcal{A}^{c*}}| V_{\mathcal{A}^{c*}} )-I( X_{\mathcal{A}} ;V_{\mathcal{A}^{c*}}|U_{\mathcal{A}^{c*}} )  \\
& \phantom{--}+I( X_{\mathcal{A}} ;U_{\mathcal{A}^{c*}}|V_{\mathcal{A}^{c*}} )\\
& \stackrel{(h)}= I( U_{\mathcal{A}^{c*}};X_{\mathcal{A}^{c*}}|X_{\mathcal{A}}) \\
& \phantom{--} - I( X_{l_0};U_{\mathcal{A}^{c*}}| V_{\mathcal{A}^{c*}} ) +I( X_{\mathcal{A}} ;U_{\mathcal{A}^{c*}}|V_{\mathcal{A}^{c*}} )\\
&\geq I( U_{\mathcal{A}^{c*}};X_{\mathcal{A}^{c*}}|X_{\mathcal{A}}) \\
& \phantom{--} - [I( X_{l_0};U_{\mathcal{A}^{c*}}| V_{\mathcal{A}^{c*}} ) -I( X_{\mathcal{A}} ;U_{\mathcal{A}^{c*}}|V_{\mathcal{A}^{c*}} )]^+, \numberthis \label{eqdeltaa}
\end{align*}
where 
\begin{enumerate}[(a)]
\item  holds because $ M_{\mathcal{A}^{c*}} - X_{\mathcal{A}^c}^n  -  X_{\mathcal{A}}^n$;
\item holds by considering the two cases $l_0 \in \mathcal{A}$ and $l_0 \notin \mathcal{A}$, and because $I(X_{l_0}^n;M_{\mathcal{A}^{c*}}|X^n_{\mathcal{A}^{c*}})=0$; 
\item holds by the steps between \eqref{eql1b} and \eqref{eql2b}; 
\item holds by \cite[Lemma 4.1]{ahlswede1993common}; \item holds by the definition of $V_{l,i}$, $l\in\mathcal{L}^*$, $i\in \llbracket 1 , n \rrbracket$; \item holds by the chain rule and because $X_{l_0} - X_{\mathcal{A}^{c*}} - U_{\mathcal{A}^{c*}}$; \item holds by the chain rule and because $X_{\mathcal{A}} - X_{\mathcal{A}^{c*}} - U_{\mathcal{A}^{c*}}$; \item holds because $I( X_{l_0}; V_{\mathcal{A}^{c*}} |U_{\mathcal{A}^{c*}})=0=I( X_{\mathcal{A}} ;V_{\mathcal{A}^{c*}}|U_{\mathcal{A}^{c*}} )$ by using the definitions of $V_{\mathcal{A}^{c*}}$ and $U_{\mathcal{A}^{c*}}$.
\end{enumerate}

Finally, if $\mathcal{A} \ni l_0$, then we have
\begin{align*}
\Delta_{\mathcal{A}}
& \geq \frac{1}{n}I(X_{\mathcal{A}^c}^n ; M_{\mathcal{L}^*} | F^n X_{\mathcal{A}}^n)  \\
& = \frac{1}{n}I(X_{\mathcal{A}^c}^n ; M_{\mathcal{L}^*} F^n| X_{\mathcal{A}}^n) -  I( X_{\mathcal{A}^c} ;F | X_{\mathcal{A}})\\
& = \frac{1}{n}I(X_{\mathcal{A}^c}^n ; M_{\mathcal{L}^*} | X_{\mathcal{A}}^n)+\frac{1}{n}I(X_{\mathcal{A}^c}^n ;  F^n| M_{\mathcal{L}^*} X_{\mathcal{A}}^n) \\
& \phantom{--}-  I( X_{\mathcal{A}^c} ;F | X_{\mathcal{A}})\\
& \stackrel{(a)}  = \frac{1}{n}I(X_{\mathcal{A}^c}^n ; M_{\mathcal{L}^*} | X_{\mathcal{A}}^n)  -  I( X_{\mathcal{A}^c} ;F | X_{\mathcal{A}}) + o(1)\\
& \stackrel{(b)} \geq  I( U_{\mathcal{A}^{c}};X_{\mathcal{A}^{c}}|X_{\mathcal{A}})    -  I( X_{\mathcal{A}^c} ;F | X_{\mathcal{A}}) + o(1),
\end{align*}
where \begin{enumerate}[(a)]
\item holds by Fano's inequality; \item holds by \eqref{eqdeltaa} and because $l_0 \in \mathcal{A}$.
\end{enumerate}

\section{Proofs of the lemmas of Section \ref{secreduction2}} \label{app_lem}

\subsection{Proof of Lemma \ref{lem1}}
Clearly, $g_{p_{V_{\mathcal{L}^*} U_{\mathcal{L}^*}  X_{\mathcal{L}}}}$ is normalized. $g_{p_{V_{\mathcal{L}^*} U_{\mathcal{L}^*}  X_{\mathcal{L}}}}$ is also non-decreasing because for $\mathcal{S}_1,\mathcal{S}_2 \subseteq \mathcal{L}^*$ such that  $\mathcal{S}_1\subseteq\mathcal{S}_2$, we have
\begin{align*}
&g_{p_{V_{\mathcal{L}^*} U_{\mathcal{L}^*}  X_{\mathcal{L}}}} (\mathcal{S}_2) \\
& = I(V_{\mathcal{S}_2};X_{\mathcal{L}} | V_{\mathcal{S}_2^c} X_{l_0}) +  I(U_{\mathcal{S}_2};X_{\mathcal{L}} | U_{\mathcal{S}_2^c} V_{\mathcal{S}_2} X_{l_0}) \\
& = I(V_{\mathcal{S}_2 \backslash \mathcal{S}_1};X_{\mathcal{L}} | V_{\mathcal{S}_2^c}X_{l_0})  + I(V_{\mathcal{S}_1};X_{\mathcal{L}} | V_{\mathcal{S}_2^c}V_{\mathcal{S}_2 \backslash \mathcal{S}_1} X_{l_0}) \\
& \phantom{--} + I(U_{\mathcal{S}_2 \backslash \mathcal{S}_1};X_{\mathcal{L}} | U_{\mathcal{S}_2^c}V_{\mathcal{S}_2}X_{l_0})  \\
& \phantom{--}+ I(U_{\mathcal{S}_1};X_{\mathcal{L}} | U_{\mathcal{S}_2^c}U_{\mathcal{S}_2 \backslash \mathcal{S}_1}V_{\mathcal{S}_2} X_{l_0}) \\
& \geq  I(V_{\mathcal{S}_1};X_{\mathcal{L}} | V_{\mathcal{S}_2^c}V_{\mathcal{S}_2 \backslash \mathcal{S}_1} X_{l_0}) \\
& \phantom{--}+ I(U_{\mathcal{S}_1};X_{\mathcal{L}} | U_{\mathcal{S}_2^c}U_{\mathcal{S}_2 \backslash \mathcal{S}_1}V_{\mathcal{S}_2} X_{l_0}) \\
&  \stackrel{(a)} =  I(V_{\mathcal{S}_1};X_{\mathcal{L}} | V_{\mathcal{S}_1^c}  X_{l_0}) + I(U_{\mathcal{S}_1};X_{\mathcal{L}} | U_{\mathcal{S}_1^c} V_{\mathcal{S}_2} X_{l_0}) \\
& =  I(V_{\mathcal{S}_1};X_{\mathcal{L}} | V_{\mathcal{S}_1^c}  X_{l_0}) + I(U_{\mathcal{S}_1} V_{\mathcal{S}_2 \backslash \mathcal{S}_1}; X_{\mathcal{L}} | U_{\mathcal{S}_1^c} V_{\mathcal{S}_1} X_{l_0})\\
& \phantom{--} - I(  V_{\mathcal{S}_2 \backslash \mathcal{S}_1}; X_{\mathcal{L}}|  U_{\mathcal{S}_1^c}  V_{\mathcal{S}_1} X_{l_0}) \\
& \stackrel{(b)} =  I(V_{\mathcal{S}_1};X_{\mathcal{L}} | V_{\mathcal{S}_1^c}  X_{l_0}) + I(U_{\mathcal{S}_1} V_{\mathcal{S}_2 \backslash \mathcal{S}_1}; X_{\mathcal{L}} | U_{\mathcal{S}_1^c} V_{\mathcal{S}_1} X_{l_0}) \\
& \geq  I(V_{\mathcal{S}_1};X_{\mathcal{L}} | V_{\mathcal{S}_1^c} X_{l_0}) +  I(U_{\mathcal{S}_1};X_{\mathcal{L}} | U_{\mathcal{S}_1^c} V_{\mathcal{S}_1} X_{l_0})\\
& = g_{p_{V_{\mathcal{L}^*} U_{\mathcal{L}^*}  X_{\mathcal{L}}}} (\mathcal{S}_1),
\end{align*}
where \begin{enumerate}[(a)]
\item holds because $\mathcal{S}_1\subseteq\mathcal{S}_2$; \item holds because \begin{align*}&I(  V_{\mathcal{S}_2 \backslash \mathcal{S}_1}; X_{\mathcal{L}}|  U_{\mathcal{S}_1^c}  V_{\mathcal{S}_1} X_{l_0}) \\&\leq I(V_{\mathcal{S}_2 \backslash \mathcal{S}_1}; X_{\mathcal{L}}  V_{\mathcal{S}_1}  U_{\mathcal{S}_2^c} |  U_{\mathcal{S}_2\backslash \mathcal{S}_1}   ) = 0.\end{align*}
\end{enumerate}
We now show that $g_{p_{V_{\mathcal{L}^*} U_{\mathcal{L}^*} X_{\mathcal{L}}}}$ is supermodular by proving that $h_1 : \mathcal{S} \mapsto I(V_{\mathcal{S}};X_{\mathcal{L}} | V_{\mathcal{S}^c} X_{l_0})$ and $h_2 : \mathcal{S} \mapsto I(U_{\mathcal{S}};X_{\mathcal{L}} | U_{\mathcal{S}^c} V_{\mathcal{S}} X_{l_0})$ are supermodular.  Note that for any $\mathcal{S} \subseteq \mathcal{L}^*$, we have $h_1(\mathcal{S}) = I(V_{\mathcal{L}}; X_{\mathcal{L}}|X_{l_0}) - I(V_{\mathcal{S}^c};X_{\mathcal{L}}|X_{l_0})$, hence, since for any submodular function $e$, the function $\mathcal{S} \mapsto - e(\mathcal{S}^c)$ is supermodular, it is sufficient to prove submodularity of $h: \mathcal{S}  \mapsto I(V_{\mathcal{S}};X_{\mathcal{L}}|X_{l_0})$ to prove that $h_1$ is supermodular. For $\mathcal{S}, \mathcal{T} \subseteq \mathcal{L}^*$, we have
\begin{align*}
&h(\mathcal{S} \cup \mathcal{T}) + h(\mathcal{S} \cap \mathcal{T}) \\
& =  I(V_{\mathcal{S} \cup \mathcal{T}};X_{\mathcal{L}}|X_{l_0}) + I(V_{\mathcal{S}\cap \mathcal{T}};X_{\mathcal{L}}|X_{l_0}) \\
& \stackrel{(a)} =  I(V_{\mathcal{S} };X_{\mathcal{L}}|X_{l_0}) + I(V_{\mathcal{T} \backslash \mathcal{S}};X_{\mathcal{L}}|X_{l_0} V_{\mathcal{S}}) \\
&\phantom{--}+ I(V_{\mathcal{T} };X_{\mathcal{L}}|X_{l_0}) - I(V_{\mathcal{T}\backslash \mathcal{S}};X_{\mathcal{L}}|X_{l_0} V_{\mathcal{S} \cap \mathcal{T}}) \\
& \stackrel{(b)} =  I(V_{\mathcal{S} };X_{\mathcal{L}}|X_{l_0}) + H(V_{\mathcal{T} \backslash \mathcal{S}}|X_{l_0}V_{\mathcal{S}})  \\
&\phantom{--}+ I(V_{\mathcal{T} };X_{\mathcal{L}}|X_{l_0}) - H(V_{\mathcal{T}\backslash \mathcal{S}} |X_{l_0}V_{\mathcal{S} \cap \mathcal{T}}) \\
& \stackrel{(c)} \leq I(V_{\mathcal{S} };X_{\mathcal{L}}|X_{l_0}) +   I(V_{\mathcal{T} };X_{\mathcal{L}}|X_{l_0})\\
& = h(\mathcal{S} ) + h(  \mathcal{T}) ,
\end{align*}
where \begin{enumerate}[(a)]
\item  holds by the chain rule applied twice; \item holds because \begin{align*}
&    I(V_{\mathcal{T} \backslash \mathcal{S}};X_{\mathcal{L}}|X_{l_0}V_{\mathcal{S}})  - I(V_{\mathcal{T}\backslash \mathcal{S}};X_{\mathcal{L}}|X_{l_0} V_{\mathcal{S} \cap \mathcal{T}})\\& = H(V_{\mathcal{T} \backslash \mathcal{S}}|X_{l_0} V_{\mathcal{S}}) - H(V_{\mathcal{T}\backslash \mathcal{S}} | X_{l_0} V_{\mathcal{S} \cap \mathcal{T}})\end{align*} by the Markov chains $V_{\mathcal{T} \backslash \mathcal{S}} - X_{\mathcal{T} \backslash \mathcal{S}} - (V_{\mathcal{S}},X_{\mathcal{L} \backslash (\mathcal{T} \backslash \mathcal{S}) },X_{l_0})$ and $V_{\mathcal{T} \backslash \mathcal{S}} - X_{\mathcal{T} \backslash \mathcal{S}} - (V_{\mathcal{S}\cap \mathcal{T}},X_{\mathcal{L} \backslash (\mathcal{T} \backslash \mathcal{S}) },X_{l_0})$; \item holds because conditioning reduces entropy. 
\end{enumerate}

Then, for any $\mathcal{S} \subseteq \mathcal{L}^*$, we have 
\begin{align*}
h_2(\mathcal{S}) 
& = I(U_{\mathcal{S}};X_{\mathcal{L}} | U_{\mathcal{S}^c} V_{\mathcal{S}} X_{l_0})\\
& = I(U_{\mathcal{S}}V_{\mathcal{S}};X_{\mathcal{L}} | U_{\mathcal{S}^c}  X_{l_0}) - I(V_{\mathcal{S}};X_{\mathcal{L}} | U_{\mathcal{S}^c} X_{l_0}) \\
& = I(U_{\mathcal{S}}V_{\mathcal{L}};X_{\mathcal{L}} | U_{\mathcal{S}^c}  X_{l_0}) - I(V_{\mathcal{L}};X_{\mathcal{L}} | U_{\mathcal{S}^c} X_{l_0}) \\
& = I(U_{\mathcal{S}};X_{\mathcal{L}} | U_{\mathcal{S}^c} V_{\mathcal{L}} X_{l_0}) \\
& = I(U_{\mathcal{L}};X_{\mathcal{L}} |  V_{\mathcal{L}} X_{l_0}) - I(U_{\mathcal{S}^c};X_{\mathcal{L}} | V_{\mathcal{L}} X_{l_0}),
\end{align*}
where the third equality holds by the chain rule and the Markov chains $V_{\mathcal{S}^c} - (U_{\mathcal{S}},V_{\mathcal{S}},U_{\mathcal{S}^c},X_{l_0}) - X_{\mathcal{L}}$ and $V_{\mathcal{S}^c} - (V_{\mathcal{S}},U_{\mathcal{S}^c},X_{l_0}) - X_{\mathcal{L}}$. Finally, we conclude that, similar to $h_1$, $h_2$ is supermodular.

\subsection{Proof of Lemma \ref{lem2}}
For convenience, we write $g$ instead of $g_{p_{V_{\mathcal{L}^*} U_{\mathcal{L}^*}  X_{\mathcal{L}}}}$ in the following. $(i)$ follows from Lemma~\ref{lem1} and Definition   \ref{defcp}. $(ii)$ follows from $(i)$ and Lemma \ref{lempm}. Then, by $(i)$ and Lemma \ref{lempm}, we also have 
\eqref{eqdom} with $\forall \pi \in \text{Sym}(L-1), \forall l \in \mathcal{L}^*,$
$$C_{\pi(l)} \triangleq  g\left( \pi (l:L-1) \right) - g \left( \pi(l+1:L-1) \right), $$
which one can rewrite as 
\begin{align*}
&C_{\pi(l)} \\
& = I(V_{\pi(l:L-1)};X_{\mathcal{L}} | V_{(\pi (l:L-1))^c} X_{l_0}) \\
& \phantom{-}
 - I(V_{\pi(l+1:L-1)};X_{\mathcal{L}} | V_{(\pi(l+1:L-1))^c} X_{l_0}) \\
& \phantom{-} +  I(U_{\pi(l:L-1)};X_{\mathcal{L}} | U_{(\pi (l:L-1))^c} V_{\pi(l:L-1)} X_{l_0}) \\
& \phantom{-}
 - I(U_{\pi(l+1:L-1)};X_{\mathcal{L}} | U_{(\pi(l+1:L-1))^c} V_{\pi(l+1:L-1)} X_{l_0}) \\
&\stackrel{(a)} = I(V_{\pi (l)};X_{\mathcal{L}} | V_{\pi(1:l-1)} X_{l_0}) 
  \\
& \phantom{-} +  I(U_{\pi(l:L-1)};X_{\mathcal{L}} | U_{(\pi(l:L-1))^c} V_{\pi(l:L-1)} X_{l_0}) \\
& \phantom{-}
 - I(U_{\pi(l+1:L-1)}V_{\pi (l)};X_{\mathcal{L}} | U_{(\pi(l+1:L-1))^c} V_{\pi(l+1:L-1)} X_{l_0})   \\
& \phantom{-} + I(V_{\pi (l)};X_{\mathcal{L}} | U_{\pi(1:L-1)} V_{\pi(l+1:L-1)} X_{l_0})\\
&\stackrel{(b)} = I(V_{\pi (l)};X_{\mathcal{L}} | V_{\pi(1:l-1)} X_{l_0}) 
  \\
& \phantom{-} +  I(U_{\pi(l:L-1)};X_{\mathcal{L}} | U_{(\pi (l:L-1))^c} V_{\pi(l:L-1)} X_{l_0}) 
\\
& \phantom{-} - I(U_{\pi(l+1:L-1)}V_{\pi (l)};X_{\mathcal{L}} | U_{(\pi(l+1:L-1))^c} V_{\pi(l+1:L-1)} X_{l_0})   \\
&\stackrel{(c)} = I(V_{\pi (l)};X_{\mathcal{L}} | V_{\pi (1:l-1)} X_{l_0}) 
  \\
& \phantom{-} +  I(U_{\pi(l:L-1)};X_{\mathcal{L}} | U_{(\pi(l:L-1))^c} V_{\pi(l:L-1)} X_{l_0}) 
\\
& \phantom{-} - I(U_{\pi(l+1:L-1)};X_{\mathcal{L}} | U_{(\pi(l+1:L-1))^c} V_{\pi(l:L-1)} X_{l_0})   \\
& \phantom{-} - I(V_{\pi (l)};X_{\mathcal{L}} | U_{\pi(1:l)} V_{\pi(l+1:L-1)} X_{l_0})\\
& \stackrel{(d)} = I(V_{\pi (l)};X_{\mathcal{L}} | V_{\pi (1:l-1)} X_{l_0}) \\
& \phantom{-}+  I(U_{\pi (l)};X_{\mathcal{L}} | U_{(\pi (l:L-1))^c} V_{\pi (l:L-1)} X_{l_0})  \\
& \stackrel{(e)} = I(V_{\pi (l)};X_{\pi(l)} | V_{\pi (1:l-1)}X_{l_0}) \\
& \phantom{-}+  I(V_{\pi (l)}; X_{(\pi(l))^c}| X_{\pi(l)} V_{\pi (1:l-1)}X_{l_0}) \\
& \phantom{-} +I(U_{\pi (l)};X_{\pi(l)} | U_{\pi (1:l-1)} V_{\pi (l:L-1)} X_{l_0}) \\
& \phantom{-}+  I(U_{\pi (l)}; X_{(\pi(l))^c}| X_{\pi(l)} U_{\pi(1:l-1)}V_{\pi (l:L-1)} X_{l_0})\\
& \stackrel{(f)} = I(V_{\pi (l)};X_{\pi(l)} | V_{\pi(1:l-1)}X_{l_0}) \\
& \phantom{-}+  I(U_{\pi (l)};X_{\pi(l)} | U_{\pi (1:l-1)} V_{\pi (l:L-1)} X_{l_0}),
\end{align*}
where
\begin{enumerate}[(a)]
    \item\!\!, (c), and (e) hold by the chain rule;
    \item holds because \begin{align*} &I(V_{\pi (l)};X_{\mathcal{L}} | U_{\pi(1:L-1)} V_{\pi(l+1:L-1)} X_{l_0}) \\&\leq  I(V_{\pi (l)};X_{\mathcal{L}} U_{\pi(1:L-1)} V_{\pi(l+1:L-1)} | U_{\pi(l)}  ) = 0;\end{align*} \addtocounter{enumi}{1}\item holds by the chain rule and because \begin{align*} &I(V_{\pi (l)};X_{\mathcal{L}} | U_{\pi(1:l)} V_{\pi(l+1:L-1)} X_{l_0})\\& \leq I(V_{\pi (l)};X_{\mathcal{L}} U_{\pi(1:l-1)} V_{\pi(l+1:L-1)}| U_{\pi(l)}  ) = 0;\end{align*}\addtocounter{enumi}{1} 
    \item holds because \begin{align*} &I(V_{\pi (l)}; X_{(\pi(l))^c}| X_{\pi(l)} V_{\pi(1:l-1)}X_{l_0}) \\&\leq I(V_{\pi (l)}; X_{(\pi(l))^c}V_{\pi (1:l-1)}| X_{\pi(l)} X_{l_0}) = 0;\end{align*} and \begin{align*} &\!\! I(U_{\pi (l)}; X_{(\pi(l))^c}| X_{\pi(l)} U_{\pi (1:l-1)}V_{\pi (l:L-1)} X_{l_0}) \\& \!\! \leq   I(U_{\pi (l)}V_{\pi (l)}; X_{(\pi(l))^c} U_{\pi (1:l-1)}V_{\pi (l+1:L-1)} | X_{\pi(l)}   ) =0.\end{align*}
\end{enumerate}
\subsection{Proof of Lemma \ref{lem3}}
By the definitions of $
\mathcal{R} ({p_{V_{\mathcal{L}^*} U_{\mathcal{L}^*}  X_{\mathcal{L}}} })$ and $\mathcal{P} (g_{p_{V_{\mathcal{L}^*} U_{\mathcal{L}^*}  X_{\mathcal{L}}} })$, it is sufficient to prove that $g_{p_{V_{\mathcal{L}^*} U_{\mathcal{L}^*}  X_{\mathcal{L}}} } = \bar{g}_{p_{V_{\mathcal{L}^*} U_{\mathcal{L}^*}  X_{\mathcal{L}}} }$. For $\mathcal{S} \subseteq \mathcal{L}^*$, we have
\begin{align*}
&g_{p_{V_{\mathcal{L}^*} U_{\mathcal{L}^*}  X_{\mathcal{L}}} } (\mathcal{S}) \\
&= I(V_{\mathcal{S}};X_{\mathcal{L}} | V_{\mathcal{S}^c} X_{l_0}) +  I(U_{\mathcal{S}};X_{\mathcal{L}} | U_{\mathcal{S}^c} V_{\mathcal{S}} X_{l_0}) \\
& = I(V_{\mathcal{S}};X_{\mathcal{S}} | V_{\mathcal{S}^c}X_{l_0}) + I(V_{\mathcal{S}};X_{\mathcal{S}^c} | X_{\mathcal{S}} V_{\mathcal{S}^c}X_{l_0}) \\
&\phantom{--}+ I(U_{\mathcal{S}};X_{\mathcal{S}} | U_{\mathcal{S}^c}V_{\mathcal{S}}X_{l_0}) + I(U_{\mathcal{S}};X_{\mathcal{S}^c} | X_{\mathcal{S}} U_{\mathcal{S}^c}V_{\mathcal{S}}X_{l_0}) \\
&  \stackrel{(a)}= I(V_{\mathcal{S}};X_{\mathcal{S}} | V_{\mathcal{S}^c}X_{l_0}) +  I(U_{\mathcal{S}};X_{\mathcal{S}} | U_{\mathcal{S}^c}V_{\mathcal{S}}X_{l_0}) \\
&  \stackrel{(b)}  = I(V_{\mathcal{S}};X_{\mathcal{S}} | V_{\mathcal{S}^c}X_{l_0})  - I(V_{\mathcal{S}};V_{\mathcal{S}^c}|X_{\mathcal{S}} X_{l_0}) \\
&\phantom{--}+  I(U_{\mathcal{S}};X_{\mathcal{S}} | U_{\mathcal{S}^c}V_{\mathcal{S}}X_{l_0}) -  I(U_{\mathcal{S}};U_{\mathcal{S}^c} | X_{\mathcal{S}}V_{\mathcal{S}}X_{l_0})  \\
& = \bar{g}_{p_{V_{\mathcal{L}^*} U_{\mathcal{L}^*}  X_{\mathcal{L}}} } (\mathcal{S}),
\end{align*}
where \begin{enumerate}[(a)]
\item  holds because of the Markov chains $V_{\mathcal{S}} - ( X_{\mathcal{S}}, V_{\mathcal{S}^c}, X_{l_0})- X_{\mathcal{S}^c}$  and $U_{\mathcal{S}} - ( X_{\mathcal{S}}, U_{\mathcal{S}^c},V_{\mathcal{S}}, X_{l_0})- X_{\mathcal{S}^c}$; \item holds because of the Markov chains $V_{\mathcal{S}}- (X_{\mathcal{S}}, X_{l_0}) - V_{\mathcal{S}^c}$  and $U_{\mathcal{S}}- (X_{\mathcal{S}},V_{\mathcal{S}}, X_{l_0}) - U_{\mathcal{S}^c}$. 
\end{enumerate}

\section{Analysis of the coding scheme in Theorem \ref{thach2}} \label{appanalysis}
\subsection{Communication rates} The communication rate for User $l\in\mathcal{L}^*$ is 
\begin{align*}
&R_l^V  + R_l^U \\
& = H(V_l|V_{1:l-1}X_{l_0}) - H(V_l|X_l)\\
&\phantom{--} + H(U_l|U_{1:l-1} V_{\mathcal{L}^*}X_{l_0}) - H(U_l|X_l V_l)+\epsilon\\
& \stackrel{(a)}= H(V_l|V_{1:l-1}X_{l_0}) - H(V_l|X_lV_{1:l-1}X_{l_0}) \\
&\phantom{--}+ H(U_l|U_{1:l-1} V_{\mathcal{L}^*}X_{l_0}) - H(U_l|X_l U_{1:l-1} V_{\mathcal{L}^*}X_{l_0})+\epsilon\\
& = I(V_l;X_l|V_{1:l-1}X_{l_0})  \\
&\phantom{--}+ I(U_l;X_l|U_{1:l-1} V_{\mathcal{L}^*}X_{l_0})  +\epsilon \numberthis \label{eqRU}\\
& \stackrel{(b)} = I(V_l;X_l|V_{1:l-1}X_{l_0})  + I(V_{1:l-1};X_{l} | U_{1:l-1}V_{l:L-1} X_{l_0})\\
&\phantom{--} +I(U_{l};X_{l} | U_{1:l-1}V_{1:L-1} X_{l_0})  +\epsilon\\
& \stackrel{(c)} = I(V_l;X_l|V_{1:l-1}X_{l_0}) \\
&\phantom{--} + I(U_{l}V_{1:l-1};X_{l} | U_{1:l-1}V_{l:L-1} X_{l_0}) +\epsilon\\
& \stackrel{(d)}  = I(V_l;X_l|V_{1:l-1}X_{l_0})  + I(U_{l}V_{1:l-1};X_{l} | U_{1:l-1}V_{l:L-1} X_{l_0})\\
&\phantom{--} -I(V_{1:l-1};X_{l} | U_{1:l}V_{l:L-1} X_{l_0}) +\epsilon\\
& \stackrel{(e)} = I(V_l;X_l|V_{1:l-1}X_{l_0})  + I(U_{l};X_{l} | U_{1:l-1}V_{l:L-1} X_{l_0})+\epsilon,
\end{align*}
where \begin{enumerate}[(a)]
\item  holds by the Markov chains $V_l-X_l-(V_{1:l-1},X_{l_0})$ and $U_l- (X_l,V_l) -( U_{1:l-1} ,V_{\mathcal{L}^*},X_{l_0})$; \item  holds because \begin{align*}&I(V_{1:l-1};X_{l} | U_{1:l-1}V_{l:L-1} X_{l_0}) \\&\leq I(V_{1:l-1};X_{l}V_{l:L-1} X_{l_0} | U_{1:l-1}) = 0;\end{align*} \item and (e)  hold by the chain rule; \item holds because \begin{align*}&I(V_{1:l-1};X_{l} | U_{1:l}V_{l:L-1} X_{l_0}) \\& \leq I(V_{1:l-1};X_{l} U_l V_{l:L-1} X_{l_0} | U_{1:l-1})  = 0.\end{align*}
\end{enumerate}

\subsection{\texorpdfstring{{Leakage at colluding users in $\mathcal{A} \in \mathbb{A}$, $\mathcal{A} \not\ni l_0$}}{Leakage at colluding users in A in A, A not contains l0}}
 Let $\mathcal{A} \in \mathbb{A}$ such that $\mathcal{A} \not\ni l_0$. Then,
\begin{align*}
& \sum_{l \in \mathcal{A}^{c*}} R_l^U \\
& \stackrel{(a)} \leq \sum_{l \in \mathcal{A}^{c*}} I(U_l;X_l|U_{1:l-1} V_{\mathcal{L}^*}X_{l_0}) + \epsilon L\\
&\stackrel{(b)} \leq \sum_{l \in \mathcal{A}^{c*}} I(U_l;X_l U_{\llbracket 1:l-1 \rrbracket \cap \mathcal{A}}|U_{\llbracket 1:l-1 \rrbracket \cap \mathcal{A}^{c*}} V_{\mathcal{L}^*}X_{l_0})+ \epsilon L\\
& = \smash{\sum_{l \in \mathcal{A}^{c*}}} I(U_l;X_l |U_{\llbracket 1:l-1 \rrbracket \cap \mathcal{A}^{c*}} V_{\mathcal{L}^*}X_{l_0})  \\
&\phantom{----}+ I(U_l; U_{\llbracket 1:l-1 \rrbracket \cap \mathcal{A}}| X_l U_{\llbracket 1:l-1 \rrbracket \cap \mathcal{A}^{c*}} V_{\mathcal{L}^*}X_{l_0})+ \epsilon L\\
& \stackrel{(c)} = \sum_{l \in \mathcal{A}^{c*}} I(U_l;X_l |U_{\llbracket 1:l-1 \rrbracket \cap \mathcal{A}^{c*}} V_{\mathcal{L}^*}X_{l_0}) + \epsilon L\\
& \leq \sum_{l \in \mathcal{A}^{c*}} I(U_l;X_{\mathcal{A}^{c*}} |U_{\llbracket 1:l-1 \rrbracket \cap \mathcal{A}^{c*}} V_{\mathcal{L}^*}X_{l_0}) + \epsilon L\\
& =   I(U_{\mathcal{A}^{c*}};X_{\mathcal{A}^{c*}} |V_{\mathcal{L}^*}X_{l_0})+ \epsilon L, \numberthis \label{eqsumRU}
\end{align*}
where \begin{enumerate}[(a)]
\item  holds by \eqref{eqRU}; \item holds by the chain rule and non-negativity of the mutual information; \item  holds because 
 $I(U_l; U_{\llbracket 1:l-1 \rrbracket \cap \mathcal{A}}| X_l U_{\llbracket 1:l-1 \rrbracket \cap \mathcal{A}^{c*}} V_{\mathcal{L}^*}X_{l_0}) \leq I(U_l V_l; U_{\llbracket 1:l-1 \rrbracket  }  V_{\mathcal{L}^* \backslash \{l\}}X_{l_0}| X_l ) = 0$. 
 \end{enumerate}
 
Then, using the notation  $I_{\tilde{p}}(X_{\mathcal{A}^c}^n ;   U^n_{\mathcal{L}} V^n_{\mathcal{L}}  |  X_{\mathcal{A}}^n  ) $ to reflect that the mutual information considered is with respect to the distribution $\tilde{p}$ induced by the coding scheme, we have 
\begin{align*}
& \frac{1}{n}I_{\tilde{p}}(X_{\mathcal{A}^c}^n ;   U^n_{\mathcal{L}^*} V^n_{\mathcal{L}^*}  |  X_{\mathcal{A}}^n  )  \\
& \stackrel{(a)}= \frac{1}{n}I(X_{\mathcal{A}^c}^n ;   U^n_{\mathcal{L}^*} V^n_{\mathcal{L}^*}  |  X_{\mathcal{A}}^n  ) +o(1) \\
& = I(X_{\mathcal{A}^c} ;   U_{\mathcal{L}^*} V_{\mathcal{L}^*}  |  X_{\mathcal{A}}  ) +o(1) \\
& = I(X_{\mathcal{A}^c} ;   U_{\mathcal{A}^{c*}}  |  X_{\mathcal{A}}  ) + I(X_{\mathcal{A}^c} ;   U_{\mathcal{A}} V_{\mathcal{L}^*}  | U_{\mathcal{A}^{c*}} X_{\mathcal{A}}  ) +  o(1) \\
& \stackrel{(b)} = I(X_{\mathcal{A}^c} ;   U_{\mathcal{A}^{c*}}  |  X_{\mathcal{A}}  )  +  o(1) , \numberthis \label{eqint1}
\end{align*}
where \begin{enumerate}[(a)]
    \item  holds by \cite[Lemma 2.7]{bookCsizar} and $I(X_{\mathcal{A}^c}^n ;   U^n_{\mathcal{L}} V^n_{\mathcal{L}}  |  X_{\mathcal{A}}^n  )$  is considered with respect to the distribution $p$; \item  holds because $I(X_{\mathcal{A}^c} ;   U_{\mathcal{A}} V_{\mathcal{L}^*}  | U_{\mathcal{A}^{c*}} X_{\mathcal{A}}  ) = I(X_{\mathcal{A}^c} ;   U_{\mathcal{A}} V_{\mathcal{A}}  | U_{\mathcal{A}^{c*}} X_{\mathcal{A}}  ) + I(X_{\mathcal{A}^c} ;   V_{\mathcal{A}^{c*}}  | U_{\mathcal{L}^*} V_{\mathcal{A}} X_{\mathcal{A}}  ) \leq I(X_{\mathcal{A}^c} U_{\mathcal{A}^{c*}};   U_{\mathcal{A}} V_{\mathcal{A}}  |  X_{\mathcal{A}}  ) + I(  U_{\mathcal{A}} V_{\mathcal{A}} X_{\mathcal{L}};  V_{\mathcal{A}^{c*}}  | U_{\mathcal{A}^{c*}}   )=0$.
\end{enumerate}
Then, we have
\begin{align*}
&  \frac{1}{n}I_{\tilde{p}}(X_{\mathcal{A}^c}^n ;  U^n_{\mathcal{L}^*} V^n_{\mathcal{L}^*}  | M_{\mathcal{L}^*} X_{\mathcal{A}}^n  )
\\
& \stackrel{(a)}\geq  \frac{1}{n}I_{\tilde{p}}(X_{\mathcal{A}^c}^n ;  U^n_{\mathcal{A}^{c*}} V^n_{\mathcal{L}^*}  | M_{\mathcal{L}^*} X_{\mathcal{A}}^n  )  \\
& \stackrel{(b)}\geq  \frac{1}{n}I_{\tilde{p}}(X_{\mathcal{A}^c}^n ;  U^n_{\mathcal{A}^{c*}} |V^n_{\mathcal{L}^*}   M_{\mathcal{L}^*} X_{\mathcal{A}}^n  )  \\
& =   \frac{1}{n}H_{\tilde{p}}(  U^n_{\mathcal{A}^{c*}} |V^n_{\mathcal{L}^*}   M_{\mathcal{L}^*} X_{\mathcal{A}}^n  ) -  \frac{1}{n}H_{\tilde{p}}(  U^n_{\mathcal{A}^{c*}} |V^n_{\mathcal{L}^*}   M_{\mathcal{L}^*} X_{\mathcal{L}}^n  )  \\
& \stackrel{(c)}=  \frac{1}{n}H_{\tilde{p}}(  U^n_{\mathcal{A}^{c*}} |V^n_{\mathcal{L}^*}   M_{\mathcal{L}^*} X_{\mathcal{A}}^n  )  - o(1)\\
& =  \frac{1}{n}H_{\tilde{p}}(  U^n_{\mathcal{A}^{c*}} |V^n_{\mathcal{L}^*}   X_{\mathcal{A}}^n  ) -\frac{1}{n}I_{\tilde{p}}(  U^n_{\mathcal{A}^{c*}} ; M^U_{\mathcal{L}^*}|V^n_{\mathcal{L}^*}    X_{\mathcal{A}}^n  )   - o(1)\\
& \stackrel{(d)} =  \frac{1}{n}H_{\tilde{p}}(  U^n_{\mathcal{A}^{c*}} |V^n_{\mathcal{L}^*}   X_{\mathcal{A}}^n  ) - \frac{1}{n}I_{\tilde{p}}(  U^n_{\mathcal{A}^{c*}} ; M_{\mathcal{A}^{c*}}^U|V^n_{\mathcal{L}^*}    X_{\mathcal{A}}^n  )   - o(1)\\
& \geq   \frac{1}{n}H_{\tilde{p}}(  U^n_{\mathcal{A}^{c*}} |V^n_{\mathcal{L}^*}   X_{\mathcal{A}}^n  ) - \frac{1}{n}H_{\tilde{p}}(    M_{\mathcal{A}^{c*}}^U  )   - o(1)\\
&  \stackrel{(e)}=  \frac{1}{n}H(  U^n_{\mathcal{A}^{c*}} |V^n_{\mathcal{L}^*}   X_{\mathcal{A}}^n  ) - \frac{1}{n}H_{\tilde{p}}(    M_{\mathcal{A}^{c*}}^U  )   - o(1)\\
&  \stackrel{(f)}\geq H(  U_{\mathcal{A}^{c*}} |V_{\mathcal{L}^*}   X_{\mathcal{A}}  )- I(U_{\mathcal{A}^{c*}};X_{\mathcal{A}^{c*}} |V_{\mathcal{L}^*}X_{l_0})\\
& \phantom{--}- \epsilon L-  o(1)  \\
& \stackrel{(g)}=  H(  U_{\mathcal{A}^{c*}} |V_{\mathcal{A}^{c*}}   X_{\mathcal{A}}  )- I(U_{\mathcal{A}^{c*}};X_{\mathcal{A}^{c*}} |V_{\mathcal{L}^*}X_{l_0})\\
& \phantom{--}- \epsilon L-  o(1)  \\
& \geq  I(  U_{\mathcal{A}^{c*}};  X_{\mathcal{A}^{c*}} |V_{\mathcal{A}^{c*}}   X_{\mathcal{A}}  )- I(U_{\mathcal{A}^{c*}};X_{\mathcal{A}^{c*}} |V_{\mathcal{L}^*}X_{l_0})\\
& \phantom{--}- \epsilon L-  o(1)  \\
& \geq  I(  U_{\mathcal{A}^{c*}};  X_{\mathcal{A}^{c*}} |V_{\mathcal{A}^{c*}}   X_{\mathcal{A}}  )- I(U_{\mathcal{A}^{c*}};X_{\mathcal{A}^{c*}} V_{\mathcal{A}}|V_{\mathcal{A}^{c*}}X_{l_0})\\
& \phantom{--}- \epsilon L-  o(1)  \\
& \stackrel{(h)} =  I(  U_{\mathcal{A}^{c*}};  X_{\mathcal{A}^{c*}} |V_{\mathcal{A}^{c*}}   X_{\mathcal{A}}  )- I(U_{\mathcal{A}^{c*}};X_{\mathcal{A}^{c*}}|V_{\mathcal{A}^{c*}}X_{l_0})\\
& \phantom{--}- \epsilon L-  o(1)  \\
&  =  H(  U_{\mathcal{A}^{c*}}  |V_{\mathcal{A}^{c*}}   X_{\mathcal{A}}  ) -  H(  U_{\mathcal{A}^{c*}} |  X_{\mathcal{A}^{c*}} V_{\mathcal{A}^{c*}}   X_{\mathcal{A}}  )\\
& \phantom{--}- H(U_{\mathcal{A}^{c*}}|V_{\mathcal{A}^{c*}}X_{l_0}) + H(U_{\mathcal{A}^{c*}}|X_{\mathcal{A}^{c*}}V_{\mathcal{A}^{c*}}X_{l_0})\\
& \phantom{--}- \epsilon L-  o(1)  \\
& \stackrel{(i)} =  H(  U_{\mathcal{A}^{c*}}  |V_{\mathcal{A}^{c*}}   X_{\mathcal{A}}  )  - H(U_{\mathcal{A}^{c*}}|V_{\mathcal{A}^{c*}}X_{l_0})    - \epsilon L- o(1)  \\
&  = - I(  U_{\mathcal{A}^{c*}};     X_{\mathcal{A}} |V_{\mathcal{A}^{c*}}   )  + I(U_{\mathcal{A}^{c*}}; X_{l_0}|V_{\mathcal{A}^{c*}})    - \epsilon L- o(1), \numberthis \label{eqint2}
\end{align*}
where 
\begin{enumerate}[(a)]
    \item and (b) hold by the chain rule and non-negativity of the mutual information; \addtocounter{enumi}{1} \item holds by Fano's inequality; \item holds by the chain rule and because 
    \begin{align*} 
&  \frac{1}{n}I_{\tilde{p}}(  U^n_{\mathcal{A}^{c*}} ; M^U_{\mathcal{A}}|M^U_{\mathcal{A}^{c*}}V^n_{\mathcal{L}^*}    X_{\mathcal{A}}^n)\\&\leq \frac{1}{n}I_{\tilde{p}}(  U^n_{\mathcal{A}^{c*}} ; U^n_{\mathcal{A}} V^n_{\mathcal{A}}|M^U_{\mathcal{A}^{c*}}V^n_{\mathcal{L}^*}    X_{\mathcal{A}}^n)\\&\leq \frac{1}{n}I_{\tilde{p}}(  U^n_{\mathcal{A}^{c*}} V^n_{\mathcal{A}^{c*}}; U^n_{\mathcal{A}} V^n_{\mathcal{A}}|     X_{\mathcal{A}}^n)\\&= I (  U_{\mathcal{A}^{c*}} V_{\mathcal{A}^{c*}}; U_{\mathcal{A}} V_{\mathcal{A}}|     X_{\mathcal{A}}) + o(1)= o(1),  \end{align*} where the first equality holds by  \cite[Lemma 2.7]{bookCsizar}; \item holds by  \cite[Lemma 2.7]{bookCsizar}; \item  holds by \eqref{eqsumRU}; \item  holds because $$I(U_{\mathcal{A}^{c*}} ; V_{\mathcal{A}}|V_{\mathcal{A}^{c*}}   X_{\mathcal{A}}) \leq I(U_{\mathcal{A}^{c*}} V_{\mathcal{A}^{c*}} ; V_{\mathcal{A}}|  X_{\mathcal{A}}) = 0 ;$$ \item  holds by the chain rule and because \begin{align*}
&I(U_{\mathcal{A}^{c*}}; V_{\mathcal{A}}|V_{\mathcal{A}^{c*}} X_{\mathcal{A}^{c*}} X_{l_0}) \\&\leq I(U_{\mathcal{A}^{c*}}V_{\mathcal{A}^{c*}}; V_{\mathcal{A}}X_{l_0}| X_{\mathcal{A}^{c*}} ) = 0;\end{align*} \item  holds because \begin{align*}
H(  U_{\mathcal{A}^{c*}} |  X_{\mathcal{A}^{c*}} V_{\mathcal{A}^{c*}}   X_{\mathcal{A}}  ) &= H(  U_{\mathcal{A}^{c*}} |  X_{\mathcal{A}^{c*}} V_{\mathcal{A}^{c*}}    )\\& = H(U_{\mathcal{A}^{c*}}|X_{\mathcal{A}^{c*}}V_{\mathcal{A}^{c*}}X_{l_0}).\end{align*}
    \end{enumerate}

Finally, we have
\begin{align*}
&\frac{1}{n}I_{\tilde{p}}(X_{\mathcal{A}^c}^n ; M_{\mathcal{L}^*} |  X_{\mathcal{A}}^n ) \\
& = \frac{1}{n}I_{\tilde{p}}(X_{\mathcal{A}^c}^n ; M_{\mathcal{L}^*} U^n_{\mathcal{L}^*} V^n_{\mathcal{L}^*}  |  X_{\mathcal{A}}^n  ) \\
& \phantom{--} - \frac{1}{n}I_{\tilde{p}}(X_{\mathcal{A}^c}^n ;  U^n_{\mathcal{L}^*} V^n_{\mathcal{L}^*}  | M_{\mathcal{L}^*} X_{\mathcal{A}}^n  )
\\
& = \frac{1}{n}I_{\tilde{p}}(X_{\mathcal{A}^c}^n ;  U^n_{\mathcal{L}^*} V^n_{\mathcal{L}^*}  |  X_{\mathcal{A}}^n  )   \\
& \phantom{--} - \frac{1}{n}I_{\tilde{p}}(X_{\mathcal{A}^c}^n ;  U^n_{\mathcal{L}^*} V^n_{\mathcal{L}^*}  | M_{\mathcal{L}^*} X_{\mathcal{A}}^n  ) \\
& \leq I(X_{\mathcal{A}^c} ;   U_{\mathcal{A}^{c*}}  |  X_{\mathcal{A}}  )   \\
& \phantom{-}-[I(U_{\mathcal{A}^{c*}}; X_{l_0}|V_{\mathcal{A}^{c*}})- 
\!I(  U_{\mathcal{A}^{c*}};     X_{\mathcal{A}} |V_{\mathcal{A}^{c*}}   ) ]^+  \!+ \epsilon L+
\!o(1),
\end{align*}
where the inequality holds by \eqref{eqint1} and \eqref{eqint2}.

\subsection{Leakage at colluding users in $\mathcal{A} \in \mathbb{A}$, $\mathcal{A} \ni l_0$} Let $\mathcal{A} \in \mathbb{A}$ such that $\mathcal{A} \ni l_0$. Then, we have
\begin{align*}
&\frac{1}{n}I_{\tilde{p}}(X_{\mathcal{A}^c}^n ; M_{\mathcal{L}^*} | F^n X_{\mathcal{A}}^n ) \\
&= \frac{1}{n}I_{\tilde{p}}(X_{\mathcal{A}^c}^n ; M_{\mathcal{L}^*}  F^n |X_{\mathcal{A}}^n )  - \frac{1}{n}I_{\tilde{p}}(X_{\mathcal{A}^c}^n ; F^n| X_{\mathcal{A}}^n )\\
&= \frac{1}{n}I_{\tilde{p}}(X_{\mathcal{A}^c}^n ; M_{\mathcal{L}^*}  |X_{\mathcal{A}}^n ) +\frac{1}{n}I_{\tilde{p}}(X_{\mathcal{A}^c}^n ;   F^n |M_{\mathcal{L}^*} X_{\mathcal{A}}^n ) \\
&\phantom{--}- \frac{1}{n}I_{\tilde{p}}(X_{\mathcal{A}^c}^n ; F^n| X_{\mathcal{A}}^n )\\
&\stackrel{(a)}= \frac{1}{n}I_{\tilde{p}}(X_{\mathcal{A}^c}^n ; M_{\mathcal{L}^*}  |X_{\mathcal{A}}^n )  +\frac{1}{n}I_{\tilde{p}}(X_{\mathcal{A}^c}^n ;   F^n |M_{\mathcal{L}^*} X_{\mathcal{A}}^n ) \\
&\phantom{--}-  I (X_{\mathcal{A}^c}  ; F | X_{\mathcal{A}}  ) + o(1)\\
&\stackrel{(b)}= \frac{1}{n}I_{\tilde{p}}(X_{\mathcal{A}^c}^n ; M_{\mathcal{L}^*}  |X_{\mathcal{A}}^n )   -  I (X_{\mathcal{A}^c}  ; F | X_{\mathcal{A}}  ) + o(1)\\
& \leq  \frac{1}{n}I_{\tilde{p}}(X_{\mathcal{A}^c}^n ;   U^n_{\mathcal{L}^*} V^n_{\mathcal{L}^*}  |  X_{\mathcal{A}}^n  ) -  I (X_{\mathcal{A}^c}  ; F | X_{\mathcal{A}}  ) + o(1)\\
& \stackrel{(c)}= I (X_{\mathcal{A}^c} ;  U_{\mathcal{A}^c}   |  X_{\mathcal{A}}  ) -  I (X_{\mathcal{A}^c}  ; F | X_{\mathcal{A}}  ) + o(1),
\end{align*}
where \begin{enumerate}[(a)]
    \item holds by \cite[Lemma 2.7]{bookCsizar}; \item holds by  Fano's inequality because $l_0 \in \mathcal{A}$; \item  holds similar to \eqref{eqint1}.
    \end{enumerate}

\section{Proof of Theorem \ref{th4}} \label{Secth4}
  The achievability of Theorem \ref{th4} is similar to the achievability scheme in Theorem \ref{thach2}, we thus focus on the converse of Theorem \ref{th4}.
  
For $l \in \mathcal{L}^* \triangleq \mathcal{L} \backslash \{ l_0\}$, we have
\begin{align*}
 R_{l}  
 & \stackrel{(a)}\geq \frac{1}{n}   I(M_{l} ;X_{l}^n )\numberthis \label{eql1cd}\\
& \stackrel{(b)}= \frac{1}{n}  \sum_{i=1}^n I(M_{l} X_{l}^{i-1} ;X_{l,i} )\\
& \stackrel{(c)} =   \frac{1}{n}  \sum_{i=1}^n I( U_{l,i} ;X_{l,i})\\
& \stackrel{(d)} =   \frac{1}{n}  \sum_{i=1}^n I( U_{l,i} ;X_{l,i} | V_i)\\
& \stackrel{(e)} =  I( U_{l};X_{l}|Q) \numberthis \label{eql2cd} ,
\end{align*}
where \begin{enumerate}[(a)]
    \item and (b) hold by \eqref{eql1b} and \eqref{eqrev2}, respectively, with $\mathcal{S} \leftarrow \{ l\}$, $X_{l_0} \leftarrow \emptyset$; \addtocounter{enumi}{1}\item holds with $U_{l,i} \triangleq (M_l, X_{l}^{i-1})$; \item  holds with $V_i \triangleq X_{l_0,\llbracket 1 ,n \rrbracket \backslash \{ i \}}$; \item holds with $U_l \triangleq U_{l,T}$, $Q \triangleq (V_{ T},T)$, and  $T$ uniformly distributed over $\llbracket 1 , n \rrbracket$ and independent of all other random variables, and $X_l \triangleq X_{l,T}$.
    \end{enumerate}Note that  
\begin{align*}
p_{QX_{\mathcal{L}}U_{\mathcal{L}^*}} 
&= p_Q p_{X_{\mathcal{L}}} \prod_{l\in\mathcal{L}^*} p_{U_l|X_lQ} \\
&= p_Q p_{X_{l_0}} \prod_{l\in\mathcal{L}^*} p_{X_{l}} p_{U_l|X_lQ} \\
&= p_Q p_{X_{l_0}} \prod_{l\in\mathcal{L}^*}  p_{U_lX_l|Q}  \numberthis \label{eqindepqd}
\end{align*}
because for $l \in \mathcal{L}^*$, $\llbracket 1, l-1 \rrbracket^*\triangleq \llbracket 1, l-1 \rrbracket \backslash \{ l_0\}$, and $i\in \llbracket 1, n \rrbracket$, we have
\begin{align*}
&I(U_{l,i}; X_{\mathcal{L} \backslash \{ l\},i}U_{\llbracket 1, l-1 \rrbracket^*,i} | X_{l,i} V_{i} )\\
& = I(M_l X_{l}^{i-1}; X_{\mathcal{L} \backslash \{ l\},i} M_{\llbracket 1, l-1 \rrbracket^*} X^{i-1}_{\llbracket 1, l-1 \rrbracket^*} | X_{l,i} X_{l_0,\llbracket 1 ,n \rrbracket \backslash \{ i \}})\\
& \leq I( X_{l}^{n}; X_{\mathcal{L} \backslash \{ l\},i}  X^{n}_{\llbracket 1, l-1 \rrbracket^*} | X_{l,i}X_{l_0,\llbracket 1 ,n \rrbracket \backslash \{ i \}} )\\
& = I( X_{l}^{n}; X_{\mathcal{L} \backslash \{ l\},i}  X^{n}_{\llbracket 1, l-1 \rrbracket^*} | X_{l,i}  )\\
& \leq I( X_{l}^{n}; X_{\mathcal{L} \backslash \{ l\},i}  X^{n}_{\llbracket 1, l-1 \rrbracket^*} )\\
& = 0,
\end{align*}
which implies $0 = I(U_{l,T}; X_{\mathcal{L} \backslash \{ l\},T}U_{\llbracket 1, l-1 \rrbracket^*,T} | X_{l,T} V_T T) =  I(U_{l}; X_{\mathcal{L}\backslash \{ l\}}U_{\llbracket 1, l-1 \rrbracket^*} | X_{l} Q)$.

Next, we have
\begin{align*}
o(n)
&   = H(F^n | \hat{F}^n)\\
&   \stackrel{(a)} \geq H(F^n | M_{\mathcal{L}^*}X^n_{l_0})\\
& = \sum_{i=1}^n H(F_i | F^{i-1} M_{\mathcal{L}^*}X^n_{l_0})\\
& \stackrel{(b)} \geq  \sum_{i=1}^n H(F_i | X_{\mathcal{L}}^{i-1} M_{\mathcal{L}^*}X^n_{l_0})\\
& \stackrel{(c)} =  \sum_{i=1}^n H(F_i | U_{\mathcal{L}^*,i} V_i X_{l_0,i})\\
& = n  H(F | U_{\mathcal{L}^*} X_{l_0} Q) \numberthis \label{eqUbd},
\end{align*}
where \begin{enumerate}[(a)]
    \item and (b) hold by the data processing inequality; \addtocounter{enumi}{1}\item  holds by the definition of $V_i$ and $U_{l,i}$, $l\in\mathcal{L}^*$, $i \in \llbracket 1,n \rrbracket$, and the notation $U_{\mathcal{L}^*,i} \triangleq (U_{l,i})_{l\in\mathcal{L}^*}$.  
    \end{enumerate}
Next, for $\mathcal{A} \in \mathbb{A}$ such that $ \mathcal{A}  \not\ni l_0$,   we have
\begin{align*}
\Delta_{\mathcal{A}}
& \geq \frac{1}{n}I(X_{\mathcal{A}^c}^n ; M_{\mathcal{L}^*} |  X_{\mathcal{A}}^n) \\
& =  \frac{1}{n}I(X_{\mathcal{A}^{c}}^n ; M_{\mathcal{A}^{c*}} |  X_{\mathcal{A}}^n)  \\
& =  \frac{1}{n}I(X_{\mathcal{A}^{c*}}^n ; M_{\mathcal{A}^{c*}} |  X_{\mathcal{A}}^n)  \\
&  \stackrel{(a)} = \frac{1}{n}I(X_{\mathcal{A}^{c*}}^n ; M_{\mathcal{A}^{c*}}   )\\
& \stackrel{(b)} = I(X_{\mathcal{A}^{c*}} ; U_{\mathcal{A}^{c*}}  |Q ) , \numberthis \label{eqconindep2d}
\end{align*}
where \begin{enumerate}[(a)]
     \item holds by independence between $(X_{\mathcal{A}^{c*}}^n , M_{\mathcal{A}^{c*}})$ and $X_{\mathcal{A}}^n$; \item holds  as in the steps between \eqref{eql1cd} and \eqref{eql2cd} since $I(X_{\mathcal{A}^{c*}}^n ; M_{\mathcal{A}^{c*}}   ) = \sum_{l \in \mathcal{A}^{c*}} I(X_{l}^n ; M_{l}   ) = n\sum_{l \in \mathcal{A}^{c*}} I( U_{l};X_{l}|Q) = nI(X_{\mathcal{A}^{c*}} ; U_{\mathcal{A}^{c*}}  |Q ) $ by \eqref{eqindepqd}.
\end{enumerate}

Next, for $\mathcal{A} \in \mathbb{A}$ such that $ \mathcal{A} \ni l_0$, we have
\begin{align*}
\Delta_{\mathcal{A}}
& \geq \frac{1}{n}I(X_{\mathcal{A}^c}^n ; M_{\mathcal{L}^*} |F^n  X_{\mathcal{A}}^n) \\
& = \frac{1}{n}I(X_{\mathcal{A}^c}^n ; M_{\mathcal{L}^*} F^n|  X_{\mathcal{A}}^n)-I(X_{\mathcal{A}^c} ;  F|  X_{\mathcal{A}}) \\
& = \frac{1}{n}I(X_{\mathcal{A}^c}^n ; M_{\mathcal{L}^*} |  X_{\mathcal{A}}^n)+\frac{1}{n}I(X_{\mathcal{A}^c}^n ;  F^n| M_{\mathcal{L}^*}  X_{\mathcal{A}}^n)\\
&\phantom{--}-I(X_{\mathcal{A}^c} ;  F|  X_{\mathcal{A}}) \\
&\stackrel{(a)} =  \frac{1}{n}I(X_{\mathcal{A}^c}^n ; M_{\mathcal{A}^{c}} |  X_{\mathcal{A}}^n) -I(X_{\mathcal{A}^c} ;  F|  X_{\mathcal{A}}) + o(1) \\
& \stackrel{(b)} = I(X_{\mathcal{A}^c} ; U_{\mathcal{A}^c}  |Q ) -I(X_{\mathcal{A}^c} ;  F|  X_{\mathcal{A}}) + o(1) \\
& \stackrel{(c)}= I(X_{\mathcal{A}^c} ; U_{\mathcal{A}^c}  |Q X_{\mathcal{A}}) -I(X_{\mathcal{A}^c} ;  F|  X_{\mathcal{A}} Q) + o(1) \\
& = I(X_{\mathcal{A}^c} ; U_{\mathcal{A}^c} F |Q X_{\mathcal{A}}) - I(X_{\mathcal{A}^c} ;  F |U_{\mathcal{A}^c} Q X_{\mathcal{A}}) \\
&\phantom{--}-I(X_{\mathcal{A}^c} ;  F|  X_{\mathcal{A}} Q) + o(1) \\
& = I(X_{\mathcal{A}^c} ; U_{\mathcal{A}^c}| F Q X_{\mathcal{A}}) - H( F |U_{\mathcal{A}^c} Q X_{\mathcal{A}})  + o(1) \\
& \stackrel{(d)} = I(X_{\mathcal{A}^c} ; U_{\mathcal{A}^c}| F Q X_{\mathcal{A}}) - H( F |U_{\mathcal{L}^*} Q X_{\mathcal{A}})  + o(1) \\
& \stackrel{(e)}= I(X_{\mathcal{A}^c} ; U_{\mathcal{A}^c}| F Q X_{\mathcal{A}}) + o(1) ,   
\end{align*}
where \begin{enumerate}[(a)]
    \item holds by Fano's inequality because $l_0 \in \mathcal{A}$; \item  holds similar to \eqref{eqconindep2d}; \item holds by independence between $X_{\mathcal{A}}$ and $(X_{\mathcal{A}^c} , U_{\mathcal{A}^c}  ,Q)$; \item holds because with $\mathcal{A}^* \triangleq \mathcal{A} \backslash \{ l_0\}$, we have $I(F; U_{\mathcal{A}^*}|U_{\mathcal{A}^c} Q X_{\mathcal{A}})\leq I(X_{\mathcal{L}}; U_{\mathcal{A}^*}|U_{\mathcal{A}^c} Q X_{\mathcal{A}}) = I(X_{\mathcal{A}^c}; U_{\mathcal{A}^*}|U_{\mathcal{A}^c} Q X_{\mathcal{A}}) \leq I(X_{\mathcal{A}^c}U_{\mathcal{A}^c}; X_{\mathcal{A}} U_{\mathcal{A}^*}| Q )= 0$; \item holds by \eqref{eqUbd} because conditioning reduces entropy and $l_0 \in \mathcal{A}$. \end{enumerate}

Finally, the cardinality bounds on $\mathcal{U}_l$, $l\in\mathcal{L}$, and $Q$ follow from the Fenchel-Eggleston-Carath\'{e}odory theorem as in \cite[Appendix A]{courtade2013multiterminal}.

\section{Proof of Theorem \ref{th1}} \label{secth1}
 For $l\in\mathcal{L}$ and $j \in \llbracket 1, n \rrbracket$, we write $X_{l}^j \triangleq (X_{l,i})_{i\in \llbracket 1, j \rrbracket}$. And, for $\mathcal{S} \subseteq \mathcal{L}$, we define $X_{\mathcal{S},i} \triangleq (X_{l,i})_{l\in\mathcal{S}}$ and $X_{\mathcal{S}}^{i-1} \triangleq (X_{l}^{i-1})_{l\in\mathcal{S}}$. 

Then, we have
\begin{align*}
o(n)
& \stackrel{(a)} = H(F^n | \hat{F}^n)\\
&\stackrel{(b)} \geq H(F^n | M_{\mathcal{L}})\\
& = \sum_{i=1}^n H(F_i | F^{i-1} M_{\mathcal{L}})\\
& \stackrel{(c)} \geq  \sum_{i=1}^n H(F_i | X_{\mathcal{L}}^{i-1} M_{\mathcal{L}}) \numberthis \label{equdist}\\
& \stackrel{(d)} =  \sum_{i=1}^n H(F_i | U_{\mathcal{L},i})\\
& \stackrel{(e)} = n  H(F | U_{\mathcal{L}} )\numberthis \label{eqU},
\end{align*}
where \begin{enumerate}[(a)]
    \item holds by Fano's inequality and \eqref{eqrec}; \item  and (c) hold by the data processing inequality; \addtocounter{enumi}{1} \item  holds with $U_{l,i} \triangleq (M_l, X_{\mathcal{L}}^{i-1})$, $i\in \llbracket 
    1,n\rrbracket$, $l\in \mathcal{L}$, and the notation $U_{\mathcal{S},i} \triangleq (U_{l,i})_{l\in\mathcal{S}}$, $\mathcal{S} \subseteq \mathcal{L}$; \item holds with  $U_l \triangleq (Q,U_{l,Q})$ where $Q$ is uniformly distributed over $\llbracket 1 , n \rrbracket$ and independent of all other random variables.
\end{enumerate}
 
Then, for $\mathcal{S} \subseteq \mathcal{L}$, we have
\begin{align*}
 R_{\mathcal{S}}  
&\stackrel{(a)} \geq \frac{1}{n} \sum_{l\in \mathcal{S}} H(M_l) \\
&\geq  \frac{1}{n}  H(M_{\mathcal{S}}) \\
& = \frac{1}{n}  I(M_{\mathcal{S}};X_{\mathcal{S}}^n) \numberthis \label{eql1}\\
& \stackrel{(b)} = \frac{1}{n}  \sum_{i=1}^n I(M_{\mathcal{S}};X_{\mathcal{S},i} | X_{\mathcal{S}}
^{i-1} )\\
& \stackrel{(c)}= \frac{1}{n}  \sum_{i=1}^n I(M_{\mathcal{S}} X_{\mathcal{S}}^{i-1};X_{\mathcal{S},i} ) \numberthis \label{eqratel}\\
&  = \frac{1}{n}  \sum_{i=1}^n I(M_{\mathcal{S}} X_{\mathcal{L}}^{i-1};X_{\mathcal{S},i} ) - I(X_{\mathcal{S}^c}^{i-1};X_{\mathcal{S},i} | M_{\mathcal{S}} X_{\mathcal{S} }^{i-1}) \\
&  \stackrel{(d)} = \frac{1}{n}  \sum_{i=1}^n I(M_{\mathcal{S}} X_{\mathcal{L}}^{i-1};X_{\mathcal{S},i} ) \\
& \stackrel{(e)} =   \frac{1}{n}  \sum_{i=1}^n I( U_{\mathcal{S},i} ;X_{\mathcal{S},i} )\\
&  \stackrel{(f)} =    I( U_{\mathcal{S}};X_{\mathcal{S}}) \numberthis \label{eql2}\\
& =  I( U_{\mathcal{S}} ; X_{\mathcal{S}} U_{\mathcal{S}^c}) - I(U_{\mathcal{S}};U_{\mathcal{S}^c}|X_{\mathcal{S}})  \\
& \geq  I( U_{\mathcal{S}} ; X_{\mathcal{S}}| U_{\mathcal{S}^c}) - I(U_{\mathcal{S}};U_{\mathcal{S}^c}|X_{\mathcal{S}})  ,
\end{align*}
where \begin{enumerate}[(a)]
    \item holds by Definition \ref{def1}; \item holds by the chain rule; \item holds by independence between  $X_{\mathcal{S}}^{i-1} $ and $X_{\mathcal{S},i}$; \item holds because \begin{align*}
I(X_{\mathcal{S}^c}^{i-1};X_{\mathcal{S},i} |M_{\mathcal{S}} X_{\mathcal{S} }^{i-1}) &\leq I(X_{\mathcal{S}^c}^{i-1};X_{\mathcal{S},i} M_{\mathcal{S}}
| X_{\mathcal{S} }^{i-1}) \\& \leq I(X_{\mathcal{S}^c}^{i-1};X_{\mathcal{S}}^n 
| X_{\mathcal{S} }^{i-1})=0;\end{align*} \item holds by definition of $U_{l,i}$, $l\in\mathcal{L}$, $i\in \llbracket 1 ,n \rrbracket$; \item holds by definition of $U_l$, and $X_{\mathcal S} \triangleq X_{ \mathcal S,Q}$.
\end{enumerate}

Next, by \eqref{eqle} we have
\begin{align*}
\Delta
&  \geq \frac{1}{n}I(X_{\mathcal{L}}^n ; M_{\mathcal{L}} |  F^n) \\
& =  \frac{1}{n}I(X_{\mathcal{L}}^n ; M_{\mathcal{L}}   F^n) - \frac{1}{n}I(X_{\mathcal{L}}^n ;   F^n)\\
& =  \frac{1}{n}I(X_{\mathcal{L}}^n ; M_{\mathcal{L}}   ) + \frac{1}{n}I(X_{\mathcal{L}}^n ;F^n| M_{\mathcal{L}}  ) - I(X_{\mathcal{L}} ;   F)\\
& \stackrel{(a)}=  \frac{1}{n}I(X_{\mathcal{L}}^n ; M_{\mathcal{L}}   )  - I(X_{\mathcal{L}} ;   F) + o(1) \numberthis \label{eqdeltac}\\
& \stackrel{(b)}=    I( U_{\mathcal{L}} ; X_{\mathcal{L}}   )  - I(X_{\mathcal{L}} ;   F) + o(1)\\
& =    I( U_{\mathcal{L}} ; X_{\mathcal{L}}  F )   - I(X_{\mathcal{L}} ;   F) + o(1) \\
&  =    I( U_{\mathcal{L}} ; X_{\mathcal{L}} | F ) + I( U_{\mathcal{L}} ;  F )   - I(X_{\mathcal{L}} ;   F) + o(1)\\
& \stackrel{(c)} =    I( U_{\mathcal{L}} ; X_{\mathcal{L}} | F )   + o(1),
\end{align*}
where \begin{enumerate}[(a)]
    \item holds by Fano's inequality and \eqref{eqrec}; \item holds by the steps between \eqref{eql1} and \eqref{eql2} by choosing $\mathcal{S} = \mathcal{L}$; \item holds because $I( U_{\mathcal{L}} ;  F )   - I(X_{\mathcal{L}} ;   F) = H(F|X_{\mathcal{L}})-H(F|U_{\mathcal{L}})= -H(F|U_{\mathcal{L}}) = o(1)$ by~\eqref{eqU}.
    \end{enumerate}

Finally, by \eqref{eql2a}  we have for $\mathcal{A} \in \mathbb{A}$
\begin{align*}
\Delta_{\mathcal{A}}
& \geq \frac{1}{n}I(X_{\mathcal{A}^c}^n ; M_{\mathcal{L}} |  X_{\mathcal{A}}^n) \\
& = \frac{1}{n}I(X_{\mathcal{A}^c}^n ; M_{\mathcal{A}^{c}} |  X_{\mathcal{A}}^n) + \frac{1}{n}I(X_{\mathcal{A}^c}^n ; M_{\mathcal{A}} |  M_{\mathcal{A}^{c}} X_{\mathcal{A}}^n) \\
&= \frac{1}{n}I(X_{\mathcal{A}^c}^n ; M_{\mathcal{A}^{c}} |  X_{\mathcal{A}}^n) \numberthis \label{eqdeltaa3} \\
&  \stackrel{(a)} = \frac{1}{n}I(X_{\mathcal{A}^c}^n ; M_{\mathcal{A}^{c}}   ) - \frac{1}{n}I(    X_{\mathcal{A}}^n ;M_{\mathcal{A}^{c}})\\
& \stackrel{(b)} = I(X_{\mathcal{A}^c} ; U_{\mathcal{A}^c}   ) - \frac{1}{n}I(    X_{\mathcal{A}}^n ;M_{\mathcal{A}^{c}})\\
& = I(X_{\mathcal{A}^c} ; U_{\mathcal{A}^c}   ) - \frac{1}{n} \sum_{i=1}^n I( X_{\mathcal{A},i} ;M_{\mathcal{A}^c} |X_{\mathcal{A}}^{i-1}) \\
&\stackrel{(c)} = I(X_{\mathcal{A}^c} ; U_{\mathcal{A}^c}   ) - \frac{1}{n} \sum_{i=1}^n I( X_{\mathcal{A},i} ;M_{\mathcal{A}^c} X_{\mathcal{A}}^{i-1})\\
& \geq  I(X_{\mathcal{A}^c} ; U_{\mathcal{A}^c}   ) - \frac{1}{n} \sum_{i=1}^n I( X_{\mathcal{A},i} ;M_{\mathcal{A}^c}X_{\mathcal{L}}^{i-1} )\\
& = I(X_{\mathcal{A}^c} ; U_{\mathcal{A}^c}   ) - \frac{1}{n} \sum_{i=1}^n I( X_{\mathcal{A},i} ;U_{\mathcal{A}^c,i} )\\
& = I(X_{\mathcal{A}^c} ; U_{\mathcal{A}^c}   ) - I(X_{\mathcal{A}} ; U_{\mathcal{A}^c}   )\\
& \stackrel{(d)} = I(X_{\mathcal{A}^c} ; U_{\mathcal{A}^c} | X_{\mathcal{A}}   ), 
\end{align*}
where \begin{enumerate}[(a)]
    \item holds because $ M_{\mathcal{A}^c} - X_{\mathcal{A}^c}^n  -  X_{\mathcal{A}}^n$; \item holds by the steps between \eqref{eql1} and \eqref{eql2} by choosing $\mathcal{S} = \mathcal{A}^c$; \item holds by independence between  $X_{\mathcal{A}}^{i-1}$ and $X_{\mathcal{A},i}$; \item holds because $U_{\mathcal{A}^c} -X_{\mathcal{A}^c}-  X_{\mathcal{A}}$ forms a Markov chain since for any $\mathcal{S} \subseteq \mathcal{L}$, $U_{\mathcal{S}} - X_{\mathcal{S}} - X_{\mathcal{L}}$ forms a Markov chain because for any $i \in \llbracket 1, n \rrbracket$, we have
$
I(U_{\mathcal{S},i}; X_{\mathcal{L},i}|X_{\mathcal{S},i})
 = I(M_{\mathcal{S}} X_{\mathcal{L}}^{i-1} ; X_{\mathcal{L},i}|X_{\mathcal{S},i}) 
 \leq I(X_{\mathcal{S}}^n X_{\mathcal{L}}^{i-1} ; X_{\mathcal{L},i}|X_{\mathcal{S},i}) 
 = 0,$ which implies $0 =I(U_{\mathcal{S},Q}; X_{\mathcal{L},Q}|X_{\mathcal{S},Q}Q) = I(U_{\mathcal{S},Q}Q; X_{\mathcal{L},Q}|X_{\mathcal{S},Q}) = I(U_{\mathcal{S}}; X_{\mathcal{L}}|X_{\mathcal{S}})$.
\end{enumerate}

\section{Proof of Theorem \ref{th2}} \label{secth2}

The proof of Theorem \ref{th2} is simpler and closely follows that of Theorem \ref{thach2}. To avoid redundancy, we present only a sketch of the proof for Theorem \ref{th2}.

For $p_{U_{\mathcal{L}}  X_{\mathcal{L}}} = p_{X_{\mathcal{L}}} \prod_{l\in\mathcal{L}}  p_{U_l|X_l} $, one can first prove  that the achievability of $\mathcal{R} (\mathbb{A},{p_{U_{\mathcal{L}}  X_{\mathcal{L}}}})$  can be reduced to the achievability of the rate-tuple 
\begin{align}
\mathbf{R}^{\star}(\mathbb{A},{p_{U_{\mathcal{L}}  X_{\mathcal{L}}}}) \triangleq ((I(U_{l};X_{l} | U_{1:l-1}) )_{l \in \mathcal{L}},\Delta, (\Delta_{\mathcal{A}})_{\mathcal{A} \in \mathbb{A}}), \label{eqredux}
\end{align}
 where, for $l \in \mathcal{L}$, we use the notation $U_{1:l-1}  \triangleq U_{\llbracket 1, l-1 \rrbracket } $ for convenience. The proof is similar to the proof detailed in Section \ref{secreduction2} and is thus omitted. Then, we provide a coding scheme to achieve the rate-tuple $\mathbf{R}^{\star}(\mathbb{A},{p_{U_{\mathcal{L}}  X_{\mathcal{L}}}})$. The analysis of this coding scheme is similar to the proof  in Appendix \ref{appanalysis} and is thus omitted.

 Fix $p_{U_{\mathcal{L}}  X_{\mathcal{L}}} = p_{X_{\mathcal{L}}} \prod_{l\in\mathcal{L}}  p_{U_l|X_l} $ such that $H(F|U_{\mathcal{L}})=0$, and consider $(U^n_{\mathcal{L}}  ,X^n_{\mathcal{L}})$ independently and identically distributed according to $p_{ U_{\mathcal{L}}  X_{\mathcal{L}}}$. For $l\in\mathcal{L}$, to each $u^n_l$ assign uniformly and independently two random bin indices $m_l \in \llbracket 1,2^{nR_l} \rrbracket$ and $j_l\in \llbracket 1,2^{n\tilde{R}_l} \rrbracket$. Let $p_{U^n_{\mathcal{L}}X^n_{\mathcal{L}}M_{\mathcal{L}}J_{\mathcal{L}}} \triangleq p_{ U^n_{\mathcal{L}}  X^n_{\mathcal{L}}} \prod_{l\in\mathcal{L}}p_{M_lJ_l|U^n_l}$ be the distribution induced by this random binning. Note that we have
$$p_{U^n_{\mathcal{L}}X^n_{\mathcal{L}}M_{\mathcal{L}}J_{\mathcal{L}}}
 = p_{ X^n_{\mathcal{L}}} \prod_{l\in\mathcal{L}}p_{J_l|X^n_l}  p_{U^n_l|J_l X^n_l}  p_{M_l|U^n_l} .
$$ Hence, by Lemma \ref{lembinindep} with $\tilde{R}_l = H(U_l|X_l) - \epsilon= H(U_l|X_lU_{1:l-1}) - \epsilon$, $l\in\mathcal{L}$, $\epsilon >0$, we have $\lim_{n\to\infty} \mathbb{E}[\mathbb{V}(p_{U^n_{\mathcal{L}}X^n_{\mathcal{L}}M_{\mathcal{L}}J_{\mathcal{L}}},\tilde{p}_{U^n_{\mathcal{L}}X^n_{\mathcal{L}}M_{\mathcal{L}}J_{\mathcal{L}}} )] = 0$, where the average is over the random binning choice and 
$$\tilde{p}_{U^n_{\mathcal{L}}X^n_{\mathcal{L}}M_{\mathcal{L}}J_{\mathcal{L}}} \triangleq p_{ X^n_{\mathcal{L}}} \prod_{l\in\mathcal{L}}  p^{\text{unif}}_{J_l}p_{U^n_l|J_lX^n_l}p_{M_l|U^n_l},$$ with $p^{\text{unif}}_{J_l}$ the uniform distribution over $\llbracket 1,2^{n\tilde{R}_l} \rrbracket$, $l\in\mathcal{L}$.

Assume now that all the users have access to the common randomness $(J_l)_{l\in\mathcal{L}}$ uniformly distributed over $\bigtimes_{l\in\mathcal{L}} \llbracket 1, 2^{n\tilde{R}_l} \rrbracket$. User $l\in\mathcal{L}$ generates $U_l^n$ according to $p_{U^n_l|J_lX^n_l}$, then transmits $M_l$ generated according to $p_{M_l|U^n_l}$. By Lemma \ref{lemSW}, the fusion center can reconstruct with vanishing probability of error  $U^n_{\mathcal{L}}$ from $(M_l,J_l)_{l\in \mathcal{L}}$ by choosing for $l\in\mathcal{L}$, $\tilde{R}_l + R_l = H(U_l|U_{1:l-1})+\epsilon$, i.e., $ R_l = I(U_l;X_l|U_{1:l-1})+2\epsilon$. Then, since $H(F|U_{\mathcal{L}})=0$, there exists a deterministic function $\tilde{F}$ such that $\tilde{F}(u_{\mathcal{L}})=F(x_{\mathcal{L}})$, for all $(u_{\mathcal{L}},x_{\mathcal{L}}) $ such that $p(u_{\mathcal{L}},x_{\mathcal{L}})>0$. Finally,  the fusion center computes $\widehat{F}^n \triangleq \tilde{F}(\hat{U}_{\mathcal{L}}^n)$.

\section{Proof of Theorem \ref{th3}} \label{Secth3}
 The achievability of Theorem \ref{th3} follows from the achievability scheme in Theorem \ref{th2} and the introduction of the variable $Q$, we thus focus on the converse of Theorem \ref{th3}.

For $l \in \mathcal{L}$, we have
\begin{align*}
 R_{l}  
 & \stackrel{(a)}\geq \frac{1}{n}  I(M_{l} ;X_{l}^n )\numberthis \label{eql1c}\\
& \stackrel{(b)}= \frac{1}{n}  \sum_{i=1}^n I(M_{l} X_{l}^{i-1} ;X_{l,i} )\\
& \stackrel{(c)} =   \frac{1}{n}  \sum_{i=1}^n I( U_{l,i} ;X_{l,i})\\
& \stackrel{(d)} =  I( U_{l};X_{l}|Q) \numberthis \label{eql2c} ,
\end{align*}
where \begin{enumerate}[(a)]
    \item and (b) hold by \eqref{eql1} and \eqref{eqratel}, respectively, with the choice $\mathcal{S} = \{ l\}$;\addtocounter{enumi}{1} \item  holds with $U_{l,i} \triangleq (M_l, X_{l}^{i-1})$; \item holds with $U_l \triangleq U_{l,Q}$ with $Q$ uniformly distributed over $\llbracket 1 , n \rrbracket$ and independent of all other random variables, and $X_l \triangleq X_{l,Q}$. 
    \end{enumerate} Note that we have
\begin{align}
p_{QX_{\mathcal{L}}U_{\mathcal{L}}} = p_Q p_{X_{\mathcal{L}}} \prod_{l\in\mathcal{L}} p_{U_l|X_lQ}  = p_Q \prod_{l\in\mathcal{L}} p_{U_lX_l|Q} \label{eqindepq}
\end{align}
because for $l \in \mathcal{L}$ and $i\in \llbracket 1, n \rrbracket$, we have
\begin{align*}
&I(U_{l,i}; X_{\mathcal{L}\backslash \{ l\},i}U_{\llbracket 1, l-1 \rrbracket,i} | X_{l,i} ) \\
& = I(M_l X_{l}^{i-1}; X_{\mathcal{L}\backslash \{ l\},i} M_{\llbracket 1, l-1 \rrbracket} X^{i-1}_{\llbracket 1, l-1 \rrbracket} | X_{l,i} )\\
& \leq I( X_{l}^{n}; X_{\mathcal{L}\backslash \{ l\},i}  X^{n}_{\llbracket 1, l-1 \rrbracket} | X_{l,i} )\\
& \leq I( X_{l}^{n}; X_{\mathcal{L}\backslash \{ l\},i}  X^{n}_{\llbracket 1, l-1 \rrbracket} )\\
& = 0,
\end{align*}
which implies $0 = I(U_{l,Q}; X_{\mathcal{L}\backslash \{ l\},Q}U_{\llbracket 1, l-1 \rrbracket,Q} | X_{l,Q} Q) =  I(U_{l}; X_{\mathcal{L}\backslash \{ l\}}U_{\llbracket 1, l-1 \rrbracket} | X_{l} Q)$.

Next, we have
\begin{align*}
o(n)
& \stackrel{(a)} \geq  \sum_{i=1}^n H(F_i | X_{\mathcal{L}}^{i-1} M_{\mathcal{L}})\\
& \stackrel{(b)} =  \sum_{i=1}^n H(F_i | U_{\mathcal{L},i} )\\
& = n  H(F | U_{\mathcal{L}} Q) \numberthis \label{eqUb3},
\end{align*}
where \begin{enumerate}[(a)]
    \item holds as in \eqref{equdist}; \item holds by the definition of $U_{l,i}$, $l\in\mathcal{L}$, $i \in \llbracket 1,n \rrbracket$, and the notation $U_{\mathcal{L},i} \triangleq (U_{l,i})_{l\in\mathcal{L}}$. 
    \end{enumerate}Next, we have
\begin{align*}
\Delta
& \stackrel{(a)}\geq   \frac{1}{n}I(X_{\mathcal{L}}^n ; M_{\mathcal{L}}   )  - I(X_{\mathcal{L}} ;   F) + o(1) \numberthis \label{eqUmults0}\\
& \stackrel{(b)}=     \sum_{l\in\mathcal{L}} I(U_{l} ; X_{l}|Q)  - I(X_{\mathcal{L}} ;   F) + o(1) \\
& \stackrel{(c)}=    I( U_{\mathcal{L}} ; X_{\mathcal{L}}  |Q )  - I(X_{\mathcal{L}} ;   F) + o(1) \numberthis \label{eqUmults}\\
& =    I( U_{\mathcal{L}} ; X_{\mathcal{L}}  F |Q)   - I(X_{\mathcal{L}} ;   F) + o(1) \\
&  =    I( U_{\mathcal{L}} ; X_{\mathcal{L}} | F Q) + I( U_{\mathcal{L}} ;  F |Q)   - I(X_{\mathcal{L}} ;   F) + o(1)\\
& \stackrel{(d)} =    I( U_{\mathcal{L}} ; X_{\mathcal{L}} | FQ )   + o(1),
\end{align*}
where \begin{enumerate}[(a)]
    \item holds by \eqref{eqdeltac}; \item  holds because $\frac{1}{n} I(X_{\mathcal{L}}^n ; M_{\mathcal{L}}   ) =  \frac{1}{n}\sum_{l\in\mathcal{L}} I(X_{l}^n ; M_{l}) = \sum_{l\in\mathcal{L}} I(U_{l} ; X_{l}|Q)$  by independence between $((X_l^n,M_l))_{l\in \mathcal{L}}$, and  the steps between \eqref{eql1c} and \eqref{eql2c}; \item holds by the conditional independence of $((X_l,U_l))_{l\in \mathcal{L}}$ given $Q$ shown in \eqref{eqindepq}; \item holds because $I( U_{\mathcal{L}} ;  F |Q)   - I(X_{\mathcal{L}} ;   F)= I( U_{\mathcal{L}} ;  F |Q)   - I(X_{\mathcal{L}} ;   F|Q)  = H(F|X_{\mathcal{L}}Q)-H(F|U_{\mathcal{L}}Q)= -H(F|U_{\mathcal{L}}Q) = o(1)$ by~\eqref{eqUb3}.
    \end{enumerate}
Next, for $\mathcal{A} \in \mathbb{A}$, we have
\begin{align*}
\Delta_{\mathcal{A}}
&\stackrel{(a)} \geq  \frac{1}{n}I(X_{\mathcal{A}^c}^n ; M_{\mathcal{A}^{c}} |  X_{\mathcal{A}}^n)  \\
&  \stackrel{(b)} = \frac{1}{n}I(X_{\mathcal{A}^c}^n ; M_{\mathcal{A}^{c}}   )\\
& \stackrel{(c)} = I(X_{\mathcal{A}^c} ; U_{\mathcal{A}^c}  |Q ) , 
\end{align*}
where \begin{enumerate}[(a)]
    \item holds as in \eqref{eqdeltaa3}; \item holds by independence between $(X_{\mathcal{A}^c}^n , M_{\mathcal{A}^{c}})$ and $X_{\mathcal{A}}^n$; \item holds  as in the steps between \eqref{eqUmults0} and \eqref{eqUmults}.
\end{enumerate}

Finally, the cardinality bounds on $\mathcal{U}_l$, $l\in\mathcal{L}$, and $Q$ follows from the Fenchel-Eggleston-Carath\'{e}odory theorem as in \cite[Appendix A]{courtade2013multiterminal}.

\bibliographystyle{IEEEtran}
\bibliography{bib}

\end{document}